\newtheorem{theorem}{\bf Theorem}
\newtheorem{proposition}[theorem]{\bf Proposition}
\newtheorem{lemma}[theorem]{\bf Lemma}
\newtheorem{corollary}[theorem]{\bf Corollary}
\newtheorem{definition}{Definition}
\theoremstyle{remark}
\newtheorem{assumption}{Assumption}
\definecolor{boxbg}{cmyk}{.04,.04,.12,.08} \newtcolorbox[auto counter]{examplebox}[2][]{
  title=Example~\thetcbcounter:~#2,#1,
  boxrule = 0pt,   titlerule = 0.5pt,
  colback = boxbg, colbacktitle = boxbg,
  coltitle = black,
  fonttitle=\bfseries,
  lower separated=false,
}
\definecolor{darkgreen}{RGB}{97, 151, 152}
\definecolor{changecolor}{RGB}{192,64,0}
\definecolor{lavender}{rgb}{0.75,0.58,0.89}
\newcommand{\com}[1]{} \usepackage[normalem]{ulem}
\DeclareMathOperator{\rk}{rank}
\DeclareMathOperator{\E}{E}
\DeclareMathOperator{\var}{var}
\DeclareMathOperator{\cov}{cov}
\DeclareMathOperator{\tw}{tw}
\DeclareMathOperator{\pa}{pa}
\DeclareMathOperator{\loglik}{LL}
\DeclareMathOperator{\elbo}{ELBO}
\newcommand{\set}[1]{\mathopen{\{}#1\mathclose{\}}}
\newcommand{\gradient}[1]{\nabla_{#1}}
\newcommand{\partialDer}[2]{\frac{\partial #2}{\partial #1}}
\newcommand{\cond}[2]{\left.#1\mathrel{}\middle|\mathrel{}#2\right.}
\newcommand{\transpose}[1]{#1^{\top}}
\newcommand{\inverse}[1]{#1^{-1}}
\DeclareMathOperator{\mtovh}{vech} \newcommand{\branchLength}[1]{\ell(#1)}
\newcommand{\vect}[1]{#1}
\title{
Leveraging graphical model techniques
to study evolution on phylogenetic networks
}
\author{
Benjamin Teo\\
Department of Statistics\\University of Wisconsin-Madison
\And
Paul Bastide\\
IMAG, Universit\'e de Montpellier,\\CNRS
\And
C\'ecile An\'e\\
Departments of Statistics and of Botany\\University of Wisconsin-Madison
}
\date{}
\begin{document}
\maketitle

\begin{abstract}
The evolution of molecular and phenotypic traits is commonly
modelled using Markov processes along a phylogeny.
This phylogeny can be
a tree, or a network if it includes reticulations,
representing events such as hybridization or admixture.
Computing the likelihood of data observed at the leaves
is costly as the size and complexity of the phylogeny grows.
Efficient algorithms exist for trees, but cannot be applied to
networks.
We show that a vast array of models for trait evolution along
phylogenetic networks can be reformulated as graphical models,
for which efficient belief propagation algorithms exist.
We provide a brief review of belief propagation on general graphical models,
then focus on linear Gaussian models for continuous traits.
We show how belief propagation techniques can be applied
for exact or approximate (but more scalable) likelihood
and gradient calculations,
and prove novel results for efficient parameter inference of some models.
We highlight the
possible fruitful interactions between graphical models and phylogenetic
methods. For example, approximate likelihood approaches
have the potential to greatly reduce
computational costs for phylogenies with reticulations.
\end{abstract}

\keywords{belief propagation, cluster graph, admixture graph,
trait evolution, Brownian motion, linear Gaussian}

\setcounter{tocdepth}{2}
\tableofcontents
\newpage

\section{Introduction}
Stochastic processes are used to model the
evolution of traits over time along a phylogeny,
a graph representing the historical relationships between species,
populations or individuals of interest, in which
internal nodes represent divergence (e.g.\ speciation) or
merging (e.g.\ introgression) events.
In this work, we consider traits that may be multivariate,
discrete and/or continuous, with a focus on continuous traits.
Trait evolution models are used to infer evolutionary dynamics
\cite{Pybus2012,Dellicour2020}
and historical correlation between traits \cite{Cybis2015,Zhang2021,Zhang2023},
predict unobserved traits at ancestral nodes or extant leaves \cite{Revell2014,Lartillot2014},
or estimate phylogenies from rich data sets \cite{2015yu,2016SolisLemus-SNaQ,
2022Neureiter,2023Maier,2023Nielsen-admixturebayes}.

Calculating the likelihood is no easy task
because the traits at ancestral nodes are unobserved and need to be integrated out.
This problem is very well studied for phylogenetic trees, with efficient
solutions for both discrete and continuous traits
\cite{felsenstein81,fitzjohn2012diversitree,freckleton2012fast,2014HoAne-algo,
2017Goolsby-Rphylopars,mitov2020fast}.
Admixture graphs and phylogenetic networks with reticulations are now gaining
traction due to growing empirical evidence for gene flow, hybridization and
admixture \cite{moran2021genomic,kong2022classes,2023Maier}.
Yet many methods for these networks could be improved with
more efficient likelihood calculations.

The vast majority of phylogenetic models make a
Markov assumption, in that the trait distribution at all nodes
can be expressed by a set of local models.
At the root, this model describes the prior distribution of the ancestral trait.
For each node in the phylogeny, a local transition model describes
the trait distribution at this node conditional on the trait(s) at its
parent node(s).
As each local model can be specified individually with its own set of parameters,
the overall evolutionary model can be very flexible,
including possible shifts in rates, constraints, and mode of evolution across
different clades \cite{Clavel2015,2024Bartoszek}.
Other models do not make a Markov assumption, such as 
threshold models \cite{2005Felsenstein,Goldberg2020}, or
models that
combine a backwards-in-time coalescent process for gene trees
and forward-in-time mutation process along gene trees \cite{2012Bryant-snapp}.
We show here
that some of these models can still be expressed as a product of
local conditional distributions, over a graph that is more complex than
the initial phylogeny.

These evolutionary models are special cases of graphical models,
also known as Bayesian networks, which have been heavily studied \cite{koller2009probabilistic}.
The likelihood calculation task
has received a lot
of attention, including algorithms for efficient approximations when the network
is too complex to calculate the likelihood exactly \cite{koller2009probabilistic}.
Another well-studied task
is that of predicting the state of unobserved variables
(ancestral states in phylogenetics) conditional on the observed data.
We argue here that the field of phylogenetics could greatly benefit
from applying and expanding knowledge from graphical models for the
study and use of phylogenetic networks.

In section~\ref{sec:likelihoodcomplexity}
we review the challenge brought by phylogenetic models in which
only tip data are observed, and current techniques for
efficient likelihood calculations.
In section~\ref{sec:cts trait evolution}
we focus on general Gaussian models for the evolution of a
continuous trait, possibly multivariate to capture evolutionary correlations
between traits.
On reticulate phylogenies, these models need to describe the trait of admixed populations conditional on their parental populations.
Turning to graphical models in section~\ref{sec:review-GM-BP},
we describe their general formulation
and show that many phylogenetic models can be expressed as special cases,
from known examples to less obvious examples
(using the coalescent process on species trees, or species networks).
We then provide a short review of belief propagation, a core technique
to perform inference on graphical models, first in its general form
and then specialized for continuous traits in linear Gaussian models.
In section~\ref{sec:loopybp}
we describe loopy belief propagation, a technique to perform
approximate inference in graphical models, when exact inference does not
scale. As far as we know, loopy belief propagation has never been used in phylogenetics.
Section~\ref{sec:inferencewithBP} describes leveraging BP for parameter inference:
fast calculations of the likelihood and its gradient
can be used in any likelihood-based framework, frequentist or Bayesian.
Finally, section~\ref{sec:discussion} discusses future challenges
for the application and extension of graphical model techniques
in phylogenetics.
These techniques offer a range of avenues to expand the
phylogeneticists' toolbox for fitting evolutionary models on phylogenetic
networks, from approximate inference methods that are more scalable, to
algorithms for fast gradient computation for better parameter inference.

\section{Complexity of the phylogenetic likelihood calculation}
\label{sec:likelihoodcomplexity}

\subsection{The pruning algorithm}

Felsenstein's pruning algorithm \cite{felsenstein1973maximum,felsenstein81}
launched the era of model-based phylogenetic inference, now rich with
complex models to account for a large array of biological processes:
including DNA and protein substitution models, variation of their
substitution rates across genomic loci, lineages and time,
and evolutionary models for continuous traits and geographic distributions.
The pruning algorithm gave the key to calculate the likelihood
of these models along a phylogenetic tree, in a practically feasible way.
The basis of this algorithm, which extends to tasks beyond likelihood
calculation, was discovered in other areas and given other names, such as
the sum-product algorithm, forward-backward algorithm,
message passing, and belief propagation (BP).
In particular, the field of statistical human genetics saw the 
early development of such algorithms for models on pedigrees \cite{1970Hilden,1971ElstonStewart,1972HeuchLi}
including loops \cite{1978CanningsThompsonSkolnick,2002Lange}.
See \cite[section 9.8]{koller2009probabilistic} for a review
of the literature on variable elimination algorithms.

The pruning algorithm, which is a form of BP,
computes the the full likelihood of all the observed taxa
by traversing the phylogenetic tree once,
taking advantage of the Markov property: where the evolution
of the trait of interest along a daughter lineage is independent of
its past evolution, given knowledge of the parent's state.
The idea is to traverse the tree and calculate the likelihood of the descendant
leaves of an ancestral species conditional on its state, from similar
likelihoods calculated for each of its children.
If the trait is discrete with 4 states
for example (as for DNA), then this entails keeping track of 4 likelihood values at each ancestral species.
If the trait is continuous with a Gaussian distribution, e.g.\ from a
Brownian motion (BM) or an Ornstein-Uhlenbeck (OU) process \cite{1997Hansen},
then the likelihood at an ancestral species is a nice function of its state that
can be concisely parametrized by quantities akin to the
posterior mean and variance conditional on descendant leaves.
Felsenstein's independent contrasts (IC) \cite{felsenstein85}
also captures these partial posterior quantities
and can be viewed as a special implementation of BP for likelihood calculation.

BP is used ubiquitously for the analysis of discrete traits, such as
for DNA substitution models
(e.g.\ in \texttt{RAxML} \cite{2014Stamatakis-raxmlv8},
\texttt{IQ-TREE} \cite{2015Nguyen-iqtree},
\texttt{MrBayes} \cite{2003Ronquist-mrbayes3})
or for discrete morphological traits in comparative methods
(e.g.\ in \texttt{phytools} \cite{2012Revell-phytools},
\texttt{BayesTraits} \cite{2004PagelMeadeBarker},
\texttt{corHMM} \cite{2021Boyko-corHMM,2023Boyko-hOUwie},
\texttt{RevBayes} \cite{hohna2016revbayes}).
For discrete traits, there is simply no feasible alternative.
On a tree with 20 taxa and 19 ancestral species,
the naive calculation of the likelihood at a given DNA site
would require the calculation and summation of $4^{19}$ or 274 billion
likelihoods, one for each nucleotide assignment at the 19 ancestral species.
This calculation would need to be repeated for each site in the alignment,
then repeated all over during the search for a well-fitting phylogenetic tree.

\subsection{Continuous traits on trees: the lazy way}

For continuous traits under a Gaussian model (including the Brownian motion),
BP is not used as ubiquitously because a multivariate Gaussian distribution
can be nicely captured by its mean and covariance matrix:
the multivariate Gaussian formula can serve as an alternative.
For example, for one trait $Y$ with ancestral state $\mu$ at the root of the
phylogeny, the phylogenetic covariance $\boldsymbol{\Sigma}$
between
the taxa at the leaves can be obtained from the branch lengths in the tree.
Under a BM, the covariance $\cov(Y_i,Y_j)$ between taxa $i$ and $j$ is
$\Sigma_{ij}=\sigma^2 t_{ij}$ where $t_{ij}$
is the length between the root and their most recent common ancestor.
The likelihood of the observed traits at the $n$ leaves can then
be calculated using matrix and vector multiplication techniques as
\begin{equation}\label{eq:gaussianlik}
    (2\pi)^{-n/2}\det|\boldsymbol{\Sigma}|^{-1/2}
  \exp\left(-\frac{1}{2}(\vect{Y}-\mu)^{\top}\boldsymbol{\Sigma}^{-1}(\vect{Y}-\mu)\right).
\end{equation}
This alternative to BP has the disadvantage of requiring the inversion
of the covariance matrix $\boldsymbol{\Sigma}$, a task whose computing time
typically grows as $m^3$ for a matrix of size $m\times m$.
It also has the disadvantage that $\boldsymbol{\Sigma}$ needs to be calculated
and stored in memory in the first place.
For multivariate observations of $p$ traits on each of $n$ taxa, the covariance
matrix has size $m=pn$ so the typical calculation cost of \eqref{eq:gaussianlik}
is then $O(p^3n^3)$, which can quickly become very large.
For example, with only 30 taxa and 10 traits,
$\boldsymbol{\Sigma}$ is a $300\times 300$-matrix.
Studies with large $p$ and/or large $n$ are now frequent,
especially from geometric morphometric data with $p$ over 100 typically
(e.g.\ \cite{2023Hedrick})
or with expression data on $p>1,000$ genes easily, that also require
more complex models to account for variation (e.g.\ within species,
between organs, between batches)
\cite{2013Dunn,2019Shafer}. Studies with a large number $n$ of taxa are now frequent
(e.g.\ $n>5,000$ in birds and mammals \cite{2012Jetz,2019Upham}) and
virus phylogenies can be massive
(e.g.\ $n>500,000$ SARS-CoV-2 strains \cite{2023DeMaio}).
Viral continuous traits previously studied include
virulence traits
(e.g.\ $n>1,000$ and $p=3$ traits in HIV \cite{Blanquart2017,2022Hassler-missingdata}) and geographic data for phylodynamics
(e.g.\ $n=801$ and $p=2$ continuous coordinates to describe the spread of the West Nile Virus \cite{Dellicour2020}). 

In these cases with large data size $np$,
the matrix-based alternative to BP is prone to numerical
inaccuracy and numerical instability
in addition to the increased computational time,
because it is hard to accurately invert a large matrix.
Even when the matrix is of moderate size, numerical inaccuracy
can arise when the matrix is ``ill-conditioned''.
These problems were identified under OU models on phylogenetic trees
that have closely-related sister taxa,
or under early-burst (EB) models with strong morphological diversification
early on during the group radiation, and much slowed-down evolution later on
\cite{2017AdamsCollyer,2020JhwuengOMeara,2023Bartoszek-modelselection-mvOU}.

For some simple models, the large $np \times np$ covariance matrix
can be decomposed as a Kronecker product of a $p\times p$ trait covariance
and a $n \times n$ phylogenetic covariance.
This decomposition can
simplify the complexity of calculating the likelihood.
However, this decomposition is not available under many models, such as the
multivariate Brownian motion with shifts in the evolutionary rates
(e.g.\ \cite{2018CaetanoHarmon}) or the
multivariate Ornstein-Uhlenbeck model with non-scalar rate or selection matrices
\cite{Bartoszek2012,Clavel2015}.

\subsection{BP for continuous traits on trees} \label{sec:tree_BP_continuous}

To bypass the complexity of matrix inversion, Felsenstein pioneered IC to test
for phylogenetic correlation between traits, assuming a BM
model on a tree \cite{felsenstein85}.
Many authors then used BP approaches to
handle Gaussian models beyond the BM
\cite{fitzjohn2012diversitree,freckleton2012fast,2017Goolsby-Rphylopars}.
Notably, \cite{2014HoAne-algo} describe a fast algorithm that can be used
for non-Gaussian models as well.
Most recently, \cite{mitov2020fast} highlighted that BP can be applied to
a large class of Gaussian models: including the BM and the OU process
with shifts and variation of rates and selection regimes across branches.
Software packages that use these fast BP algorithms include
\texttt{phylolm} \cite{2014HoAne-algo},
\texttt{Rphylopars} \cite{2017Goolsby-Rphylopars},
\texttt{BEAST} \cite{2023Hassler-review}
or the most recent versions of
\texttt{hOUwie} \cite{2023Boyko-hOUwie} and
\texttt{mvSLOUCH} \cite{2023Bartoszek-modelselection-mvOU}.

All the methods cited above only use the first post-order tree traversal of BP to compute the
likelihood. A second preorder traversal allows, in the Gaussian case, for the computation of
the distribution of all internal nodes conditionally on the model and on the traits values at
the tips.
These distributions can then be used for, e.g., ancestral state reconstruction \cite{Lartillot2014},
expectation-maximization algorithms for shift detection in the optimal values of an OU
\cite{Bastide2017},
or the computation of the gradient of the likelihood in the BM
\cite{Zhang2021,Fisher2021} or general Gaussian model \cite{2021Bastide-HMC}.
Such BP techniques have also been used for taking gradients of the likelihood with respect to branch
lengths in sequence evolution models \cite{1996FelsensteinChurchill,Ji2019}
or for phylogenetic factor analysis \cite{Tolkoff2018,2022Hassler-PFA}.

\subsection{From trees to networks}

So far, Felsenstein's pruning algorithm and related BP approaches
have been restricted to phylogenetic \emph{trees}, mostly.
There is now ample evidence that reticulation is ubiquitous in
all domains of life from biological processes such as lateral gene transfer,
hybridization, introgression and gene flow between populations.
Networks are recognized to be better than trees for representing
the phylogenetic history of species and populations in many groups.
Although current studies using networks have few taxa,
typically between 10-20 (e.g.\ \cite{2023Nielsen-admixturebayes}),
they tend to have increasingly more tips as network inference methods
become more scalable
(e.g.\ $n=39$ languages in \cite{2022Neureiter}).
As viruses are known to be affected by recombination,
we also expect future virus studies to use large network phylogenies
\cite{2022Ignatieva-sarscov2}, so that BP will become essential for network studies too.
In this work, we describe approaches currently used for
trait evolution on phylogenetic networks.
We argue that the field of evolutionary biology would benefit from
applying BP approaches on networks more systematically.
Transferring knowledge from the mature and rich literature on BP
would advance evolutionary biology research when phylogenetic networks are used.

\subsection{Current network approaches for discrete traits}

For discrete traits on general networks, very few approaches use BP techniques
as far as we know.
For DNA data for example,
\texttt{PhyLiNC} \cite{allen2020estimating} and
\texttt{NetRAX} \cite{lutteropp2022netrax} extend the typical tree-based model
to general networks, assuming no incomplete lineage sorting. That is, each
site is assumed to evolve along one of the trees displayed in the network,
chosen according to inheritance probabilities at reticulate edges.
\texttt{PhyLiNC} assumes independent (unlinked) sites.
\texttt{NetRAX} assumes independent loci, which may have a single site each.
Each locus may have its own set of branch lengths and substitution model
parameters. Both methods calculate the likelihood of a network $N$ via extracting
its displayed trees and then applying BP on each tree.
Similarly, comparative methods for binary and multi-state traits  implemented in
\texttt{PhyloNetworks} also extract displayed trees and then apply BP on each
displayed tree \cite{2020karimi}.
While these approaches use BP on each displayed tree, a network with
$h$ reticulations can have up to $2^h$ displayed trees. This leads to a
computational bottleneck when the number of reticulations increases.

BP approaches have also been used for models with incomplete lineage sorting,
modelled by the coalescent \cite{1982kingman}.
Notably, \texttt{SNAPP} models the evolution of unlinked biallelic markers
along a species tree, accounting for incomplete lineage sorting \cite{2012Bryant-snapp}.
This method was recently made faster with \texttt{SNAPPER} \cite{2020Stoltz-snapper}
and extended to phylogenetic networks with \texttt{SnappNet} \cite{2021Rabier-snappnet}.
The coalescent process introduces the challenge that each site
may evolve along \emph{any} tree, depending on past coalescent events.
\texttt{SNAPP} introduced a way to bypass the difficulties of handling
coalescent histories and hence decrease computation time.
After we describe BP for general graphical models, we recast this
innovation as BP on a graphical model formulation of the problem.

BP was also used to calculate the likelihood of the
joint sample frequency spectrum (SFS). To account for incomplete lineage sorting
on a tree, \cite{kamm2017momi} use the continuous-time Moran model to
reduce computational complexity, and assume that each site undergoes at most
one mutation.
In \texttt{momi2},
\cite{kamm2020momi2} extend the approach to phylogenetic networks by assuming
a pulse of admixture at reticulations. The associated graphical model
is much simpler than that required by \texttt{SNAPP} or \texttt{SnappNet}
thanks to the assumption of no recurrent mutation.

\subsection{Current network approaches for continuous traits}

Compared to the rich toolkit available for the analysis of continuous traits
on trees, the toolkit for phylogenetic networks is still limited.
\texttt{PhyloNetworks} includes comparative methods on networks
\cite{2017phylonetworks}, implemented in \texttt{Julia} \cite{2017julia}.
These methods extend phylogenetic ANOVA to networks, for a
continuous response trait predicted by any number of
continuous or categorical traits, with residual variation being
phylogenetically correlated. So far, the models available in
\texttt{PhyloNetworks} include the BM, Pagel's $\lambda$, possible
within-species variation, and shifts at reticulations to model transgressive
evolution \cite{2018Bastide-pcm-net,2023Teo-wsv}.
However, all calculations are based on working with the full covariance
matrix, without BP.
\texttt{TreeMix} \cite{2012PickrellPritchard},
\texttt{ADMIXTOOLS} \cite{2012Patterson,2023Maier},
\texttt{poolfstat} \cite{Gautier2022} and
\texttt{AdmixtureBayes} \cite{2023Nielsen-admixturebayes} use allele frequency as a
continuous trait. They model its evolution along a network,
or admixture graph, using a Gaussian model in which the evolutionary
rate variance is affected by the ancestral allele frequency
\cite{2019SoraggiWiuf,2020Lipson}.
Again, these methods work with
the phylogenetic covariance matrix, rather than BP approaches.
They also consider subsets of up to 4 taxa at a time
via $f_2$, $f_3$ and $f_4$ statistics, which simplifies
the likelihood calculation.
To identify selection and adaptation on a network,
\texttt{PolyGraph} \cite{2018Racimo-polygraph} and
\texttt{GRoSS} \cite{2019RefoyoMartinez-GRoSS}
assume a similar model and use the full covariance matrix.
In summary, BP has yet to be used for continuous trait evolution
on networks.

\section{Continuous trait evolution on a phylogenetic network}
\label{sec:cts trait evolution}

We now present phylogenetic models
for the evolution of continuous traits, to which we apply BP later.
We generalize the framework in \cite{mitov2020fast} and
\cite{2021Bastide-HMC} from trees to networks,
and we extend the network model in \cite{2018Bastide-pcm-net} from the BM
to more general evolutionary models.
We consider a multivariate $X$ consisting of $p$ continuous traits,
and model their correlation over time.
Our model ignores the potential effects of incomplete lineage sorting on $X$,
a reasonable assumption for highly polygenic traits.

\subsection{Linear Gaussian models}

Most random processes used to model continuous trait evolution on a
phylogenetic tree are extensions of the BM to capture processes such as evolutionary trends,
adaptation, and variation in rates across lineages for example.
In its most general form, the linear Gaussian evolutionary model
on a tree
(referred to as the GLInv family in \cite{mitov2020fast})
assumes that the
trait $X_v$ at node $v$ has the following distribution conditional
on its parent $\pa(v)$
\begin{equation}\label{eq:lineargaussian}
  X_v\mid X_{\pa(v)} \sim \mathcal{N}(\bm{q}_v X_{\pa(v)} + {\omega}_v, \bm{V}_v)
\end{equation}
where the actualization matrix $\bm{q}_v$,
the trend vector ${\omega}_v$ and the covariance matrix $\bm{V}_v$
are appropriately sized and do not depend on trait values $X_{\pa(v)}$.
When the tree is replaced by a network, 
a node $v$ can have multiple parents $\pa(v)$. 
In this case, we can write $X_{\pa(v)}$ as
the vector formed by stacking the elements of
$\{X_u\mid u\in\pa(v)\}$ vertically, with length equal to
the number of traits times the number of parents of $v$.
In the following, we show that \eqref{eq:lineargaussian},
already used on trees, can easily be extended to networks,
to describe both evolutionary models along one lineage
and a merging rule at reticulation events.

\subsection{Evolutionary models along one lineage}

For a tree node $v$ with parent node $u$, we need to describe
the evolutionary process along one lineage,
graphically modelled by the tree edge $e=(u,v)$.
It is well known that a wide range of evolutionary models
can fit in the general form \eqref{eq:lineargaussian}
\cite{mitov2020fast,2021Bastide-HMC}.
For instance, 
the BM with variance rate $\bm{\Sigma}$
(a variance-covariance matrix for a multivariate trait)
is described by \eqref{eq:lineargaussian} where
$\bm{q}_v$ is the $p\times p$ identity matrix $\bm{I}_p$,
there is no trend ${\omega}_v = \bm{0}$,
and the variance is proportional to the edge length $\ell(e)$:
$\bm{V}_v = \ell(e)\bm{\Sigma}$.

Allowing for rate variation amounts to letting the variance rate vary
across edges $\bm{\Sigma}=\bm{\Sigma}(e)$.
For example, the Early Burst (EB) model assumes that the variance rate at any
given point in the phylogeny depends on the time $t$ from the root to that point, as:
\[\bm{\Sigma}(t) = \bm{\Sigma}_0 e^{bt}\,.\]
For this $t$ to be well-defined on a reticulate network,
the network needs to be time-consistent
(distinct paths from the root to a node all share the same length).
The rate $b$ is a rate of variance decay if it is negative,
to expected during adaptive radiations, with a burst of variation
near the root (hence Early Burst) before a slow-down of trait evolution
\cite{2010Harmon}.
When $b>0$, this model is called ``accelerating rate'' (AC)
\cite{2003BlombergGarlandIves}.
\cite{2017ClavelMorlon} used a flexible extension of this model (on a tree),
replacing $t$ by one or more covariates that are known functions of time,
such as the average global temperature and other environmental variables:
\[\bm{\Sigma}(t) = \tilde{\sigma}(t, T_1(t),\cdots,T_k(t))\,.\]
Then, the variance accumulated along edge $e=(u,v)$ is given by
\[\bm{V}_v = \int_{t(u)}^{t(v)}\bm{\Sigma}(t) dt\;.\]
In the particular case of the EB model, we get
\[\bm{V}_v = \bm{\Sigma}_0 \, e^{bt(u)}(e^{b \ell(e)}-1)/b\;.\]

Allowing for shifts in the trait value, perhaps due to jumps or
cladogenesis, amounts to including ${\omega}_v\neq 0$ for some $v$.

Adaptive evolution is typically modelled by the OU process, which
includes a parameter $\bm{A}_e$ for the strength of selection along edge $e$.
This selection strength is often assumed constant across edges, and is
typically denoted as $\alpha$ for a univariate trait.
The OU process also includes a primary optimum value $\bm{\theta}_e$,
which may vary across edges when we are interested in detecting shifts in
the adaptive regime across the phylogeny.
Under the OU model, the trait evolves along edge $e$ with random drift
and a tendency towards $\bm{\theta}_e$:
\[
 dX^{(e)}(t) = \bm{A}_e(\bm{\theta}_e -X^{(e)}(t)) dt + \bm{R}_e dB(t)
\]
where $B$ is a standard BM and the drift variance is
$\bm{\Sigma}_e = \bm{R}_e\bm{R}_e^{\top}$.
Then, conditional on the starting value at the start of $e$,
the end value $X_v$ is linear Gaussian as in \eqref{eq:lineargaussian}
with actualization $\bm{q}_v = e^{-\ell(e) \bm{A}_e}$,
trend ${\omega}_v = (\bm{I} - e^{-\ell(e) \bm{A}_e}) \bm{\theta}_e$
and variance
\[
  \bm{V}_v 
  = \int_0^{\ell(e)} e^{-s \bm{A}_e} \bm{\Sigma}_e e^{-s \bm{A}_e^{\top}}\, ds 
  = \bm{S}_e - e^{- \ell(e) \bm{A}_e} \bm{S}_e e^{- \ell(e) \bm{A}_e^{\top}}
\]
where $\bm{S}_e$ is the stationary variance matrix. These equations simplify greatly if $\bm{A}_e$ and $\bm{\Sigma}_e$
commute, such as if $\bm{A}_e$ is scalar of the form $\alpha_e\bm{I}_p$,
including when the process is univariate.
In this case,
\[
\bm{V}_v = (1-e^{-2\alpha \ell(e)})\bm{\Sigma}_e/(2\alpha) \,.
\]
Shifts in adaptive regimes can be modelled by shifts in any of the parameters
$\bm{\theta}_e$, $\bm{A}_e$ or $\bm{\Sigma}_e$ across edges.

Finally, variation within species, including measurement error, can be easily
modelled by grafting one or more edges at each
species node, to model the fact that the measurement taken from
an individual may differ from the true species mean. The model for
within-species variation, then, should also follow \eqref{eq:lineargaussian}
by which an individual value is assumed to be normally distributed
with a mean that depends linearly on the species mean, and a variance independent
of the species mean -- although this variance can vary across species.
Most typically, observations from species $v$ are modelled using
$\bm{q}=\bm{I}_p$, ${\omega} = \bm{0}$ and some phenotypic variance
to be estimated, that may or may not be tied to the evolutionary
variance parameter from the phylogenetic model across species.
This additional observation layer can also be used
for factor analysis, where the unobserved latent trait evolving on the
network has smaller dimension than the observed trait.
In that case, $\bm{q}$ is a rectangular, representing
the loading matrix \cite{Tolkoff2018,2022Hassler-PFA}.

\subsection{Evolutionary models at reticulations}

For a continuous trait and a hybrid node $h$, \cite{2018Bastide-pcm-net}
and \cite{2012PickrellPritchard} assumed that $X_h$ is a
weighted average of its \emph{immediate} parents, using their state immediately
before the reticulation event.
Specifically, if $h$ has parent edges $e_1,\ldots,e_m$,
and if we denote by $X_{\underline{e_k}}$ the state
at the end of edge $e_k$ right before the reticulation event ($1\leq k \leq m$),
then the weighted-average model assumes that
\begin{equation}\label{eq:weighedaverage-merging}
  X_h = \sum_{e_k\text{ parent of }h} \gamma(e_k) X_{\underline{e_k}} \;.
\end{equation}
This model is a reasonable null model for polygenic traits,
reflecting the typical observation that hybrid species show intermediate
phenotypes. In this model, the biological process underlying the reticulation
event (such as gene flow versus hybrid speciation) does not need to be known.
Only the proportion of the genome inherited by each parent, $\gamma(e_k)$, needs
to be known. Compared to the evolutionary time scale of the phylogeny,
the reticulation event is assumed to be instantaneous.

To describe this process as a graphical model, we may add a degree-2 node
at the end of each hybrid edge $e$ to store the value $X_{\underline{e}}$,
so as to separate the description of the evolutionary process along each
edge from the description of the process at a reticulation event.
With these extra degree-2 nodes, the weighted-average model
\eqref{eq:weighedaverage-merging} corresponds to the linear Gaussian model
\eqref{eq:lineargaussian} with no trend ${\omega}_h=\bm{0}$,
no variance $\bm{V}_h=\bm{0}$, and with actualization
$\bm{q}_h=[\gamma(e_1)\bm{I}_p \,\ldots\, \gamma(e_m)\bm{I}_p]$
made of scalar diagonal blocks.

Several extensions of this hybrid model can be considered.
\cite{2018Bastide-pcm-net} modelled transgressive evolution with a
shift ${\omega}_h\neq\bm{0}$, for the hybrid population to differ
from the weighted average of its immediate parents, even possibly taking
a value outside their range.
\cite{2015JhwuengOMeara} considered transgressive shifts at each hybrid node
as random variables with a common variance, corresponding to a model with
${\omega}_h=\bm{0}$ but non-zero variance $\bm{V}_h$.

More generally, we may consider models in which the hybrid value is any linear
combination of its immediate parents
$\bm{q}_v X_{\pa(v)}$ as in \eqref{eq:lineargaussian}.
A biologically relevant model could consider $\bm{q}_v$ to be diagonal,
with, on the diagonal, parental weights $\gamma(e,j)$ that may depend
on the trait $j$ instead of being shared across all $p$ traits.

We may also consider both a fixed transgressive shift
${\omega}_h\neq\bm{0}$ and an additional hybrid variance $\bm{V}_h$.
For both of these components to be identifiable in the typical case
when we observe a unique realization of the trait evolution,
the model would need extra assumptions to induce sparsity.
For example, we may assume that $\bm{V}_h$ is shared across all
reticulations and is given an informative prior,
to capture small variations around the parental weighted average.
We may also need a sparse model on the set of ${\omega}_h$ parameters, e.g.\
letting ${\omega}_h\neq\bm{0}$ only at a few candidate reticulations $h$,
chosen based on external domain knowledge.

For a continuous trait known to be controlled by a single gene, we may prefer
a model similar to the discrete trait model presented later in Example~\ref{ex:net1},
by which $X_h$ takes the value of one of its immediate parent $X_{\underline{e}}$
with probability $\gamma(e)$.
This model would no longer be linear Gaussian, unless we condition on
which parent is being inherited at each reticulation. Such conditioning
would reduce the phylogeny to one of its displayed tree. But it would require
other techniques to integrate over all parental assignments to each hybrid
population, such as Markov Chain Monte Carlo or Expectation Maximization.

\subsection{Evolutionary models with interacting populations}\label{sec:model_interaction}

Models have been proposed in which the evolution
of $X^{(e)}(t)$ along one edge $e$ depends on the state on other edges
existing at the same time $t$ \cite{2016Drury-competition,2017ManceauLambertMorlon-interactinglineages,Bartoszek2017-interactingpop,2020Duchen-OUmigration}.
These models can describe ``phenotype matching''
that may arise from ecological interactions (mutualism, competition) or
demographic interactions (migration),
in which traits across species or populations converge to or diverge from one another.
To express this coevolution, we consider
the set $E(t)$ of edges contemporary to one another at time $t$
and divide the phylogeny into \emph{epochs}: time intervals $[\tau_{i},\tau_{i+1}]$
during which the set $E(t)$ of interacting lineages is constant, denoted as $E_i$.
Within each epoch $i$ (i.e.\ $t\in [\tau_{i},\tau_{i+1}]$), the vector of all traits
$(X^{(e)}(t))_{e\in E_i}$ is modelled by a linear stochastic differential equation.
Since its mean is linear in and its variance independent of the starting value
$(X^{(e)}(\tau_i))_{e\in E_i}$,
these models are linear Gaussian
\cite{2017ManceauLambertMorlon-interactinglineages,Bartoszek2017-interactingpop}.
In fact, they can be expressed by~\eqref{eq:lineargaussian}
on a supergraph of the original phylogeny, in which an edge $(u,v)$ is added
if $u$ is at the start $\tau_{i}$ of some epoch $i$,
$v$ is at the end $\tau_{i+1}$, and
if the mean of $X_v$ conditional on all traits at time $\tau_{i}$
has a non-zero coefficient for $X_u$.
The specific form of $\bm{q}_v$, ${\omega}_v$ and $\bm{V}_v$
in~\eqref{eq:lineargaussian} depend on the specific interaction model,
and may be more complex than the merging rule~\eqref{eq:weighedaverage-merging}.

\section{A short review of graphical models and belief propagation}
\label{sec:review-GM-BP}

Implementing BP techniques on general networks is more complex than on trees
and involves the construction of an auxiliary graph known
as a clique tree or cluster graph (section~\ref{sec:defCGCT}).
To explain why, we review here the main ideas of graphical models and
belief propagation for likelihood calculation.

\subsection{Graphical models}\label{subsec:def-GM}

A probabilistic graphical model is a graph representation of a
probability distribution. Each node in the graph represents a random variable,
typically univariate but possibly multivariate.
We focus here on graphical models with directed edges
on a directed acyclic graph (DAG).
Edges represent dependencies between variables, where the direction is typically
used to represent causation.
The graph expresses conditional independencies satisfied by the joint
distribution of all the variables at all nodes in the graph.

Given the directional nature of evolution and inheritance,
models for trait evolution on a phylogeny are often readily formulated as
directed graphical models.
\cite{hohna2014probabilistic} demonstrate the utility of
representing phylogenetic models as graphical models for exposing assumptions,
and for interpretation and implementation.
They present a range of examples common in evolutionary biology,
with a focus on how graphical models facilitate greater modularity and transparency.
Directed graphical models have also been used to parametrize distributions
on tree topologies, for accurate approximations of posterior distributions
and for variational inference
\cite{2018ZhangMatsen,2024ZhangMatsen,2023JunMatsen-compositelikelihood-subsplitDAG}.
In this issue, similar DAGs are used to store phylogenetic trees efficiently,
for parsimony-based inference \cite{2024DummMatsen}.
Here we focus instead on the computational gains that BP allows on graphical models.

A directed graphical model consists of a DAG $G$
and a set of conditional distributions, one for each node in $G$.
At a node $v$ with parent nodes $\pa(v)$, the distribution of variable
$X_v$ conditional on its set of parent variables
$X_{\pa(v)} = \{X_u; u\in\pa(v)\}$ is given by a
\emph{factor} $\phi_v$, which is a function whose \emph{scope} is
the set of variables from $v$ and $\pa(v)$.
For each node $v$, the set formed by this node and its parents $\{v\}\cup\pa(v)$
is called a \emph{node family}.
If $V$ denotes the vertex set of $G$, then the set of factors
$\{\phi_v, v\in V\}$ defines the joint density of the graphical model as
\begin{equation}\label{eq:factor_decomposition}
  p_\theta(X_v; v\in V) =
  \prod_{v\in V}\phi_v(X_v | X_u, \theta; u\in\pa(v))
\end{equation}
where we add the possible dependence of factors on model parameters $\theta$.
This factor formulation implies that, conditional on its parents,
$X_v$ is independent of any non-descendant node
(e.g.\ ``grandparents'') \cite{koller2009probabilistic}.

\begin{examplebox}[float=h,label={ex:tree1}]{Brownian motion (BM) on a tree}
Consider the phylogenetic tree $T$ in Fig.~\ref{fig:tree1}(a).
The graphical model for the node states of $T$ under a BM, whose parameters
$\theta$ are the trait evolutionary variance rate $\sigma^2$, the ancestral
state at the root $x_\rho$ and edge lengths $\ell_i$, has the same topology as
$T$.
On a tree, each node family consists of a node $v$ and its single parent, or the
root $\rho$ by itself.
The distribution $\phi_\rho$ may be deterministic
as when $x_\rho$ is a fixed parameter of the model,
or may be given a prior distribution $\phi_\rho$.

\tcblower
\begin{center}
  \includegraphics{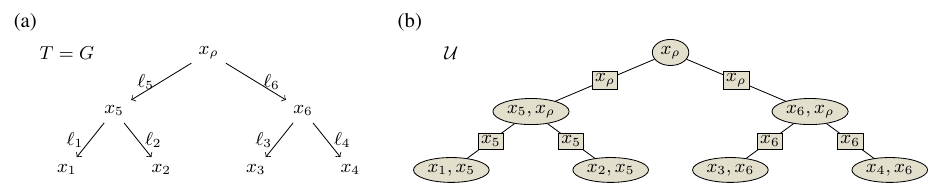}
\end{center}
\captionof{figure}{
  Example graphical model on a phylogenetic tree with
  factors defined by the BM.
The joint distribution of all variables at all nodes is given by the
  product of factors $\prod_v \phi_v$ as in \eqref{eq:factor_decomposition},
  where $\phi_v$ is the distribution of $x_v$ conditional on its parent variable
  $x_{\pa(v)}$: $\mathcal{N}(x_{\pa(v)}, \sigma^2\ell_v)$ under the BM.
  (a) Phylogenetic tree $T$.
  The graphical model uses the same graph $G=T$.
  (b) Clique tree $\mathcal U$ for the graphical model
  (see Def.~\ref{def:clustergraph} in section~\ref{sec:defCGCT}).
  Its nodes are clusters
  of variables in $T$ (ellipses). Each edge is labelled by a sepset
  (squares, see Def.~\ref{def:clustergraph}\ref{def:cl:item2}):
  a subset of variables shared by adjacent clusters.
}\label{fig:tree1}
\end{examplebox}

\begin{examplebox}[float=h,label={ex:net1}]{Discrete trait on a network}
A \emph{rooted phylogenetic network} is a DAG with a single root, and
taxon-labelled leaves (or tips).
A node with at most one parent is called a \emph{tree node} and its incoming
edge is a tree edge.
A node with multiple parents is called a \emph{hybrid} node, and represents a
population (or species more generally) with mixed ancestry.
An edge $e = (u,h)$ going into a hybrid node $h$ is called a hybrid edge.
It is assigned an \emph{inheritance} probability $\gamma(e)>0$ that represents the
proportion of the genome in $h$ that was inherited from the parent population $u$
(via edge $e$).
Obviously, at each hybrid node $h$ we must have
$\sum_{u\in\pa(h)}\gamma((u,h))=1$.
The phylogenetic network $N$ in Fig.~\ref{fig:net1}(a) has one hybrid node $x_5$
whose genetic makeup comes from $x_4$ with proportion $0.4$
and from $x_6$ with proportion $0.6$.

\begin{center}
  \includegraphics{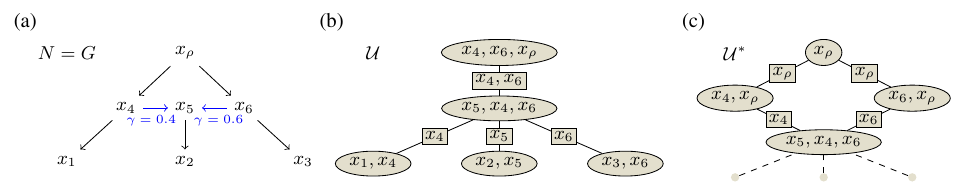}
\end{center}
\captionof{figure}{
(a) Phylogenetic network $N$ with hybrid edges shown in blue.
The graphical model uses the same graph $G=N$.
$N$ displays two trees, depending on which hybrid edge is retained.
One tree, with sister taxa 1 and 2, has probability $0.4$.
The other tree, with sister taxa 2 and 3, is displayed with probability $0.6$.
The distribution of the hybrid node $x_5$ depends on both its parents, and
induces a factor cluster $\{x_4,x_5,x_6\}$ of size 3 in $\mathcal{U}$ and
$\mathcal{U}^*$
(see the family-preserving property in
Def.~\ref{def:clustergraph}\ref{def:cl:item1}). (b) Clique tree $\mathcal{U}$ for the graphical model.
(c) Cluster graph
$\mathcal{U}^*$ (see Def.~\ref{def:clustergraph}) for the same graphical model in which
$\{x_4,x_6,x_\rho\}$ in $\mathcal{U}$ is replaced by smaller clusters
$\{x_4,x_\rho\}$, $\{x_6,x_\rho\}$ and $\{x_\rho\}$ that induce a cycle.
Leaf clusters are not shown.
}\label{fig:net1}

For a discrete trait $X$, the traditional model of evolution on a tree
can be extended to a network $N$ as follows.
Along each edge $e$, $X$ evolves according to a Markov process with some
transition rate matrix $Q$ for an amount of time\ $\ell(e)$ that depends on the edge.
At a tree node, the state of $X$ at the end of its parent edge is passed as the
starting value to each daughter lineage, as in the traditional tree model.
At reticulations, we follow previous authors to model the value $x_h$
at a hybrid node $h$
\cite{2000StrimmerMoulton,2020karimi,allen2020estimating,lutteropp2022netrax}.
Let $x_{\underline{e}}$ denote the state at the end of edge $e$,
going forward in time.
If $h$ has $m$ parent edges $e_1,\cdots,e_m$, then $x_h$ is assumed to take
value $x_{{\underline{e_k}}}$ with probability $\gamma(e_k)$.
This model reflects the idea that the trait is controlled by unknown genes,
but the proportion of genes inherited from each parent is known.
Incomplete lineage sorting, which can lead to hemiplasy for a trait
\cite{2008Avise-hemiplasy}, is unaccounted for. Similar to
Example~\ref{ex:tree1}, the graphical model uses the topology of the network $N$.

\medskip
To describe the factors of this graphical model and simplify notations,
consider the case when $X$ is binary with states 0 and 1.
For a tree node $v$, the factor $\phi_v$ can be represented by the
$2\times 2$ matrix $\exp(\ell(e)Q)$, where $e$ is the parent edge of $v$.
For a hybrid node $h$ with $m$ parents $p_1,\cdots,p_m$ and edges $e_k=(p_k,h)$
with $\gamma(e_k)=\gamma_k$,
the factor $\phi_h$ has scope
$(X_h,X_{p_1},\cdots,X_{p_m})$, and can be described by a $2\times 2^m$ matrix
to store the conditional probabilities
$\mathbb{P}(X_h=j \mid X_{p_1}=i_1,\cdots,X_{p_m}=i_m)$.
This is a $2\times 4$ matrix in the typical case when $h$ is admixed from
$m=2$ parental populations.
With $m=2$ and with parental values $(X_{p_1},X_{p_2})$
arranged ordered $((0,0),(0,1),(1,0),(1,1))$, then
\[\phi_h = \left(\!\begin{array}{cccc}
    1&\gamma_1&\gamma_2&0\\0&\gamma_2&\gamma_1&1
  \end{array}\!\right)\;.
\]
\end{examplebox}

\FloatBarrier

\begin{examplebox}[label={ex:tree2}]{Binary trait with incomplete lineage sorting (ILS)}
More complex evolutionary processes, such as
ILS, can result in a graph $G$ for the graphical model that
is constructed from but not identical to the phylogeny.
Such is the case for the evolution of a genetic marker whose gene tree
is generated according to the coalescent model along the species
phylogeny \cite{1982kingman,2003RannalaYang,2023fogg_phylocoalsimulations}.
For a marker with 2 alleles, say ``green'' and ``red'',
the data consist of the number of red alleles in a sample of individuals from
each species.
In SM section A, we formulate the likelihood calculations
by \cite{2012Bryant-snapp} on a species tree, and
by \cite{2021Rabier-snappnet} on a species network,
as belief propagation.
For this evolutionary model, the graph $G$ of the associated graphical model
differs significantly from the original phylogeny
(illustrated in Figs.~S1 and~S2).
\end{examplebox}

\begin{examplebox}[label={ex:threshold}]{Discrete trait determined by an unobserved continuous trait}
The threshold model uses a latent (unobserved) continuous trait, or ``liability'',
evolving as a Brownian Motion, to determine an observed discrete trait.
The discrete trait changes state when the liability crosses a threshold
\cite{Wright1934,2005Felsenstein}.
As the liability is unobserved, the discrete trait  has ``memory'' and
is not Markovian \cite{Goldberg2020}:
the probability to transition from one state to another 
depends on the amount of time spent in the current state.
However, we can express the model as a graphical model suitable for BP
by modelling the liability at all nodes in the phylogeny, and by adding
an observation layer to the graph $G$ for the value of the discrete trait at each tip.
This layer adds a pendant edge to connect an observed trait to its corresponding
liability.
Thresholding adds significant complexity to likelihood calculations.
Existing algorithms, on trees,
use approximations \cite{Hiscott2016,Goldberg2020}
or resort to sampling the latent liability in a Bayesian context
\cite{Felsenstein2012,Revell2014,Cybis2015}, including Hamiltonian techniques that exploit the gradient
\cite{Zhang2021,Zhang2023}. 

\smallskip
Representing discrete traits as thresholded liabilities
makes it easy to model correlations between continuous and discrete traits,
in multivariate datasets with both types of traits \cite{Cybis2015}.
Using graphical models, such models can leverage BP techniques
and hence be extended to phylogenetic networks and general Gaussian processes.
\end{examplebox}

\subsection{Belief Propagation}\label{subsec:BP}
BP is a framework for efficiently computing various integrals of the factored
density $p_\theta$ by grouping nodes and their associated variables into clusters
and integrating them out according to rules along a clique tree
(also known by junction tree, join tree, or tree decomposition)
or along a cluster graph, more generally.

\label{sec:defCGCT}
\subsubsection{Cluster graphs and Clique trees}

\begin{definition}[cluster graph and clique tree]\label{def:clustergraph}
  Let $\Phi=\{\phi_v, v\in V\}$ be the factors of a graphical model on graph $G$ and let
  $\mathcal{U}=(\mathcal{V},\mathcal{E})$ be an undirected graph
  whose nodes $\mathcal{C}_i\in\mathcal{V}$, called \emph{clusters},
  are sets of variables in the scope of $\Phi$.
$\mathcal{U}$ is a cluster graph for $\Phi$ if it satisfies the following properties:
  \begin{enumerate}
\item (\emph{family-preserving}) \label{def:cl:item1}
    There exists a map $\alpha:\Phi\rightarrow\mathcal{V}$
    such that for each factor $\phi_v$, its scope
    (node family for node $v$ in the graphical model)
    is a subset of the cluster $\alpha(\phi_v)$.
    \item (\emph{edge-labeled}) \label{def:cl:item2}
    Each edge $\{\mathcal{C}_i,\mathcal{C}_j\}$ in $\mathcal{E}$
    is labelled with a non-empty \emph{sepset} $\mathcal{S}_{i,j}$
    (``separating set'')
    such that $\mathcal{S}_{i,j}\subseteq\mathcal{C}_i\cap\mathcal{C}_j$.
\item (\emph{running intersection}) \label{def:cl:item3}
    For each variable $x$ in the scope of $\Phi$, $\mathcal{E}_x\subseteq
    \mathcal{E}$, the set of edges with $x$ in their sepsets forms a tree that
    spans $\mathcal{V}_x\subseteq\mathcal{V}$, the set of clusters that contain
    $x$.
  \end{enumerate}
  If $\mathcal{U}$ is acyclic, then $\mathcal{U}$ is called a \emph{clique tree}
  and we refer to its nodes as \emph{cliques}.
  In this case, properties \ref{def:cl:item2} and \ref{def:cl:item3} imply that
  $\mathcal{S}_{i,j} = \mathcal{C}_i\cap\mathcal{C}_j$.
\end{definition}

A clique tree $\mathcal{U}$ is shown in Fig.~\ref{fig:tree1}(b)
for the BM model from Example~\ref{ex:tree1},
on the tree $T$ in Fig.~\ref{fig:tree1}(a).
To check the running intersection property for
$x_5$, for example, we extract the graph defined by edges with $x_5$ in their
sepsets (squares). There are 2 such edges. They induce a subtree of $\mathcal{U}$
that connects all 3 clusters (ellipses) containing $x_5$, as desired.
More generally, when the graphical model is defined on a tree $T$,
a corresponding clique tree $\mathcal{U}$ is easily constructed, where cliques
in $\mathcal{U}$ correspond to edges in $T$, and edges in $\mathcal{U}$ correspond to nodes in $T$.
Multiple clique trees can be constructed for a given graphical model.
In this example, the clique $\{x_\rho\}$ (shown at the top) could be
suppressed, because it is a subset of adjacent cliques.

For the network $N$ in Fig.~\ref{fig:net1}(a) and the evolution of a
discrete trait in Example~\ref{ex:net1}, one possible clique tree $\mathcal{U}$
is shown in Fig.~\ref{fig:net1}(b). Note that $x_5,x_4$ and $x_6$ have to
appear together in at least one of the clusters for the clique tree
to be family-preserving (property~\ref{def:cl:item1}),
because $x_4$ and $x_6$ are partners with a common child $x_5$
whose distribution depends on both of their states.

We first focus on clique trees, which provide a structure for the exact
likelihood calculation. In section~\ref{sec:loopybp} we discuss
the advantages of cluster graphs, to approximate the likelihood
at a lower computational cost.

\subsubsection{Evidence}

To calculate the likelihood of the data, or the marginal distribution of
the traits at some node conditional on the data, we inject \emph{evidence}
into the model, in one of two equivalent ways.
For each observed value $\mathrm{x}_{v,t}$ of the $t^\mathrm{th}$ trait
$x_{v,t}$ at node $v$, we add to the model the indicator function
$\mathbbm{1}_{\{\mathrm{x}_{v,t}\}}(x_{v,t})$ as an additional factor.
Equivalently, we can plug in the observed value
$\mathrm{x}_{v,t}$ in place of the variable $x_{v,t}$ in all factors where
$x_{v,t}$ appears, and then drop $x_{v,t}$ from the scope of all these factors.
This second approach is more tractable than the first to avoid the degenerate
zero-variance Dirac distribution. But it requires careful bookkeeping of the
scope and of re-parametrization of each factor with missing data,
when some traits but not all are observed at some nodes.
Below, we assume that the factors and their scopes have been modified
to absorb evidence from the data.

\subsubsection{Belief update message passing}\label{subsubsec: BP rules}

There are multiple equivalent algorithms to perform BP.
We focus here on the \emph{belief update} algorithm.
It assigns a \emph{belief} to each cluster and to each sepset in the cluster graph.
After running the algorithm, each belief should provide the marginal probability
of the variables in its scope and of the observed data, with all other variables
integrated out as desired to calculate the likelihood.
The belief of cluster $\mathcal{C}_i$, denoted as $\beta_i$,
is initialized as the product of all
factors assigned to that cluster:
\begin{equation}\label{eq:clusterfactor}
  \beta_i^{\mathrm{(initial)}} =
  \psi_i = \prod_{\phi; \,\alpha(\phi)=\mathcal{C}_i}\phi\quad\mbox{ for cluster }\mathcal{C}_i
\end{equation}
The belief of an edge between cluster $i$ and $j$, denoted as $\mu_{i,j}$,
is initialized to the constant function 1.
These beliefs are then updated iteratively by passing messages.
Passing a message from $\mathcal{C}_i$ to $\mathcal{C}_j$ along an edge with
sepset $\mathcal{S}_{i,j}$ corresponds to passing information about the marginal
distribution of the variables in $\mathcal{S}_{i,j}$ as shown in Algorithm~\ref{alg:bp}.
\begin{algorithm}[!h]
  \caption{Belief propagation: message passing along an edge from
  $\mathcal{C}_i$ to $\mathcal{C}_j$ with sepset $\mathcal{S}_{i,j}$.}
  \label{alg:bp}
  \begin{algorithmic}[1]
\State compute the message
  $\tilde\mu_{i\rightarrow j} = \int_{\mathcal{C}_i\setminus \mathcal{S}_{i,j}}
    \beta_i d(\mathcal{C}_i\!\setminus\!\mathcal{S}_{i,j})$,
  that is, the marginal probability of $\mathcal{S}_{i,j}$
  based on belief $\beta_i$,
  by integrating all other variables in $\mathcal{C}_i$,
\State update the cluster belief about $\mathcal{C}_j$:
  $\beta_j \leftarrow \beta_j \tilde\mu_{i\rightarrow j} / \mu_{i,j}$,
\State update the edge belief about $\mathcal{S}_{i,j}$:
  $\mu_{i,j} \leftarrow \tilde\mu_{i\rightarrow j}$.
\end{algorithmic}
\end{algorithm}
If $\mathcal{U}$ is a clique tree, then all beliefs converge to the true
marginal probability of their variables and of the observed data,
after traversing $\mathcal{U}$ only twice: once to pass messages
from leaf cliques towards some root clique, and then back from the root clique
to the leaf cliques.
If our goal is to calculate the likelihood, then
one traversal is sufficient. Once the root clique has received messages
from all its neighboring cliques, we can marginalize over all its variables
(similar to step 1) to obtain the probability of the observed data only,
which is the likelihood.
The second traversal is necessary to obtain the marginal probability of
all variables, such as if one is interested in the posterior distribution
of ancestral states conditional on the observed data.

Some equivalent formulations of BP only store sepset messages, and
avoid storing cluster beliefs. This strategy requires less memory
but more computing time if $\mathcal{U}$ is traversed multiple times.

In Example~\ref{ex:tree1} on a tree (Fig.~\ref{fig:tree1}(a)),
the conditional distribution of $x_v$ at a non-root node $v$ corresponds to a
factor $\phi_v$ for the BM model along edge $(\pa(v),v)$ in $T$.
This factor is assigned to clique $\mathcal{C}_v=\{\pa(v),v\}$ in $\mathcal U$
to initialize the belief $\beta_v$ of $\mathcal{C}_v$.
If $v$ if a leaf in $T$, then $\beta_v$ is further multiplied by
the indicator function at the value $\mathrm{x}_v$ observed at $v$,
such that the belief of clique $\mathcal{C}_v$ can be expressed as a function
of the leaf's parent state only:
$\phi_v(x_{\pa(v)}) = \mathbb{P}(\mathrm{x}_v \mid x_{\pa(v)})$.
The prior distribution $\phi(x_\rho)$ at the root $\rho$ of $T$
(which can be an indicator function if the root value is fixed as a model
parameter) can be assigned to any clique containing $\rho$.
In Fig.~\ref{fig:tree1}, $\mathcal U$ includes a clique
$\mathcal{C}_\rho=\{x_\rho\}$ drawn at the top, to which we assign the root
prior $\phi_\rho(x_\rho)$ and which we will use as the root of $\mathcal U$.
Since $\mathcal U$ is a clique tree, BP converges after traversing $\mathcal U$
twice: from the tips to $\mathcal{C}_\rho$ and then back to the tips.
IC \cite{felsenstein1973maximum,felsenstein85}
implements the first ``rootwards'' traversal of BP.
For example, the belief of clique
$\{x_5,x_\rho\}$ after receiving messages (steps 1-3) from both of its daughter
cliques is the function
\[\beta_5(x_5,x_\rho) =
  \exp\left(-\frac{(x_\rho - x_5)^2}{2\ell_5}
            -\frac{(x_5 - \mathrm{x}^*_5)^2}{2v^*_5} + g^*_5\right)
\]
where
\[\mathrm{x}^*_5=\frac{\ell_2 \mathrm{x}_1+\ell_1 \mathrm{x}_2}{\ell_1+\ell_2},
  \quad
  v^*_5 = \frac{\ell_1\ell_2}{\ell_1+\ell_2},\quad\mbox{and}\quad
  g^*_5 = -\frac{(\mathrm{x}_2-\mathrm{x}_1)^2}{2(\ell_1+\ell_2)}
          - \log((2\pi)^{3/2}\ell_1\ell_2\ell_5)
\]
are quantities calculated for IC:
$\mathrm{x}^*_5$ corresponds to the estimated ancestral state at node 5,
$v^*_5$ corresponds to the extra length added to $\ell_5$ when pruning the
daughters of node 5, and $g^*_5$ captures the
contrast $(\mathrm{x}_2-\mathrm{x}_1)/\sqrt{\ell_1+\ell_2}$ below node 5.
At this stage of BP, $\beta_5(x_5,x_\rho)$ can be interpreted as
$\mathbb{P}(\mathrm{x}_1,\mathrm{x}_2,x_5 \mid x_\rho)$
such that the message $\tilde\mu_{5\rightarrow\rho}(x_\rho)$
sent from $\{x_5,x_\rho\}$ to the root clique $\mathcal{C}_\rho$ is
the partial likelihood
$\mathbb{P}(\mathrm{x}_1,\mathrm{x}_2 \mid x_\rho)$ after $x_5$ is integrated out.
The first pass is complete when $\mathcal{C}_\rho$
has received messages from all its neighbors. Its final belief is then
$\beta_\rho(x_\rho) = \mathbb{P}(\mathrm{x}_1,\cdots,\mathrm{x}_4 \mid x_\rho)\phi_\rho(x_\rho)$.
If $x_\rho$ is a fixed model parameter, then this is the likelihood.
Otherwise, we get the likelihood by integrating out $x_\rho$ in $\beta_\rho(x_\rho)$.

\medskip
In Example~\ref{ex:net1} on a network (Fig.~\ref{fig:net1}), we label the
cliques in $\mathcal U$ as follows:
$\mathcal{C}_v=\{x_v,x_{\pa(v)}\}$ for leaves $v=1,2,3$,
$\mathcal{C}_5=\{x_5,x_4,x_6\}$ for hybrid node $v=5$ and its parents,
and $\mathcal{C}_\rho=\{x_4,x_6,x_\rho\}$.
To initialize beliefs, we assign $\phi_v$ to $\mathcal{C}_v$ for $v=1,2,3,5$,
and $\phi_4$, $\phi_6$ are both assigned to $\mathcal{C}_\rho$.
Unlike in Example~\ref{ex:tree1},
a clique may correspond to more than a single edge in $N$.
This is expected at a hybrid node $h$, because the factor describing its
conditional distribution needs to contain $h$ and both of its parents.
But for $\mathcal U$ to be a clique tree, the root clique $\mathcal{C}_\rho$
also has to contain the factors from 2 edges in $N$.
Also, unlike for trees, sepsets may contain more than a
single node. Here, the two large cliques are separated by
$\{x_4,x_6\}$ so they will send messages $\tilde\mu(x_4,x_6)$ about the
joint distribution of these two variables.
In this binary trait setting, these messages and sepset
belief can be stored as $2\times 2$ arrays,
and the 3-node cliques beliefs can be stored as arrays of $2^3$ values.
As they involve more variables than when $G$ is a tree
(in which case BP would store only 2 values at each sepset),
storing and updating them requires more computating time and memory.

More generally, we see that the computational complexity of BP
scales with the size of the cliques and sepsets. This complexity may
become prohibitive on a more complex phylogenetic network, even for
a simple binary trait without ILS, if the size of the largest cluster
in $\mathcal U$ is too large ---a topic that we explore later.

\medskip
Example~\ref{ex:tree2} illustrates the fact that beliefs cannot always
be interpreted as partial (or full) likelihoods at every step of BP,
unlike in Examples~\ref{ex:tree1} and \ref{ex:net1}.
For example, consider the tip clique ${\mathcal C}_1$ containing
the total number of alleles and the number of red alleles in species 1,
and the number of their ancestral alleles ($n$ total, of which $r$ are red)
just before the speciation event that led to species~1 (Fig. S1(c)).
At the first iteration of BP, the first message sent by ${\mathcal C}_1$
is the quantity denoted by $F^{T}(n,r)$ in \cite{2012Bryant-snapp}.
It is \emph{not} a partial likelihood, because it is not the
likelihood of some partial subset of the data conditional on
some ancestral values ($n$ and $r$).
Intuitively, this is because nodes with data below
variables in ${\mathcal C}_1$ in $G$ are not all below
${\mathcal C}_1$ in the clique tree. Information from these data will flow
towards ${\mathcal C}_1$ at later steps.
The beauty of BP is that after a second traversal of the clique tree,
${\mathcal C}_1$'s belief is
guaranteed to converge to
the likelihood of the \emph{full} data, conditional on the state of the clique
variables.
See SM section~A.1 for details.

\subsubsection{Clique tree construction}\label{subsubsec: CT construction}

For a given graphical model on $G$, there are many possible clique trees and
cluster graphs. For running BP, it is advantageous to have small clusters and
small sepsets. Indeed, clusters and sepsets with fewer variables require
less memory to store beliefs, and less computing time to run
steps~1 (integration) and 2 (belief update).
Ideally, we would like to find the best clique tree: whose largest clique
is of the smallest size possible.
For a general graph $G$, finding this best clique tree is hard but good heuristics exist \cite{koller2009probabilistic}.

The first step is to create the \emph{moralized} graph $G^m$ from $G$.
This is done by connecting all nodes that share a common child, and then
undirecting all edges.
We can then \emph{triangulate} $G^m$, that is, build a new graph $H$
by adding edges to $G^m$ such that $H$ is chordal (any cycle includes a chord).
The \emph{width} of $H$ is the size of its largest clique minus 1.
The \emph{treewidth} of $G^m$ is the smallest width of all its possible
triangulations $H$.
Finding $H$ of minimum width
is hard, though efficient heuristics exist
(e.g.\ greedy minimum-fill \cite{rose1972graph, fishelson2003optimizing}).
The nodes of $\mathcal U$ are then defined as
the maximal cliques of $H$
\cite{blair1993introduction}.
Finally, the edges of $\mathcal U$ are formed such that $\mathcal U$
becomes a tree and such that the sum of the sepset sizes is maximum,
by finding a maximum spanning tree using Kruskal's algorithm or
Prim's algorithm \cite{cormen2009introduction}.
All these steps have polynomial complexity.

\subsection{BP for Gaussian models}
\label{subsec: BP for GBN}

Before discussing BP on cluster graphs that are not clique trees,
we focus on BP updates for the evolutionary models presented in
section~\ref{sec:cts trait evolution}. On a phylogenetic network $N$,
the joint distribution of all present and ancestral species
$(X_v)_{v\in N}$ is multivariate Gaussian precisely when it comes from a graphical model on $N$
whose factors $\phi_v$ are \emph{linear Gaussian} \cite{koller2009probabilistic}.
The factor at node $v$ is linear Gaussian if, conditional on its parents,
$X_v$ is Gaussian with a mean that is \emph{linear} in the parental values
and a variance \emph{independent} of parental values,
hence the term $\mathcal{G}_{\text{LInv}}$ used by \cite{mitov2020fast}.
In other words, for the joint process to be Gaussian, each factor $\phi_v(x_v\mid x_{\pa(v)})$
should be of the form~\eqref{eq:lineargaussian}.

Such models have been called Gaussian Bayesian networks
or graphical Gaussian networks, and are special cases of
Gaussian processes (on a graph).
These Gaussian models are convenient for BP because linear
Gaussian factors have a convenient parametrization that allows for a compact
representation of beliefs and belief update operations. 
Namely, the factor giving the conditional distribution
$\phi_v(x_v\mid x_{\pa(v)})$ from \eqref{eq:lineargaussian}
can be expressed in a \emph{canonical form}
as the exponential of a quadratic form:
\begin{equation}\label{eq:canonicalform}
    \text{C}(x;\bm{K}, h, g)=\exp\left(-\frac{1}{2}x^{\top}\bm{K}x + h^{\top}x + g\right)\;.
\end{equation}
For example, if we think of $\phi_v(x_v\mid x_{\pa(v)})$ as a function of
$x_v$ primarily, we may use the parametrization
$\text{C}(x_v; \bm{K}, h, g)$ with
\[\bm{K} = \bm{V}^{-1}_v, \quad
  h=\bm{V}_v^{-1}\left(\bm{q}_v x_{\pa(v)} + {\omega}_v\right),\quad
  \mbox{ and }\quad
  g = -\frac{1}{2}\left(\log|2\pi\bm{V}_v| +
        \|\bm{q}_v x_{\pa(v)}+{\omega}_v\|^2_{\bm{V}_v^{-1}}\right)
\]
where $\|y\|^2_{\bm{M}}$ denotes $y^{\top}\bm{M}y$.
We can also express $\phi_v$
as a canonical form over its full scope
\[\phi_v(x_v\mid x_{\pa(v)}) = \text{C}\left(
  \begin{bmatrix}x_v \\ x_{\pa(v)}\end{bmatrix};\, \bm{K}_v, h_v, g_v\right)
\]
with
\begin{equation}\label{eq:canonicalformGGM}
  \bm{K}_v =
  \begin{bmatrix}\bm{V}_v^{-1} & -\bm{V}_v^{-1}\bm{q}_v \\
    -\bm{q}_v^{\top}\bm{V}_v^{-1} & \bm{q}_v^{\top}\bm{V}_v^{-1}\bm{q}_v
  \end{bmatrix} =
  \begin{bmatrix}\bm{I} \\ -\bm{q}_v^{\top}\end{bmatrix}\bm{V}_v^{-1}
  \begin{bmatrix}\bm{I} & -\bm{q}_v\end{bmatrix}, \quad
  h_v =
  \begin{bmatrix}\bm{V}_v^{-1}{\omega}_v \\
    -\bm{q}_v^{\top}\bm{V}^{-1}_v{\omega}_v
  \end{bmatrix}, \quad
  g_v = -\frac{1}{2}(\log|2\pi\bm{V}_v| +
        \|{\omega}_v\|_{\bm{V}_v^{-1}})\;.
\end{equation}
If $v$ is a leaf with fully observed data, then we need to plug-in
the data $\mathrm{x}_v$ into $\phi_v$ and consider this factor
as a function of $x_{\pa(v)}$ only.
We can express $\phi_v(\mathrm{x}_v\mid x_{\pa(v)})$ as the canonical form
$\text{C}(x_{\pa(v)}; \bm{K}, h, g)$ with
\[\bm{K} = \bm{q}_v^{\top}\bm{V}^{-1}_v\bm{q}_v, \quad
  h=\bm{q}_v^{\top}\bm{V}_v^{-1} (\mathrm{x}_v - {\omega}_v),\quad
  \mbox{ and }\quad
  g = -\frac{1}{2}\left(\log|2\pi\bm{V}_v| +
        \|\mathrm{x}_v - {\omega}_v\|^2_{\bm{V}_v^{-1}}\right)\;.
\]
If data are partially observed at leaf $v$, the same principle applies.
We can plug-in the observed traits into $\phi_v$ and express $\phi_v$ as a
canonical form over its reduced scope: $x_{\pa(v)}$ and any unobserved $x_{v,t}$.
Some quadratic terms captured by $\bm{K}_v$ on the full scope
become linear or constant terms after plugging-in the data, and some linear
terms captured by $h_v$ on the full scope become constant terms in the
canonical form on the reduced scope.

\medskip
An important property of this canonical form is its closure under the
belief update operations: marginalization (step 1) and factor product (step 2).
Indeed, the product of two canonical forms with the same scope satisfies:
\[\text{C}(x;\bm{K}_1,h_1,g_1) \; \text{C}(x;\bm{K}_2,h_2,g_2) =
  \text{C}(x;\bm{K}_1+\bm{K}_2,h_1+h_2,g_1+g_2)\;.
\]
Now consider marginalizing a factor $\text{C}(x;\bm{K}, h, g)$
to a subvector $x^*$ of $x$, by integrating out the elements
$x\!\setminus\!x^*$ of $x$ .
let $\bm{K}_\mathrm{S}$ and $\bm{K}_\mathrm{I}$ be the submatrices of $\bm{K}$
that correspond to $x^*$ (\textbf{S}cope of marginal or \textbf{S}epset) and
$x\!\setminus\!x^*$ (variables to be \textbf{I}ntegrated out), and let
$\bm{K}_{\mathrm{S},\mathrm{I}}=\bm{K}_{\mathrm{I},\mathrm{S}}^{\top}$ be the
cross-terms. If $\bm{K}_\mathrm{I}$ is invertible, then:
\begin{equation*}
  \int\text{C}_{x\setminus x^*}(x;\bm{K},h,g) \;
  d(x\!\setminus\!x^*) = \text{C}(x^*;\bm{K}^*,h^*,g^*)
\end{equation*}
where
$\bm{K}^* = \bm{K}_\mathrm{S}-\bm{K}_{\mathrm{S},\mathrm{I}} \bm{K}_\mathrm{I}^{-1}\bm{K}_{\mathrm{I},\mathrm{S}}$,
$h^* = h_\mathrm{S}-\bm{K}_{\mathrm{S},\mathrm{I}}\bm{K}_\mathrm{I}^{-1} h_\mathrm{I}$
with $h_\mathrm{S}$ and $h_\mathrm{I}$ defined as the subvector of $h$
corresponding to $x^*$ and $x\!\setminus\!x^*$ respectively,
and
$g^* = g+(\log|2\pi\bm{K}_\mathrm{I}^{-1}|+ \|h_\mathrm{I}\|_{\bm{K}_\mathrm{I}^{-1}})/2$.

If the factors of a Gaussian network are non-deterministic, then each belief can be
parametrized by its canonical form, and the above equations can be applied to
update the cluster and sepset beliefs for BP (Algorithm~\ref{alg:bp}).
For cluster $\mathcal{C}_i$, let $(\bm{K}_i,h_i,g_i)$ parametrize its belief
$\beta_i$. For sepset $\mathcal{S}_{i,j}$, let
$(\bm{K}_{i,j},h_{i,j},g_{i,j})$ parametrize its belief $\mu_{i,j}$.
Also, for step 1 of BP, let
$(\bm{K}_{i\rightarrow j},h_{i\rightarrow j},g_{i\rightarrow j})$
parametrize the message $\tilde{\mu}_{i\rightarrow j}$ sent from
$\mathcal{C}_i$ to $\mathcal{C}_j$.
Then BP updates can be expressed as shown below.

\begin{algorithm}[!h]
  \caption{Gaussian belief propagation: from
  $\mathcal{C}_i$ to $\mathcal{C}_j$ with sepset $\mathcal{S}_{i,j}$.}
  \label{alg:gbp}
  \begin{algorithmic}[1]
\State compute message $\tilde{\mu}_{i\rightarrow j}$:
  $\begin{cases}
    \bm{K}_{i\rightarrow j} = \bm{K}_\mathrm{S}-\bm{K}_{\mathrm{S},
    \mathrm{I}}\bm{K}_\mathrm{I}^{-1}\bm{K}_{\mathrm{I},\mathrm{S}} \\
    h_{i\rightarrow j} = h_{\mathrm{S}}-\bm{K}_{\mathrm{S},
    \mathrm{I}}\bm{K}_\mathrm{I}^{-1}h_\mathrm{I} \\
    g_{i\rightarrow j} = g_i+(\log |2\pi \bm{K}_{\mathrm{I}}^{-1}|+
    \|h_\mathrm{I}\|_{\bm{K}_\mathrm{I}^{-1}})/2
  \end{cases}$
  \State update the cluster belief $\beta_j$ about $\mathcal{C}_j$:
  $\begin{cases}
    \bm{K}_j\leftarrow\bm{K}_j+
    \text{ext}(\bm{K}_{i\rightarrow j}-\bm{K}_{i,j}) \\
    h_j\leftarrow h_j+
    \text{ext}(h_{i\rightarrow j}-h_{i,j}) \\
    g_j\leftarrow g_j+g_{i\rightarrow j}-g_{i,j}
  \end{cases}$
  \State update the edge belief $\mu_{i,j}$ about $\mathcal{S}_{i,j}$:
  $\begin{cases}
    \bm{K}_{i,j}\leftarrow\bm{K}_{i\rightarrow j} \\
    h_{i,j}\leftarrow h_{i\rightarrow j} \\
    g_{i,j}\leftarrow g_{i\rightarrow j}
  \end{cases}$
\end{algorithmic}
\end{algorithm}
In step 1, $\bm{K}_\mathrm{S}$ and $\bm{K}_\mathrm{I}$ are the submatrices of
$\bm{K}_i$ that correspond to $\mathcal{S}_{i,j}$ and
$\mathcal{C}_i\setminus\mathcal{S}_{i,j}$.
Similarly, $h_\mathrm{S}$ and $h_\mathrm{I}$ are subvectors of $h_i$.
In step 2, $\text{ext}(\bm{K}_{\tilde{\mu}}-\bm{K}_{i,j})$ extends
$\bm{K}_{\tilde{\mu}}-\bm{K}_{i,j}$ to the same scope as $\bm{K}_j$ by
padding it with zero-rows and zero-columns for
$\mathcal{C}_j\setminus \mathcal{S}_{i,j}$.
Similarly, $\text{ext}(h_{i\rightarrow j}-h_{i,j})$ extends
$h_{i\rightarrow j}-h_{i,j}$ to scope $\mathcal{C}_j$ with $0$ entries
on rows for $\mathcal{C}_j\setminus \mathcal{S}_{i,j}$.

If the phylogeny is a tree, performing these updates
from the tips to the root corresponds to
the recursive equations~(9), (10) and (11) of \cite{mitov2020fast},
and to the propagation formulas (A.3)-(A.8) of \cite{2021Bastide-HMC},
who both considered the general linear Gaussian model \eqref{eq:lineargaussian}.

At any point, a belief $\text{C}(x;\bm{K}, h, g)$
gives a local estimate of the conditional mean ($\bm{K}^{-1}h$)
and conditional variance ($\bm{K}^{-1}$) of trait $X$ given data $\mathrm{Y}$,
for $\bm{K}\succ 0$.
An exact belief, such that
$\text{C}(x;\bm{K}, h, g)\propto p_\theta(x\mid\mathrm{Y})$,
gives exact conditional estimates, that is:
$\E(X\mid\mathrm{Y})=\bm{K}^{-1}h$ and
$\var(X\mid\mathrm{Y})=\bm{K}^{-1}$.

\section{Scalable approximate inference with loopy BP}
\label{sec:loopybp}

The previous examples focused on clique trees and the
exact calculation of the likelihood. We now turn to the use of cluster graphs with cycles, or
\emph{loopy} cluster graphs, such as in Fig.~\ref{fig:net1}(c) or Fig.~\ref{fig:net2}(c-d).
BP on a loopy cluster graph, abbreviated as \emph{loopy BP},
can approximate the likelihood and posterior distributions of
ancestral values, and can be more computationally efficient than BP on a clique tree.

\begin{figure}[h]
  \centering
  \includegraphics[scale=1.1]{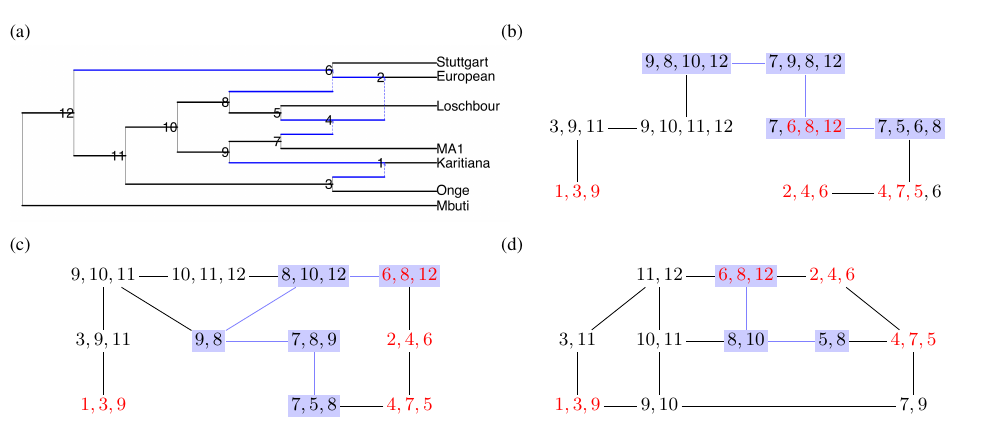}
  \caption{
  (a) Admixture graph $N$ from \cite[Fig.~3]{lazaridis2014ancient}
  with $h=4$ reticulations (hybrid edges are coloured).
$N$ has one non-trivial biconnected component (blob) $B$, induced by all its internal nodes except for the root. $B$ contains all 4
  reticulations so $N$ has level $\ell=4$.
  (b)-(d) Various cluster graphs for the moralized blob $B^m$:
  (b) clique tree,
  (c) join-graph structuring with the maximum cluster size set to 3,
  (d) LTRIP using the set of node families in $B$.
  Here sepsets (not shown) are the intersection of their incident clusters,
  and are small with 1 node only in (c) and (d). Purple boxes and edges:
  clusters and sepsets that contain node~8.
Red text: hybrid families.
}
  \label{fig:net2}
\end{figure}

\subsection{Calibration}
\label{sec:loopy-calibration}

Updating beliefs on a loopy cluster graph uses Algorithm~\ref{alg:bp} in the
same way as on a clique tree.
A cluster graph is said to be \emph{calibrated} when its normalized beliefs have converged
(i.e.\ are unchanged by Algorithm~\ref{alg:bp} along any edge). For calibration,
neighboring clusters $\mathcal{C}_i$ and $\mathcal{C}_j$ must have beliefs that
are marginally consistent over the variables in their sepset $\mathcal{S}_{i,j}$:
\[
  \int\beta_i d(\mathcal{C}_i\!\setminus\!\mathcal{S}_{i,j})
  = \tilde{\mu}_{i\rightarrow j} \propto \mu_{i,j} \propto \tilde{\mu}_{j\rightarrow i} =
  \int\beta_j d(\mathcal{C}_j\!\setminus\!\mathcal{S}_{i,j})\;.
\]
On a clique tree, calibration can be guaranteed at the end of a finite sequence
of messages passed. Clique and sepset beliefs are then proportional to the
posterior distribution over their variables, and can be integrated to compute
the common normalization constant
$\kappa=\kappa_i\left(=\int\beta_i d\mathcal{C}_i\right)=\kappa_{j,k}\left(=
\int\mu_{j,k}d\mathcal{S}_{j,k}\right)$, which equals the likelihood.
For loopy BP, calibration is not guaranteed. If it is attained, then we can
similarly view cluster and sepset beliefs as unnormalized approximations of the
posterior distribution over their variables, though the $\kappa_i$s and
$\kappa_{j,k}$s may differ, grow unboundedly, and generally do not equal or
estimate the likelihood.
Gaussian models enjoy the remarkable property that, if calibration can be
attained on a cluster graph, then the approximate posterior
means (ancestral values) are guaranteed to be exact.
In contrast, the posterior variances are generally inexact, and
are typically underestimated
\cite{weiss1999correctness,wainwright2003tree, malioutov2006walk},
although we found them overestimated in our phylogenetic examples below
(Fig.~\ref{fig:loopyBPapprox}).

Successful calibration depends on various aspects
such as the features of the loops in the cluster graph, the factors in the model,
and the \emph{scheduling} of messages.
For beliefs to converge, a proper message schedule requires that a message is
passed along every
sepset, in each direction, infinitely often (until stopping criteria are met)
\cite{malioutov2006walk}.
Multiple scheduling schemes have been devised to help reach calibration more
often and more accurately.
These can be data-independent (e.g.\ choosing a list of trees
nested in the cluster graph that together cover all clusters and edges, then
iteratively traversing each tree in both directions \cite{wainwright2003tree})
or adaptive (e.g.\ prioritizing messages between clusters that are further
from calibration \cite{elidan2006residualBP,sutton2007residualBP,
knoll2015weightdecayBP,aksenov2020parallelRBP}).

\subsection{Likelihood approximation}
\label{sec:likelihood-approx}

To approximate the log-likelihood $\loglik(\theta)=\log\int p_\theta(x) dx$
from calibrated beliefs on cluster graph $\mathcal{U}^*=({\cal V}^*, {\cal E}^*)$,
denoted together as
$q=\{\beta_i, \mu_{i,j}; {\cal C}_i\in{\cal V}^*, \{{\cal C}_i,{\cal C}_j\}\in{\cal E}^*\}$,
we can use the \emph{factored energy functional} \cite{koller2009probabilistic}: \begin{equation}\label{eq:factoredenergy}
  \tilde{F}(p_\theta,q)
  = \sum_{\mathcal{C}_i\in\mathcal{V}^*}
    \!\E_{\beta_i}(\log \psi_i) + \!\sum_{\mathcal{C}_i\in\mathcal{V}^*}
    \!\mathrm{H}(\beta_i)-\!\!\!\!\!\!\sum_{\{\mathcal{C}_i,\mathcal{C}_j\}\in
    \mathcal{E}^*}\!\!\!\!\!\!\mathrm{H}(\mu_{i,j})\;.
\end{equation}
Recall that $\psi_i$ is the product of factors $\phi_v$ assigned to cluster
${\cal C}_i$.
Here $\E_{\beta_i}$ denotes the expectation with respect to $\beta_i$
normalized to a probability distribution.
$\mathrm{H}(\beta_i)$ and $\mathrm{H}(\mu_{i,j})$ denote the entropy of
the distributions defined by normalizing $\beta_i$ and $\mu_{i,j}$
respectively. $\tilde{F}(p_\theta,q)$ has the advantage of involving local integrals that can
be calculated easily: each over the scope of a single cluster or sepset.
The justification for $\tilde{F}(p_\theta,q)$ comes from two approximations.
First, following the expectation-maximization (EM) decomposition, $\loglik(\theta)$
can be approximated by the \emph{evidence lower bound} (ELBO) used for
variational inference \cite{Ranganath2014}.
For any distribution $q$ over the full set of variables, which are here the
unobserved (latent) variables after absorbing evidence from the data, we have
\[
  \loglik(\theta) \geq \elbo(p_\theta, q) = \E_{q}(\log p_\theta) + H(q)\;.
\]
The gap $\loglik(\theta) - \elbo(p_\theta, q)$ is the Kullback-Leibler divergence
between $q$, and $p_\theta$ normalized to the distribution of the unobserved variables
conditional on the observed data. The first approximation comes from minimizing this gap over a class of
distributions $q$ that does not necessarily include the true conditional distribution.
The second approximation comes from pretending that for a given
distribution $q$ with a belief factorization
\[
  q\propto\frac{\prod_{\mathcal{C}_i\in\mathcal{V}^*}\beta_i}{
      \prod_{\{\mathcal{C}_i,\mathcal{C}_j\}\in\mathcal{E}^*}\mu_{i,j}}\;,
\]
its marginal over a given cluster (or a given sepset)
is equal to the normalized belief of that cluster (or sepset), simplifying
$\E_{q}(\log \psi_i)$ to $\E_{\beta_i}(\log \psi_i)$ and
simplifying $\E_{q}(-\log \beta_i)$ to $\mathrm{H}(\beta_i)$.
This simplification leads to the more tractable $\tilde{F}(p_\theta,q)$,
in which each integral is of lower dimension, within the scope of a single
cluster or sepset.

\subsection{Scalability versus accuracy: choice of cluster graph complexity}
\label{sec:scalability-vs-accuracy}

\subsubsection{Scalability, treewidth and phylogenetic network complexity}

At the cost of exactness, loopy cluster graphs can offer greater computational
scalability than clique trees because they allow for smaller cluster sizes, which
reduces the complexity associated with belief updates.
For example, consider a Gaussian model
for $p$ traits: $\dim(x_v)=p$ at all nodes $v$ in the network.
For a clique tree $\mathcal{U}$ with $m$ cliques and maximum
clique size $k$, passing a message between neighbor cliques has complexity
$\mathcal{O}(p^3k^3)$ and
calibrating $\mathcal{U}$ has
complexity $\mathcal{O}(m p^3k^3)$.
Now consider a cluster graph $\mathcal{U}^*$ with $m^*$ clusters,
$\mathcal{O}(m^*)$ edges, and maximum cluster size $k^* < k$.
Then passing a message between neighbor cliques of $\mathcal{U}^*$
has complexity $\mathcal{O}(p^3{k^*}^3)$ so it is faster than on $\mathcal{U}$.
But calibrating $\mathcal{U}^*$ now requires more belief updates because
each edge needs to be traversed more than twice. If each edge is traversed
in both directions $b$ times to reach convergence,
then calibrating $\mathcal{U}^*$
has complexity $\mathcal{O}(b m^* p^3 {k^*}^3)$.
So if $\mathcal{U}^*$ has smaller clusters than $\mathcal{U}$ and
if $(k/k^*)^3 \gg b m^*/m$,
then loopy BP on $\mathcal{U}^*$ runs faster than BP on $\mathcal{U}$.
Loopy BP could be particularly advantageous for complex networks
whose clique trees have large clusters.

Cluster graph construction
determines the balance between scalability and approximation quality.
At one end of the spectrum, the most scalable and least accurate
are the factor graphs, also known as Bethe cluster graphs
\cite{yedidia2005constructing}.
A factor graph has one cluster per factor $\phi_v$ and one cluster per variable,
and so has the smallest possible maximum clique size $k^*$
and each sepset reduced to a single variable.
Various algorithms have been proposed for constructing cluster graphs
along the spectrum (e.g.\ LTRIP \cite{streicher2017graph}) (Fig.~\ref{fig:net2}).
Notably, join-graph structuring \cite{mateescu2010join} spans the whole
spectrum because it is controlled by a user-defined maximum cluster size $k^*$,
which can be varied from its smallest possible value to a value large enough
to obtain a clique tree.

At the other end of the spectrum, the best maximum clique size $k$ is $1+\tw(G^m)$,
where $\tw(G^m)$ is the treewidth of the moralized graph.
Loopy BP becomes interesting when $\tw(G^m)$ is large, making exact BP costly.
Unfortunately, determining the treewidth of a general graph is NP-hard
\cite{arnborg1987complexity,bodlaender2010treewidth}.
Heuristics such as greedy minimum-fill
or nested dissection \cite{strasser2017computing,hamann2018graph} can be used
to obtain clique trees whose maximum clique size $k$ is near the optimum
$1+\tw(G^m)$.

Different cluster graph algorithms could potentially be applied to the different
biconnected components, or \emph{blobs} \cite{gusfield2007decomposition}
(e.g.\ LTRIP for one blob, clique tree for another),
perhaps based a blob's attributes that are easy to compute.
To choose between loopy versus exact BP, or between different cluster graph
constructions more generally, one could use traditional complexity measures of
phylogenetic networks as potential predictors of cost-effectiveness.
For example, the reticulation number $h$ is straightforward to compute.
In a \emph{binary} network, where all
internal non-root nodes have degree 3, $h$ is simply the
number of hybrid nodes. More generally
$h=|\{\mbox{hybrid edges}\}| - |\{\mbox{hybrid nodes}\}|$
\cite{van2010phylogenetic}.
The \emph{level} of a network is the maximum reticulation number
within a blob \cite{gambette2009structure}.
The network's level ought to predict treewidth better than $h$
because a graph's treewidth equals the maximum treewidth of its blobs
\cite{bodlaender1998partial},
and moralizing the network does not affect its nodes' blob membership.
These phylogenetic complexity measures do not predict treewidth perfectly \cite{2022ScornavaccaWeller-treewidth-parsimony}
except in simple cases as shown below, proved in SM section~B.

\begin{proposition}\label{thm:level-hybridladder-treewidth}
Let $N$ be a binary phylogenetic network with $h$ hybrid nodes, level $\ell$,
and let $t$ be the treewidth of the moralized network $N^m$ obtained from $N$.
For simplicity, assume that $N$ has no parallel edges and no degree-2 nodes
other than the root.
\begin{enumerate}
\item[(A0)] If $\ell=0$ then $h=0$ and $t=1$.
\item[(A1)] If $\ell=1$ then $h\geq 1$ and $t=2$.
\item[(A2)] Let $v_1$ be a hybrid node with non-adjacent parents $u_1,u_2$.
If $v_1$ has a descendant hybrid node $v_2$ such that one of its
parents is not a descendant of either $u_1$ or $u_2$,
then $\ell\geq 2$ and $t\geq 3$.
\end{enumerate}
\end{proposition}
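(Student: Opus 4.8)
The plan is to reduce every claim to a statement about the biconnected components (\emph{blobs}) of the moralized network, invoking the two facts quoted above: $\tw(N^m)$ equals the maximum treewidth over the moralized blobs, and moralization leaves blob membership unchanged (adding the edge between the two parents of a hybrid keeps both endpoints inside the blob already containing that hybrid). For (A0), $\ell=0$ means no blob carries a reticulation; since a hybrid node always lies in a nontrivial blob together with its parents, this forces $h=0$, so $N$ is a tree. Then every node has at most one parent, moralization adds no edge, and $N^m$ is the underlying tree, of treewidth $1$. For (A1), $\ell=1$ gives at least one reticulation, so $h\geq 1$; and in a binary network a blob with a single reticulation has as many edges as vertices, hence is a single cycle. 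Moralizing it adds exactly one chord, between the two parents of its unique hybrid, yielding a cycle-with-one-chord, which is series-parallel and therefore of treewidth $2$. The matching lower bound $t\geq 2$ comes from the triangle formed by any hybrid and its two parents after moralization, so $t=2$.

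For (A2) I would first prove $\ell\geq 2$ by exhibiting a cycle through both $v_1$ and $v_2$, which places them in one blob carrying two reticulations. The key observation is that the special parent $w$ of $v_2$ must be a non-descendant of $v_1$: otherwise $w$ would descend from $u_1$ or $u_2$, contradicting the hypothesis. Consider two directed root-to-$v_2$ paths, $P$ passing through $v_1$ and entering $v_2$ through its other parent $w'$ (a descendant of $v_1$), and $Q$ entering $v_2$ through $w$; the path $Q$ avoids $v_1$ since $w$ is not below $v_1$. Tracing both upward from $v_2$, let $c$ be their first common node. A short argument shows $c$ is a strict ancestor of $v_1$ (in particular $c\neq v_1$ and $c$ is not a descendant of $v_1$, because the only descendant of $v_1$ that $Q$ meets is $v_2$ itself). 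The two internally disjoint $c$-to-$v_2$ segments of $P$ and $Q$ then form a cycle through $v_1$ and $v_2$, so they share a blob $B$ and $\ell\geq 2$.

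For $t\geq 3$ I would exhibit a $K_4$ minor in $B^m$ (equivalently, show $B^m$ is not series-parallel). I would take four branch sets: $\{u_1\}$; $\{u_2\}$; the down-path $\{v_1\}\cup(v_1\rightsquigarrow w')$ from $v_1$ to $w'$; and a connected ``upper'' set $R$ that glues together the two ascending paths from $u_1$ and $u_2$ to their meeting point with a route through the external parent $w$ and through $v_2$. The six required adjacencies are then supplied by the moral edge between $u_1$ and $u_2$; the edges $u_1\to v_1$ and $u_2\to v_1$; the parent edges entering $u_1$ and $u_2$ from within $R$; and the edge $w'\to v_2$ (or the moral edge between $w'$ and $w$) joining the down-path to $R$. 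It is precisely the hypothesis that $w$ escapes the descendant cone of $u_1$ and $u_2$ that lets $R$ reach $v_2$ from outside while staying disjoint from the down-path, producing the crossing of the two reticulation cycles that a series-parallel graph cannot contain.

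The hard part will be the bookkeeping that makes $R$ simultaneously connected and disjoint from $\{u_1\}$, $\{u_2\}$ and the down-path; this forces me to pin down where the ascents from $u_1$, $u_2$ and $w$ first merge. The delicate situation is the degenerate one in which one of $u_1,u_2$ is itself an ancestor of the other, so that the naive ``ascend to a common ancestor'' route for $R$ would reuse $u_1$ or $u_2$. In that regime the two reticulation cycles become nested rather than crossing, and the $K_4$ minor must be assembled from a different triple of attachment points (or the configuration ruled out by the intended non-degeneracy of the two parents). I expect the cleanest route through this obstacle is to phrase the crossing condition intrinsically, in terms of the cyclic order in which $v_2$'s two parents attach to the cycle of $v_1$, which is what ultimately distinguishes the treewidth-$3$ case from the series-parallel one.
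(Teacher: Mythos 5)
Your arguments for (A0), (A1), and for $\ell\geq 2$ in (A2) are correct and essentially the paper's. Your $K_4$ minor for the generic case of (A2) is also the paper's minor, just carved differently: the paper keeps $u_1$, $u_2$, $v_1$ and a strict common ancestor $w$ of $u_1,u_2$ admitting internally disjoint descending paths $p_1,p_2$ (with a subcase in which $w$ is replaced by the lowest node $w'$ where a root-to-$u_3$ path meets $p^u_1\cup p^u_2\cup\{u_1,u_2\}$), and contracts the route $w\rightsquigarrow u_3\to v_2\leftarrow u_4\rightsquigarrow v_1$ into the sixth edge $\{w,v_1\}$; your set $R$ is exactly $p_1\setminus\{u_1\}$, $p_2\setminus\{u_2\}$ and that route glued together, with your down-path branch set absorbing the $v_1\rightsquigarrow u_4$ leg. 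Up to that point the two proofs coincide.

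The degenerate case you flag --- one of $u_1,u_2$ an ancestor of the other, so that no strict common ancestor admits internally disjoint paths to both --- is a genuine gap, and it cannot be closed as the proposition is literally stated. Take $\rho$ with children $x,y$; $x$ with children $u_1$ and a leaf; $u_1$ with children $a$ and $v_1$; $a$ with children $u_2$ and a leaf; $u_2$ with children $v_1$ and a leaf; $v_1$ (hybrid) with child $u_4$; $u_4$ with children $v_2$ and a leaf; $y$ with children $u_3$ and a leaf; $u_3$ with children $v_2$ and a leaf; $v_2$ (hybrid) with a leaf child. All hypotheses of (A2) hold: $u_1$ and $u_2$ share no edge, $v_2$ is a hybrid descendant of $v_1$, and $u_3$ descends only from $y$ and $\rho$. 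Yet $N^m$ is series-parallel: suppressing $v_2$, then $a$, then $u_2$ (each of degree 2 with already-adjacent neighbours, via the moral edges $\{u_4,u_3\}$ and $\{u_1,u_2\}$ and the hybrid edge $\{u_1,v_1\}$) reduces $N^m$ to the cycle $\rho\,x\,u_1\,v_1\,u_4\,u_3\,y\,\rho$, so $t=2$. The obstruction is exactly the one you predicted: every root-to-$u_2$ path passes through $u_1$, so $R$ cannot reach both $u_1$ and $u_2$ while avoiding each, and the two reticulation cycles are nested rather than crossing. The paper's own proof has the same hole: it asserts that $w$ exists ``because $u_1$ has a parent other than $u_2$ and vice versa'', which does not suffice in this configuration. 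Statement and proof are both repaired by strengthening ``non-adjacent'' to ``incomparable'' (neither of $u_1,u_2$ a descendant of the other): then taking root-to-$u_1$ and root-to-$u_2$ paths and letting $w$ be the last vertex of the first that lies on the second yields the required internally disjoint $p_1,p_2$, and your construction of $R$ goes through without the escape hatch you left yourself.
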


\begin{figure}
    \begin{center}
      \includegraphics{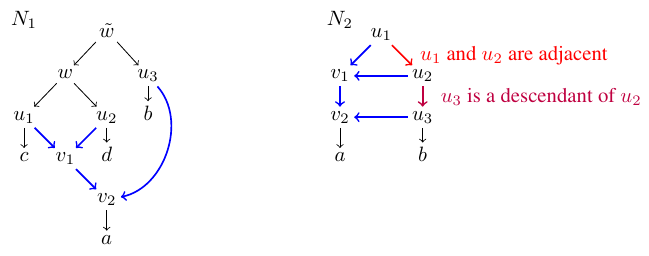}
    \end{center}
    \caption{Two binary networks with a hybrid ladder and $h=\ell=2$.
    $N_1$ satisfies (A2) of
    Proposition~\ref{thm:level-hybridladder-treewidth} and $N_1^m$ has treewidth $t=3$.
$N_2$ does not meet (A2) (see red/purple annotations) and
    $N_2^m$ has treewidth $t=2$.
    Stacking more hybrid
    ladders in the same way above $a$ and $b$ increases $h$ and $\ell$ but
    leaves $N_2^m$ outerplanar, keeping $t=2$.}
    \label{fig:hybridladder}
\end{figure}

Level-1 networks have received much attention in phylogenetics because
they are identifiable under various models under some mild restrictions
\cite{2016SolisLemus-SNaQ,Banos2019,2020Gross,2023XuAne_identifiability}.
Several inference methods limit the search to level-1 networks
\cite{2016SolisLemus-SNaQ,2016Oldman-TriLoNet,2019Allman-nanuq,2023Kong-phynest}.
Since moralized level-1 networks have treewidth 2, exact BP is guaranteed to
be efficient on them.

Beyond level-1, a network has a hybrid ladder
(also called \emph{stack} \cite{2018Semple-amalgamation2trees})
if a hybrid node $v_1$ has a hybrid child node $v_2$.
By Proposition~\ref{thm:level-hybridladder-treewidth},
a hybrid ladder has the potential
to increase treewidth of the moralized network and decrease BP scalability,
if the remaining conditions in (A2) are met.
Related results in \cite{chaplick2023snakes} are for undirected
graphs that do not require prior moralization, and contain ladders defined
as regular $2\times L$ grids. Their Observation~1, that a graph containing a
non-disconnecting grid ladder of length $L\geq 2$ has treewidth at least 3,
relies on a similar
argument as for (A2). However, structures leading to the conditions in (A2)
are more general, even before moralization.
It may be interesting to extend some of the results from \cite{chaplick2023snakes}
to moralized hybrid ladders in rooted networks.

In Fig.~\ref{fig:hybridladder} (right) $N_2$ has a hybrid ladder
that does not meet all conditions of (A2), and has $t=2$.
Generally, outerplanar networks have treewidth at most $2$ \cite{bodlaender1998partial},
and if bicombining (hybrid nodes have exactly 2 parents),
remain outerplanar after moralization.
Networks in which no hybrid node is the descendant of another
hybrid node in the same blob are called \emph{galled networks}
\cite{2010HusonRuppScornavacca}.
They provide more tractability to solve the
cluster containment problem \cite{huson2009computing}.
Here, galled networks would then never meet the assumptions of (A2)
and it would be interesting to study their treewidth after moralization.

\medskip

We performed an empirical investigation of how $h$ and $\ell$
can predict the treewidth $t$ of the moralized network.
Fig.~\ref{fig:realsim nets} shows that $t$
correlates with $h$ and $\ell$,
on networks estimated from real data using various inference methods and on
networks simulated under the biologically realistic birth-death-hybridization model
\cite{2024JustisonHeath-exploring,justison2023siphynetwork},
especially for complex networks.
For networks with hundreds of tips
(\cite{thorson2023identifying} lists several studies of this size),
large maximum clique sizes $k\ge 30$ are not uncommon.
In contrast, a Bethe cluster graph
would have maximum cluster size $k^*=3$, so that $(k/k^*)^3\ge 10^3$ would
provide a large computational gain for loopy BP to be considered.

\begin{figure}
  \begin{center}
  \includegraphics[width = 0.8\textwidth]{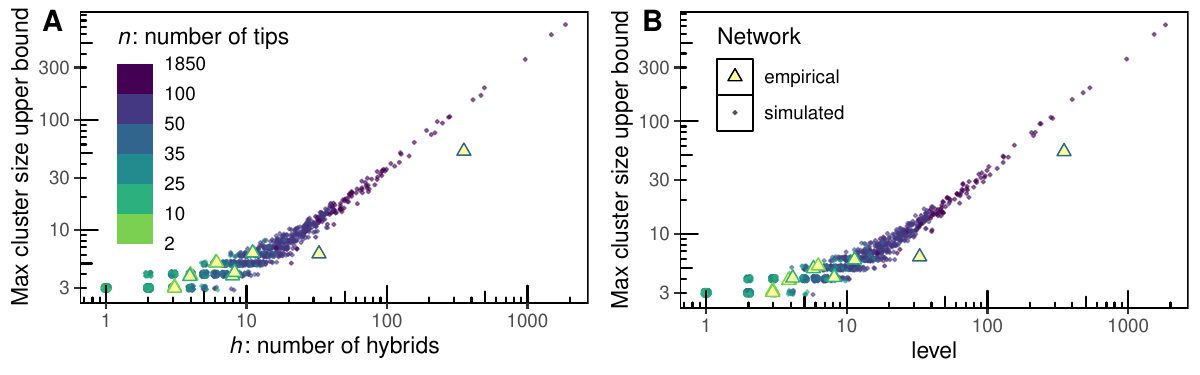}
  \end{center}
\caption{We observe a positive sublinear relationship between a maximum clique
  size upperbound (from the greedy minfill heuristic) and the number of hybrids $h$ (A) or
  network level $\ell$ (B) on a combined sample of 11 empirical networks and 2509 simulated
  birth-death-hybridization networks.
  The empirical networks were sampled from
  \cite[Figs.~3(a-c) (left), 4(a-c) (left)]{2023Maier} (reported as
  estimated by \cite{bergstrom2020origins,librado2021origins,hajdinjak2021initial,
  lipson2020ancient,wang2021genomic,sikora2019population}),
  \cite[Fig.~3]{lazaridis2014ancient},
  \cite[Fig.~3 (left)]{2023Nielsen-admixturebayes},
  \cite[Fig.~4(c)]{sun2023ancient}, \cite[Fig.~1(a)]{muller2022bayesian},
  \cite[Fig.~5(a)]{2022Neureiter};
  fit by these authors using
  \texttt{ADMIXTOOLS} \cite{2012Patterson,2023Maier}, \texttt{admixturegraph} \cite{leppala2017admixturegraph}, \texttt{OrientAGraph} \cite{molloy2021OrientAGraph}, \texttt{contacTrees} \cite{2022Neureiter}, \texttt{Recombination} \cite{muller2022bayesian}, \texttt{AdmixtureBayes} \cite{2023Nielsen-admixturebayes}. The simulated networks were obtained by subsampling 10 networks per parameter scenario
  simulated by \cite{2024JustisonHeath-exploring}, then filtering out networks of
  treewidth 1 (trees, possibly with parallel hybrid edges).
The graphs are similar at high values because most networks have most of their
  hybrids contained in one large blob, leading to $h\approx\ell$. For example,
  $|\ell - h| \le 2$
  in 99\% of the networks with $h>10$. }
  \label{fig:realsim nets}
\end{figure}

\subsubsection{Approximation quality with loopy BP}
\label{ex:lbp error}
    We simulated data on a complex graph (40 tips, 361 hybrids) \cite[Fig.~1(a)]{muller2022bayesian} and a simpler graph (12 tips, 12 hybrids) \cite[Extended Data Fig.~4]{lipson2020ancient},
    then compared estimates from exact and loopy BP. For both networks,
    edges of length 0 were assigned the minimum non-zero edge length
    after suppressing any non-root degree-2 nodes.
    Trait values $\mathrm{\mathbf{x}}=(\mathrm{x}_1,\dots,\mathrm{x}_n)$ at the
    tips were simulated from a BM with rate $\sigma^2=1$
    and $\mathrm{x}_\rho=0$ at the root.
    Figure~\ref{fig:loopyBPapprox} shows the exact and approximate log-likelihood
    and conditional mean and variance of $x_\rho$
    assuming a BM with rate $\sigma^2=1$ but improper prior
    $x_\rho\sim\mathcal{N}(0,\infty)$,
    using a greedy minimum-fill clique tree
    $\mathcal{U}$ and a cluster graph $\mathcal{U}^*$.
    Using a factor graph, calibration failed for the complex network
    (SM section C, Fig.~S3),
    so we used join-graph structuring to build $\mathcal{U}^*$.
    $\mathcal{U}$ can be calibrated in one iteration and the
    calculated quantities are exact (horizontal lines).
In contrast, $\mathcal{U}^*$ requires multiple passes and gives approximations.
Calibration required more iterations on the complex network ($h=361$)
    than on the simpler network ($h=11$), as expected.
    But for both networks, the factored energy \eqref{eq:factoredenergy}
    approximated the log-likelihood very well.
    The distribution of the root state $x_\rho$ conditional on the data
    seems more difficult to approximate. The conditional mean was correctly
    estimated but required more iterations than the log-likelihood approximation
    on the complex network.
    The conditional variance was severely overestimated on the complex network and
    very slightly overestimated on the simpler network.
    As desired, the average computing time per belief update was lower on
    $\mathcal{U}^*$, although modestly so due to the clique tree $\mathcal{U}$ having many small clusters of size
    similar to those in $\mathcal{U}^*$
    (Fig.~S4).

\begin{figure}[h]
    \centering
    \includegraphics[scale=0.6]{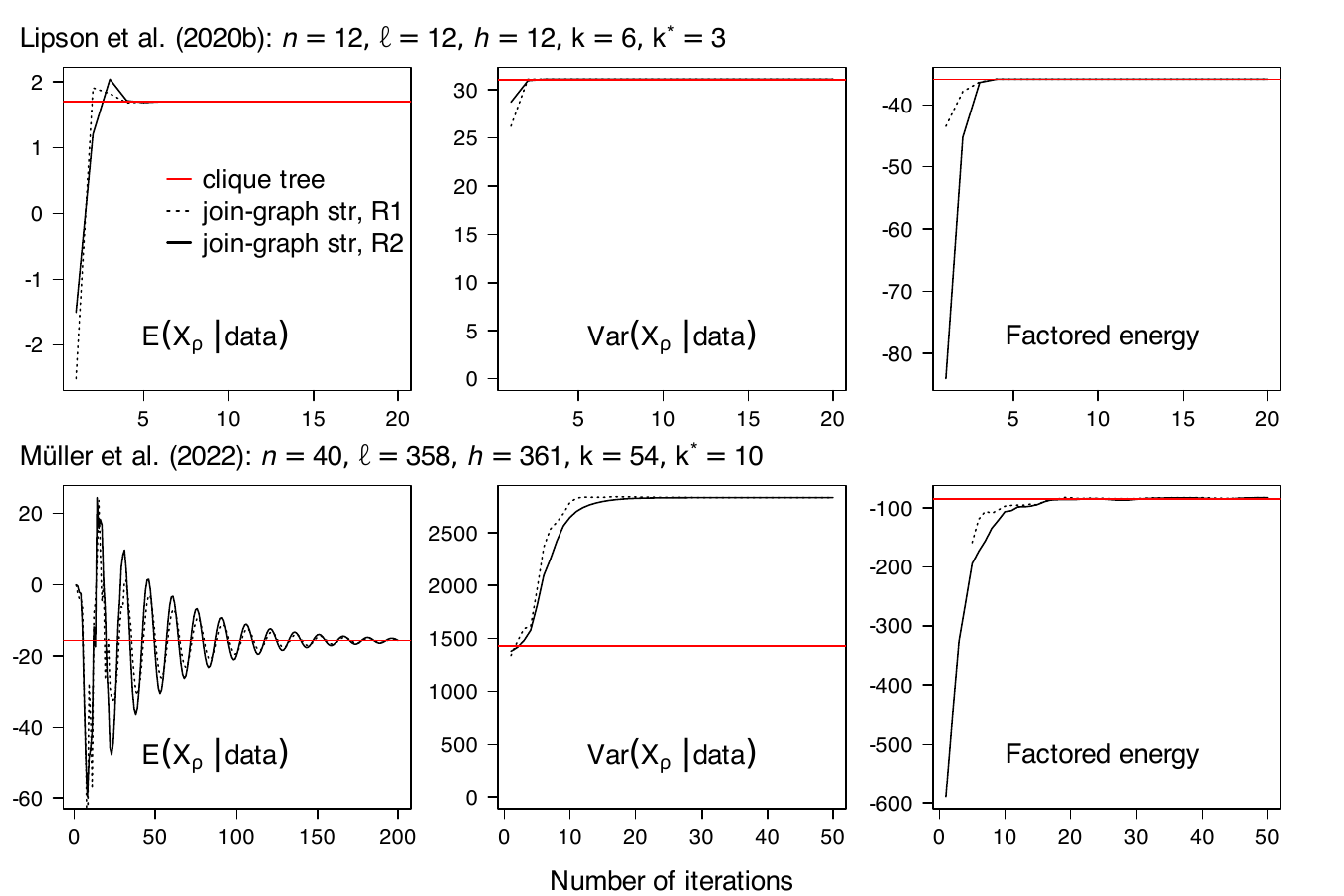}
  \caption{Accuracy of loopy BP.
  Approximation of the conditional distribution
  of the root state $X_\rho$ (left and center)
  and log-likelihood (right)
  using a greedy
  minimum-fill clique tree $\mathcal{U}$ and a join-graph structuring cluster
  graph $\mathcal{U}^*$ for two networks of varying complexity
  \cite{muller2022bayesian,lipson2020ancient} as measured by their
  number of tips ($n$), level ($\ell$), number of hybrids ($h$), maximum clique size ($k$),
  and maximum cluster size ($k^*$).
  For $\mathcal{U}$, estimates are exact after one iteration
  and shown as horizontal red lines.
For $\mathcal{U}^*$, estimates are shown over
  20 (first row), 50 or 200 (second row) iterations.
  Each iteration consists of two passes through each
  spanning tree in a minimal set that jointly covers $\mathcal{U}^*$.
In each plot, the two curves correspond to two different regularizations
  of initial beliefs (SM section~E,
  dotted: algorithm R1, solid: algorithm R2).
}
  \label{fig:loopyBPapprox}
\end{figure}

\section{Leveraging BP for efficient parameter inference}
\label{sec:inferencewithBP}

\subsection{BP for fast likelihood computation}

In some particularly simple models, such as the BM on a tree,
fast algorithms such as IC \cite{felsenstein85} or
\texttt{phylolm} \cite{2014HoAne-algo} can directly calculate
the best-fitting parameters that maximize the restricted likelihood (REML),
in a single tree traversal avoiding numerical optimization.
For more general models, such closed-form estimates are not available.
One product of BP is the likelihood of any fixed set of model parameters.
BP can hence be simply used as a fast algorithm for
likelihood computation, which can then be exploited by any statistical estimation technique,
in a Bayesian or frequentist framework.
Most of the tools cited in section~\ref{sec:tree_BP_continuous}
use either direct numerical optimization of the likelihood
\cite{Mitov2019,2023Boyko-hOUwie,2023Bartoszek-modelselection-mvOU}
or sampling techniques such as Markov Chain Monte Carlo (MCMC)
\cite{Pybus2012,fitzjohn2012diversitree}
for parameter inference.

BP also outputs the trait distribution
at internal, unobserved nodes conditioned on the observed data at the tips.
In addition to providing a tool for efficient ancestral state reconstruction,
these conditional means and variances can be used for parameter inference,
with approaches based on latent variable models such as
Expectation Maximization (EM) \cite{Bastide2017},
or Gibbs sampling schemes \cite{Cybis2015}.
Although not currently used in the field of evolutionary biology to our knowledge, 
approaches based on approximate EM algorithms \cite{Heskes2003}
and relying on loopy BP could also be used.

The linear Gaussian framework can also be useful for traits
that rely on latent Gaussian liabilities, such as the threshold model for discrete traits (Example~\ref{ex:threshold}),
factor analysis \cite{Tolkoff2018}, and phylogenetic structural equation models
\cite{thorson2023identifying}. In a Bayesian context, the latent liabilities can be sampled at the tips,
and then the conditional likelihood can be computed efficiently with Gaussian BP.
A similar approach was successful on trees \cite{Cybis2015,Zhang2021,Zhang2023}.
The BP framework can generalize such methods to phylogenetic networks.

\subsection{BP for fast gradient computation}

As we show below,
the conditional means and variances at ancestral nodes
can be used to efficiently compute the gradient of the likelihood
\cite{Salakhutdinov2003}.
The gradient of the likelihood can help speed up inference
in many different statistical frameworks \cite{Barber2012}.
In a phylogenetic context, gradients have been used to improve
maximum likelihood estimation \cite{1996FelsensteinChurchill,Ji2019},
Bayesian estimation through Hamiltonian Monte Carlo (HMC) approaches
\cite{Zhang2021,Fisher2021,2021Bastide-HMC},
or variational Bayes approximations \cite{Fourment2019a}.
Although automatic differentiation can be used on trees for
some models \cite{Swanepoel2022}, 
direct computations of the gradient using BP-like
algorithms have been shown to be more efficient in some
contexts \cite{Fourment2023}. 
After recalling Fisher's identity to calculate gradients
after BP calibration,
we illustrate its use
on the BM model (univariate or multivariate)
where it allows for the derivation of a new analytical formula for
the REML parameter estimates.

\subsubsection{Gradient Computation with Fisher's Identity}

In a phylogenetic context, latent variables are usually internal
nodes, while observed variables are leaves.
We write
$\vect{Y} = \set{X_{v,j} : \text{trait } j \text{ observed at } v\in V}$
the set of observed variables.
Fisher's identity provides a way to link the gradient of the
log-likelihood of the data $\loglik(\theta)=\log p_{\theta}(\vect{Y})$ at parameter $\theta$,
with the distribution
of all the variables conditional on the observations $\vect{Y}$.
We refer to \cite[chap.\ 10]{Cappe2005}
or \cite[chap.\ 11]{Barber2012} for general introductions
on Markov models with latent variables.
Under broad assumptions, Fisher's identity states
(see Proposition~10.1.6 in \cite{Cappe2005}, or section~11.6 in \cite{Barber2012}):
  \begin{equation*}\label{eq:fisher_identity}
      \left. 
        \gradient{\theta'} \left[
          \log p_{\theta'}(\vect{Y})
      \right]
      \right|_{\theta'=\theta}
      = \E_\theta \left[
        \cond{
          \left.
        \gradient{\theta'} \left[
          \log p_{\theta'}(\vect{X}_v; v\in V)
          \right]
          \right|_{\theta'=\theta}
        }{
          \vect{Y}
        }
        \right],
  \end{equation*}
where $\left.\gradient{\theta'} [f(\theta')]\right|_{\theta'=\theta}$ denotes the gradient 
of $f$
with respect to the generic parameters $\theta'$ and evaluated at $\theta' = \theta$,
and $\E_\theta[\cond{\bullet}{\vect{Y}}]$ the expectation
conditional on the observed data
under the model parametrized by $\theta$,
which is precisely where the output from BP can be used.
Plugging in the factor decomposition from the graphical
model \eqref{eq:factor_decomposition}
we get:
\begin{equation}\label{eq:fisher_gradient}
  \left. 
    \gradient{\theta'} \left[
      \log p_{\theta'}(\vect{Y})
  \right]
  \right|_{\theta'=\theta}
  = 
  \sum_{v\in V}
  \E_\theta \left[
    \cond{
      \left.
        \gradient{\theta'} \left[
          \log \phi_v(\vect{X}_v | \vect{X}_u, \theta'; u\in\pa(v))
          \right]
          \right|_{\theta'=\theta}
    }{
      \vect{Y}
    }
    \right].
\end{equation}
While \eqref{eq:fisher_gradient} applies to the full vector of
all model parameters, it can also be applied to take the gradient with respect
to a single parameter $\theta$ of interest,
keeping the other parameters fixed.
For instance, we can focus on one rate matrix
$\bm{\Sigma}$
of a BM model, or one primary optimum of an OU model.
Special care needs to be taken for
gradients with respect to structured matrices,
such as variance matrices that need to be symmetric (see e.g.\ \cite{2021Bastide-HMC})
or with a sparse inverse under structural equation modeling for
high dimentional traits \cite{2023Thorson-phylosem}.

For models where the conditional expectation of the factor in \eqref{eq:fisher_gradient}
has a simple
form, this formula is the key to an efficient gradient computation.
In particular, for discrete traits as in Example~\ref{ex:net1},
the expectation becomes a sum of a manageable number of terms,
local to a cluster, weighted by the normalized cluster belief after calibration
\cite[ch.~19]{koller2009probabilistic}.

\subsubsection{Gradient computation for linear Gaussian models}
For linear Gaussian models~\eqref{eq:lineargaussian},
log-factors can be written as quadratic forms~\eqref{eq:canonicalform},
so their derivatives have analytical formulas
(see SM section~D). The conditional expectation in~\eqref{eq:fisher_gradient}
then only depends on the joint first and second order moments of the variables
$(\vect{X}_v, \vect{X}_{\pa(v)})$
in a cluster, which are known as soon as the beliefs are calibrated.
When the graph is a tree, \cite{2021Bastide-HMC} exploited this
formula to derive gradients in the general linear Gaussian case.
However, they did not use the complete factor decomposition~\eqref{eq:factor_decomposition},
but instead an \emph{ad-hoc} decomposition that only works when the graph is a
(binary) tree, and exploits the split partitions defined by the tree.
In contrast, the present approach gives a recipe for the efficient gradient
computation of any linear Gaussian model on any network,
as soon as beliefs are calibrated.

In the special case where the process is a homogeneous BM
(univariate or multivariate) on a network with a weighted-average
merging rule~\eqref{eq:weighedaverage-merging},
a constant rate $\bm{\Sigma}$,
no missing data at the tips, and, if present, within-species variation
that is proportional to $\bm{\Sigma}$,
then the gradient with respect to $\bm{\Sigma}$
takes a particularly simple form.
Setting this gradient to zero, we find an analytical formula for the
REML estimate of $\bm{\Sigma}$
and for the ML estimate of the ancestral mean $\mu_\rho$
(SM section~D.3).
In a phylogenetic regression setting, a similar formula can be found
for the ML estimate of coefficients
(SM section~D.4).
Efficient algorithms such as IC and \texttt{phylolm} already exist
to compute these quantities on a tree, in a single traversal.
Here, our formulas need two traversals but
remain linear in the number of tips,
and because they rely on a general BP formulation,
they apply to networks with reticulations.
Fisher's identity and BP hence offer a general method for gradient
computation, and could lead to analytical formulas for
other simple models.
Such efficient formulas could alleviate numerical instabilities observed
in software such as \texttt{mvSLOUCH}, which experienced a significant failure rate
for the BM on trees with a large number of traits \cite{Bartoszek2024}.

\subsubsection{Hessian computation with Louis's identity}
Second order derivatives can improve both maximum likelihood
and Bayesian statistical inference methods. Efficient Hessian computation methods
have been developed for branch lengths in sequence evolution models \cite{1996FelsensteinChurchill,Ji2019},
and recently for continuous trait model parameters on trees \cite{2024Kiang-hessian}.
Using BP, the Hessian of the log-likelihood with respect
to the parameters can also be obtained as a conditional expectation
of the Hessian of the complete log-likelihood:
\begin{multline*}\label{eq:louis_identity}
  \left.
    \left\{
  \gradient{\theta'}^2 \left[
      \log p_{\theta'}(Y)
  \right]
  +
  \gradient{\theta'} \left[
      \log p_{\theta'}(Y)
  \right]
  \left[
  \gradient{\theta'} \left[
      \log p_{\theta'}(Y)
  \right]\right]^{\top}
  \right\}
  \right|_{\theta'=\theta}
  = \\
  \E_\theta \left[
    \cond{\big\{
    \gradient{\theta'}^2 \left[
      \log p_{\theta'}(X_v; v\in V)
      \right]
      +
      \gradient{\theta'} \left[
        \log p_{\theta'}(X_v; v\in V)
      \right]
      \left[
      \gradient{\theta'} \left[
        \log p_{\theta'}(X_v; v\in V)
      \right]\right]^{\top}
    \big\}|_{\theta'=\theta}}
    {
      Y
    }
    \right].
\end{multline*}
This so-called Louis identity \cite{Cappe2005} also simplifies under
the factor decomposition~\eqref{eq:factor_decomposition},
and leads to tractable formulas in simple Gaussian or discrete cases.

\subsection{BP for direct Bayesian parameter inference}

Likelihood or gradient-based approaches require careful
analytical computations to get exact formulas in any new model
within the class of linear Gaussian graphical models, depending on the parameters
of interest \cite{2021Bastide-HMC}.
One way to alleviate this problem is to use a Bayesian framework,
and expand the graphical model to include
both the phylogenetic network and the evolutionary parameters,
which are seen as random variables themselves,
as e.g., in \cite{hohna2014probabilistic}.
Then, inferring parameters amounts to learning their conditional distribution in
this larger graphical model. In this setting, the output of interest
from BP is not the likelihood but the distribution of random variables
(evolutionary parameters primarily) conditional on the observed data.

Exact computation may not be possible in this extended graphical model,
because it is typically not linear Gaussian and the graph's treewidth can be much larger than that of the phylogenetic network,
when one parameter (e.g.\ the evolutionary rate) affects multiple node families.
Therefore, approximations may need to be used.
For example, ``black box'' optimization techniques
rely on variational approaches to reach a tractable approximation
of the posterior distribution of model parameters \cite{Ranganath2014}.
The conditional distribution of unobserved variables, provided
by BP, facilitates the noisy approximation of the variational gradient
that can be used to speed up the optimization of the variational Bayes approximation.

\section{Challenges and Extensions}
\label{sec:discussion}

\subsection{Degeneracy}

While our implementation
provides a proof-of-concept, various technical challenges still need to be solved.
Much of the literature on BP focuses on factor graphs, which failed to
converge for one of our example phylogenetic networks. More work is needed to better
understand the convergence and accuracy of alternative cluster graphs,
and on other choices that
can substantially affect loopy BP's efficiency, such as scheduling.
Below, we focus on implementation challenges due to degeneracies.

For the message $\tilde{\mu}_{i\rightarrow j}$ to be well-defined in step 1 of Gaussian BP,
the belief of the sending cluster must have a precision matrix
$\bm{K}$ in~\eqref{eq:canonicalform}
with a full-rank submatrix with respect to the
variables to be integrated out ($\bm{K}_I$ in Algorithm~\ref{alg:gbp}).
This condition can fail under realistic phylogenetic models,
due to two different types of degeneracy.

The first type arises from deterministic factors:
when $\bm{V}_v=0$ in \eqref{eq:lineargaussian}
and $X_v$ is determined by the states at parent nodes $X_{\pa(v)}$ without noise,
e.g.\ when all of $v$'s
parent branches have length 0 in standard phylogenetic models.
This is expected at hybridization events when both parents have sampled
descendants in the phylogeny, because the parents and hybrid need
to be contemporary of one another.
This situation is also common in admixture graphs \cite{2023Maier}
due to a lack of identifiability of hybrid edge lengths from $f$ statistics,
leading to a ``zipped-up'' estimated network in which the estimable
composite length parameter is assigned to the hybrid's child edge
\cite{2023XuAne_identifiability}.
With this degeneracy, $X_v$ has infinite precision given
its parents, that is, $\bm{K}$ has some infinite values.
The complications are technical, but not numerical.
For example, one can use a generalized canonical form that includes a Dirac distribution
to capture the deterministic equation of
$X_v$ given $X_{\pa(v)}$ from \eqref{eq:lineargaussian}.
Then BP operations need to be extended to these generalized canonical forms,
as done in \cite{schoeman2022degenerate}
(illustrated in SM section~F).
One could also modify the network by contracting internal tree edges of length~0.
At hybrid nodes, adding a small variance to
$\bm{V}_v$ would be an approximate yet biologically realistic approach.

The second type of degeneracy arises when the precision submatrix $\bm{K}_I$ is
finite but not of full rank.
In phylogenetic models, this is frequent at initialization \eqref{eq:clusterfactor}.
For example, consider a cluster of 3 nodes: a hybrid $v$ and its 2 parents.
By \eqref{eq:canonicalformGGM} we see that $\rk(\bm{K}_v)\leq p$.
So at initialization with belief $\phi_v$,
$\bm{K}_I$ is degenerate if we seek to integrate out $|I|=2$ nodes,
which would occur if the cluster is adjacent to a sepset
containining only one parent of $v$.
This situation is typical of factor graphs.
Initial beliefs would also be degenerate with $\bm{K}=\bm{0}$
for any cluster that is not assigned any factor by \eqref{eq:clusterfactor}.
This may occur if there are more clusters
than node families, or if the graph
has nested redundant clusters (e.g.\ from join-graph structuring).
In some cases, a schedule may avoid these degeneracies,
guaranteeing a well-defined message at each BP update.
On a clique tree, a schedule based on a postorder traversal has this guarantee,
provided that all $p$ traits are observed at all leaves
or that trait $j$ at node $v$ is removed from scope if it is unobserved
at all its descendants.
But generally, it is unclear how to find such a schedule.
Another approach is to simply skip a BP update if its message
is ill-defined, though there is no guarantee that the sending cluster will
eventually have a well-behaved belief to pass the message later.
A robust option is to regularize cluster beliefs, right after initialization
\eqref{eq:clusterfactor} or during BP, by increasing some diagonal elements of
$\bm{K}$ to make $\bm{K}_I$ of full rank. To maintain the probability model,
this cluster belief regularization is balanced by a similar modification
to a corresponding sepset.
SM section~E describes two such approaches that
appear to work well in practice, although theoretical guarantees have not been
established.

\subsection{Loopy BP is promising for discrete traits}

We focused on Gaussian models in this paper, for which the `lazy' matrix
approach is polynomial.
For discrete trait models, the computational gains from loopy BP can be much
greater, because alternative approaches are not polynomial on general networks.
For a trait with $c$ states ($c=2$ for a binary trait as in Example~\ref{ex:net1}),
passing a message has complexity $\mathcal{O}(c^k)$
where $k$ is the sending cluster size.
Thus, cluster graphs with small clusters can bring exponential computational gains.
Even exact BP can bring significant computational gains to existing approaches
that rely on other means to reduce complexity.
For example, the model without ILS used in
\cite{lutteropp2022netrax,allen2020estimating} is a mixture model, so the network likelihood can be calculated as a weighted average of tree likelihoods
for which exact BP takes linear time. This approach scales exponentially with $h$ because
there are typically $\mathcal{O}(2^h)$ trees displayed in a network.
In contrast, the complexity of BP on a clique tree of maximum clique size $k$
is $\mathcal{O}(nc^k)$, thus parametrized by the treewidth $t$ of the moralized network instead of $h$
($t = k-1$ for an optimal clique tree).
Given our empirical evidence that
$t$ grows more slowly than $h$ or the network's level $\ell$
in biologically-realistic networks (Fig.~\ref{fig:realsim nets}),
exact BP could achieve significant computational gains
and loopy BP substantially more.

A BP approach is already used in \texttt{momi2} \cite{kamm2020momi2},
who use a clique tree built from a node ordering by age from
youngest to oldest, to get conditional likelihoods of
the derived allele count under a Moran model (without mutation).
The mutation-with-ILS model in \texttt{SnappNet} can be also reframed
as a graphical model on a graph expanded from the phylogenetic network
(as shown in Example~\ref{ex:tree2} and SM section~A). Accordingly, the BP-like algorithm in \cite{2021Rabier-snappnet}
has complexity controlled by the network's scanwidth, a parameter introduced
by \cite{berry2020scanning}.
Using regular BP on more optimal clique trees and loopy BP on cluster graphs
may help speed up computations even more.

Also related is the algorithm in \cite{2022ScornavaccaWeller-treewidth-parsimony},
who use a clique tree to solve a parsimony problem.
In this non-probabilistic setting, it is unclear how cluster graphs could be
leveraged to speed up algorithms as they do in loopy BP.

\medskip
To deal with computational intractability, the most widely-used probabilistic
methods to infer networks from DNA sequences are based on composite likelihoods
\cite{2016SolisLemus-SNaQ,2015yu} or summary statistics like
$f$ statistics \cite{2023Maier,2023Nielsen-admixturebayes},
leading to a lack of identifiability for parts of the network topology
and some of its parameters
\cite{2016SolisLemus-SNaQ,Banos2019,2023XuAne_identifiability,
      2024Ane-anomalies,2024ABGarrotelopesR,2014Rhodes-circular}.
These identifiability issues should be alleviated
if using the full data becomes tractable thanks to exact or loopy BP.

\section*{Supplementary Material}

Technical derivations are available in the Supplementary Material (SM).
Code to reproduce Figures~\ref{fig:net2}a, \ref{fig:realsim nets} and
\ref{fig:loopyBPapprox} is available at
\url{https://github.com/bstkj/graphicalmodels_for_phylogenetics_code}.
A julia package for Gaussian BP on phylogenetic networks
is available at \url{https://github.com/JuliaPhylo/PhyloGaussianBeliefProp.jl}.

\maketitle

\enlargethispage{20pt}

\section*{Acknowledgements}
This work was supported in part by the National Science Foundation (DMS 2023239 to C.A.)
and
by the University of Wisconsin-Madison Office of the Vice Chancellor for Research and
Graduate Education with funding from the Wisconsin Alumni Research Foundation.
C.A. visited P.B. at the University of Montpellier thanks to support from
the I-SITE MUSE through the Key Initiative ``Data and Life Sciences''.

\vskip2pc

\clearpage{}\let\olditemize\itemize
\renewcommand\itemize{\olditemize\setlength{\itemsep}{1pt}}

\setcounter{equation}{0}
\renewcommand{\theequation}{SM-\arabic{equation}} 
\setcounter{theorem}{1}
\renewcommand{\thealgorithm}{R\arabic{algorithm}}

\appendix
\renewcommand{\thesection}{\Alph{section}}
\setcounter{figure}{0}
\renewcommand{\thefigure}{S\arabic{figure}}

\hrule
\begin{center}
\textsc{\Large Supplementary material}
\end{center}
\hrule

\renewcommand{\shorttitle}{SM: Leveraging graphical model techniques to
study evolution on phylogenetic networks}

\section{Example 3: Binary trait with ILS on a tree (\texttt{SNAPP}) and on a network (\texttt{SnappNet})}
\label{sec:snapp_snappnet_BP}

In this section, we recast the likelihood calculations done
in \texttt{SNAPP} and \texttt{SnappNet} as belief propagation (BP),
on a graph $G$ that differs from the species phylogeny due to
incomplete lineage sorting (ILS).

Following Example~3 in the main text,
we consider a sample of one or more individuals from each species in a species
phylogeny (which may be reticulate).
For a given gene, we assume that the gene tree is
generated according to the coalescent model along the species phylogeny
(see \cite{1982kingman} for the coalescent in one population,
\cite{2003RannalaYang} for the multispecies coalescent along a species tree, and
\cite{2023fogg_phylocoalsimulations} for the network multispecies coalescent
along a species network).
Finally, consider a binary trait evolving along this gene tree,
with 2 states (or alleles) ``green'' and ``red'',
to re-use terminology by \cite{2012Bryant-snapp}.

We denote the total number of alleles ($n$) and number of red alleles
($r$) ancestral to the sampled individuals at the beginning and at the end of
edge $e$, by $n_{\overline{e}},r_{\overline{e}}$ and $n_{\underline{e}},r_{\underline{e}}$
respectively. Here ``beginning'' and ``end'' of $e$ refer to points
in time with time going forward, along the species lineage that the edge $e$
represents. The ``beginning'' of $e$ corresponds to the species (or population)
right after the speciation or hybridization event at which $e$ starts,
closer to the root.
The ``end'' of $e$ corresponds to the species (or population) right before
the speciation or hybridization event at which $e$ ends, closer to the leaves.
We only observe the total number of alleles and the number of red alleles
at the end of pendant edges, that is, in the sampled extant species.

Below, we start by formulating this evolutionary model as a graphical model
for the allele counts $n_{\overline{e}},r_{\overline{e}}$
and $n_{\underline{e}},r_{\underline{e}}$,
which require a graph $G$ that is more complex than the original species
phylogeny (e.g.\ Figs.~\ref{fig:tree2} and \ref{fig:net1SM}).

\subsection{\texttt{SNAPP}}
\label{sec:snapp_BP}
\cite{2012Bryant-snapp} discovered conditional independencies among
the allele counts tracked by the model, 
$n_{\overline{e}},r_{\overline{e}}$ (at the beginning of edge $e$)
and $n_{\underline{e}},r_{\underline{e}}$ (at the end of $e$),
when the species phylogeny is a tree $T$.
We show here that these conditional independencies
correspond to a graphical model on a graph $G$ that is different from the
original phylogenetic tree.
This is illustrated in Fig.~\ref{fig:tree2}
with $T$ in (a), valid graphs in (b-c), and BP run along the clique tree $\mathcal{U}$ in (d).

\begin{figure}[h]
  \begin{center}
    \begin{tikzpicture}[-]
\matrix (m0)[matrix,anchor=north,
            column sep={0.8cm,between origins},
            row sep={1cm,between origins},
            nodes={font=\small, inner sep=.5mm}] at (1,0)
    {\node[label={[xshift=-2em]:\normalsize (a)}](){\hspace{-1em}$T$}; &
    \node(x0){$n_{\underline{\rho}},\textcolor{red}{r_{\underline{\rho}}}$}; & \\
    & \node(x3){$n_{\underline{3}},\textcolor{red}{r_{\underline{3}}}$}; & \\
    \node(x1){$n_{\underline{1}},\textcolor{red}{r_{\underline{1}}}$}; & &
    \node(x2){$n_{\underline{2}},\textcolor{red}{r_{\underline{2}}}$}; \\
    };

    \matrix (m1)[matrix,anchor=north,
      column sep={8mm,between origins},
      row sep={8mm,between origins},
      nodes={font=\small, inner sep=.5mm}] at (9,0)
    {\node[label={[xshift=-2em]:\normalsize (b)}]{$G^{(s)}$}; & &
    \node(n0b_1){$n_{\underline{\rho}}$}; &
    \node[red](r0b_1){$r_{\underline{\rho}}$}; & \\
    & & \node(n3t_1){$n_{\overline{3}}$}; & \node[red](r3t_1){$r_{\overline{3}}$}; & \\
    & & \node(n3b_1){$n_{\underline{3}}$}; & & \\
    & & \node[red](r3b_1){$r_{\underline{3}}$}; & & \\[-1em]
    \node(n1t_1){$n_{\overline{1}}$}; & &
    \node[red,yshift=-.5em](r1tr2t_1){$(r_{\overline{1}},r_{\overline{2}})$}; & &
    \node(n2t_1){$n_{\overline{2}}$}; \\
    \node(n1b_1){$n_{\underline{1}}$}; & \node[red](r1b_1){$r_{\underline{1}}$}; & &
    \node[red](r2b_1){$r_{\underline{2}}$}; & \node(n2b_1){$n_{\underline{2}}$}; \\
    };

    \matrix (m2)[matrix,anchor=north,
        column sep={8mm,between origins},
        row sep={8mm,between origins},
        nodes={font=\small}] at (1.2,-4.7)
    {
    & \node[label={[xshift=-2em]:\normalsize (c)}]{\hspace{-1em}$G$};
    & \node(n0b){$n_{\underline{\rho}}$}; & \node[red](r0b){$r_{\underline{\rho}}$}; & \\
    & & \node(n3t){$n_{\overline{3}}$}; & \node[red](r3t){$r_{\overline{3}}$}; & \\
    & & \node(n3b){$n_{\underline{3}}$}; & & \\
    & & \node[red](r3b){$r_{\underline{3}}$}; & & \\[-1em]
    \node(n1t){$n_{\overline{1}}$}; & \node[red](r1t){$r_{\overline{1}}$}; & &
    \node[red](r2t){$r_{\overline{2}}$}; & \node(n2t){$n_{\overline{2}}$}; \\
    \node(n1b){$n_{\underline{1}}$}; & \node[red](r1b){$r_{\underline{1}}$}; & &
    \node[red](r2b){$r_{\underline{2}}$}; & \node(n2b){$n_{\underline{2}}$}; \\
    };

    \matrix (m3)[matrix,anchor=north ,
        column sep={1.5cm,between origins},
        row sep={1cm,between origins},
        nodes={font=\small,ellipse,draw,inner sep=.5mm}] at (9.42,-4.7)
    {\node[draw=none,label={[xshift=-2.5em]:\normalsize (d)}](){\hspace{-1.9em}$\mathcal{U}$}; &
      \node(c0b){$n_{\underline{\rho}},\textcolor{red}{r_{\underline{\rho}}},
      n_{\overline{3}},\textcolor{red}{r_{\overline{3}}}$}; & \\
      & \node(c3t){$n_{\underline{3}},\textcolor{red}{r_{\underline{3}}},
      n_{\overline{3}},\textcolor{red}{r_{\overline{3}}}$}; & \\
      & \node(c3b){$n_{\overline{1}},\textcolor{red}{r_{\overline{1}}},
      n_{\overline{2}},\textcolor{red}{r_{\overline{2}}},
      n_{\underline{3}},\textcolor{red}{r_{\underline{3}}}$}; & \\
      \node(c1t){$n_{\underline{1}},\textcolor{red}{r_{\underline{1}}},
      n_{\overline{1}},\textcolor{red}{r_{\overline{1}}}$}; & &
      \node(c2t){$n_{\underline{2}},\textcolor{red}{r_{\underline{2}}},
      n_{\overline{2}},\textcolor{red}{r_{\overline{2}}}$}; \\
    };
    \begin{scope}[every node/.style={font=\small\itshape}]
    \draw[->] (x0) -- node[above right,
        yshift=-.1em]{$n_{\overline{3}},\textcolor{red}{r_{\overline{3}}}$}(x3);
    \draw[->] (x3) -- node[left,above,
        xshift=-1em]{$n_{\overline{1}},\textcolor{red}{r_{\overline{1}}}$}(x1);
    \draw[->] (x3) -- node[right,above,
        xshift=1em]{$n_{\overline{2}},\textcolor{red}{r_{\overline{2}}}$}(x2);

    \draw[->,red] (n0b_1) -- (r0b_1);
    \draw[->,red] (n0b_1) -- (r3t_1);
    \draw[->,red] (r0b_1) -- (r3t_1);
    \draw[->,red] (n3t_1) -- (r3t_1);
    \draw[->] (n3t_1) -- (n0b_1);
    \draw[->] (n3b_1) -- (n3t_1);
    \draw[->,red,bend right=60] (n3t_1) to (r3b_1);
    \draw[->,red] (n3b_1) -- (r3b_1);
    \draw[->,red,bend left=30] (r3t_1) to (r3b_1);
    \draw[->,red] (n1b_1) -- (r1b_1);
    \draw[->] (n1b_1) -- (n1t_1);
    \draw[->,red] (n1t_1) -- (r1b_1);
    \draw[->,red] (n2b_1) -- (r2b_1);
    \draw[->] (n2b_1) -- (n2t_1);
    \draw[->,red] (n2t_1) -- (r2b_1);
    \draw[->] (n1t_1) -- (n3b_1);
    \draw[->] (n2t_1) -- (n3b_1);
    \draw[->,red] (n3b_1) -- (r3b_1);
    \draw[->,red, bend right=40] (n3b_1) to (r1tr2t_1); \draw[->,red] (r1tr2t_1) -- (r1b_1);
    \draw[->,red] (r1tr2t_1) -- (r2b_1);
    \draw[->,red] (n1t_1) -- (r1tr2t_1);
    \draw[->,red] (n2t_1) -- (r1tr2t_1);
    \draw[->,red] (r3b_1) -- (r1tr2t_1);

    \draw[->,red] (n0b) -- (r0b);
    \draw[->,red] (n0b) -- (r3t);
    \draw[->,red] (r0b) -- (r3t);
    \draw[->,red] (n3t) -- (r3t);
    \draw[->] (n3t) -- (n0b);
    \draw[->] (n3b) -- (n3t);
    \draw[->,red,bend right=60] (n3t) to (r3b);
    \draw[->,red] (n3b) -- (r3b);
    \draw[->,red,bend left=30] (r3t) to (r3b);
    \draw[->,red] (n1b) -- (r1b);
    \draw[->] (n1b) -- (n1t);
    \draw[->,red] (n1t) -- (r1t);
    \draw[->,red] (r1t) -- (r1b);
    \draw[->,red] (n1t) -- (r1b);
    \draw[->,red] (n2b) -- (r2b);
    \draw[->] (n2b) -- (n2t);
    \draw[->,red] (r2t) -- (r2b);
    \draw[->,red] (n2t) -- (r2b);
    \draw[->] (n1t) -- (n3b);
    \draw[->] (n2t) -- (n3b);
    \draw[->,red] (n3b) -- (r3b);
    \draw[->,red] (n3b) -- (r1t);
    \draw[->,red] (r3b) -- (r1t);
    \draw[->,red] (r3b) -- (r2t);
    \draw[->,red] (r1t) -- (r2t);

    \draw[-] (c1t) -- (c3b) node[midway,rectangle,draw,fill=white,
        inner sep=.5mm]{$n_{\overline{1}},\textcolor{red}{r_{\overline{1}}}$};
    \draw[-] (c2t) -- (c3b) node[midway,rectangle,draw,fill=white,
        inner sep=.5mm]{$n_{\overline{2}},\textcolor{red}{r_{\overline{2}}}$};
    \draw[-] (c3b) -- (c3t) node[midway,rectangle,draw,fill=white,
        inner sep=.5mm]{$n_{\underline{3}},\textcolor{red}{r_{\underline{3}}}$};
    \draw[-] (c3t) -- (c0b) node[midway,rectangle,draw,fill=white,
        inner sep=.5mm]{$n_{\overline{3}},\textcolor{red}{r_{\overline{3}}}$};
    \end{scope}
    \end{tikzpicture}
  \end{center}
  \caption{
  (a) Phylogenetic species tree $T$, used to generate gene genealogies for
  $n_{\underline{1}}$ and $n_{\underline{2}}$ individuals sampled from species 1
  and 2 respectively.
  (b) and (c) show two valid graphs for the graphical model associated with the evolution of
  a binary trait on a gene tree drawn from the multispecies coalescent model.
  Both are DAGs with 2 sources (roots)
  $n_{\underline{1}}$ and $n_{\underline{2}}$, and 2 sinks (leaves)
  $r_{\underline{1}}$ and $r_{\underline{2}}$.
  The model restricted to variables $n_{\overline{e}}$ and $n_{\underline{e}}$
  (in black) can be described by the subgraph $G_n$
  whose nodes and edges are in black, identical in (b) and (c).
  It is a tree similar to $T$ but with reversed edge directions.
  In (b) the graph $G^{(s)}$ is symmetric, invariant to swapping
  species 1 and 2, like $T$. One of its nodes corresponds to the
  2-dimensional variable $(r_{\overline{1}},r_{\overline{2}})$.
  The edge $n_{\underline{3}}\rightarrow (r_{\overline{1}},
  r_{\overline{2}})$ is redundant since
  $n_{\underline{1}}\rightarrow (r_{\overline{1}},r_{\overline{2}})$ and
  $n_{\underline{2}}\rightarrow (r_{\overline{1}},r_{\overline{2}})$
  are already present, and the model imposes
  $n_{\overline{1}}+n_{\overline{2}}=n_{\underline{3}}$.
  In (c) the graph $G$ is not symmetric but all its nodes
  contain a univariate variable.
  (d)
  Clique tree $\mathcal{U}$ for both $G^{(s)}$ and $G$.
  The 6-variable clique is overparametrized because
  $n_{\overline{1}}+n_{\overline{2}}=n_{\underline{3}}$
  and $r_{\overline{1}}+r_{\overline{2}}=r_{\underline{3}}$,
  but reflects the symmetry of the model.
  }\label{fig:tree2}
\end{figure}

If we only consider the ancestral number of individuals $n$,
then the graph $G_n$ for the associated graphical model is as follows,
thanks to the description of the coalescent model going back in time.
For each edge $e$ in $T$, an edge is created in $G_n$ but with the reversed
direction (black subgraph in Fig.~\ref{fig:tree2}(b-c)).
On this edge, the coalescent edge factor
$\phi_{n_{\overline{e}}} = \mathbb{P}(n_{\overline{e}} \mid n_{\underline{e}})$
was derived by \cite{1984Tavare} and
is given in \cite[eq. (6)]{2012Bryant-snapp}.
Each internal node $v$ in $T$ is triplicated in $G_n$ to hold the variables
$n_{\underline{e}}$, $n_{\overline{c_1}}$ and $n_{\overline{c_2}}$,
where $e$ denotes the parent edge of $v$ and $c_1,c_2$ denote its child edges
(assuming that $v$ has only 2 children, without loss of generality).
These nodes are then connected in $G_n$ by edges from each $n_{\overline{c_i}}$
to $n_{\underline{e}}$.
The speciation factor $\phi_{n_{\underline{e}}} = \mathbbm{1}_{\{n_{\overline{c_1}}
+ n_{\overline{c_2}}\}}(n_{\underline{e}})$ expresses the relationship
$n_{\underline{e}} = n_{\overline{c_1}} + n_{\overline{c_2}}$.
Overall, $G_n$ is a tree with a single sink (leaf), multiple sources (roots),
and data at the roots.

To calculate the likelihood of the data, we add to the model the number of red alleles $r$
ancestral to the sampled individuals.
The full graph (Fig.~\ref{fig:tree2}(b-c)) contains $G_n$, with extra nodes
for the $r$ variables, and extra edges to model the process along edges and at
speciations.
In Fig.~\ref{fig:tree2}(b), $G^{(s)}$ is one such valid graph
to express the graphical model. It maintains the symmetry
found in $T$ by having one node for the 2-dimensional variable
$(r_{\overline{1}},r_{\overline{2}})$.
In Fig.~\ref{fig:tree2}(c), $G$ is another valid graph, not symmetric
but with univariate nodes. In $G$,
The node family for $r_{\underline{e}}$ includes $r_{\overline{e}}$ and both
$n_{\overline{e}}$ and $n_{\underline{e}}$. The mutation edge factor
$\phi_{r_{\underline{e}}} = \mathbb{P}(r_{\underline{e}} \mid r_{\overline{e}},n_{\overline{e}},n_{\underline{e}})$
was derived by \cite{1994GriffithsTavare} using both the coalescent and mutation
processes, and is given in \cite[eq. (16)]{2012Bryant-snapp}.
For edge $e$ in $T$ with child edges $c_1$ and $c_2$ in $T$,
the speciation factors for red alleles
$\phi_{r_{\overline{c_1}}} =
\mathbb{P}(r_{\overline{c_1}} \mid n_{\overline{c_1}},n_{\underline{e}}, r_{\underline{e}})$
and
$\phi_{r_{\overline{c_2}}}=\mathbbm{1}_{\{r_{\underline{e}} -r_{\overline{c_1}}\}}(r_{\overline{c_2}})$
describe a hypergeometric distribution
where $n_{\overline{c_1}}$ individuals, $r_{\overline{c_1}}$ of which are red,
are sampled from a pool of $n_{\underline{e}}$ individuals, $r_{\underline{e}}$ of which are red, and
$r_{\overline{c_2}} = r_{\underline{e}} - r_{\overline{c_1}}$.

While Fig.~\ref{fig:tree2} illustrates the model on a minimal
phylogeny, the description above shows that generally, the likelihood calculations
used in \texttt{SNAPP} \citep{2012Bryant-snapp}
correspond to BP along a clique tree.

\bigskip

In this example, beliefs are not always partial (or full) likelihoods at every step of BP,
unlike in Examples~1 and ~2 from the main text. For example, consider the first iteration of BP, with
the tip clique ${\mathcal C}_1$ containing
$(n_{\overline{1}},r_{\overline{1}})$ in $\mathcal{U}$ (Fig.~\ref{fig:tree2}(d))
sending a message to its large neighbor clique.
The belief of ${\mathcal C}_1$ is initialized with the factors
$\phi_{n_{\overline{1}}}$ and $\phi_{r_{\underline{1}}}$,
which are the probabilities of $n_{\overline{1}}$ and of
$r_{\underline{1}}$ conditional on their parents in graph $G$.
From fixing $(n_{\underline{1}},r_{\underline{1}})$ to their
observed values $(\mathrm{n}_{\underline{1}},\mathrm{r}_{\underline{1}})$, the
message sent by ${\mathcal C}_1$ in step 1 is
\[\tilde{\mu}(n_{\overline{1}},r_{\overline{1}}) =
  \mathbb{P}(n_{\overline{1}} \mid \mathrm{n}_{\underline{1}}) \,
  \mathbb{P}(\mathrm{r}_{\underline{1}} \mid r_{\overline{1}},n_{\overline{1}},
  \mathrm{n}_{\underline{1}})\;.
\]
This message is the quantity denoted by $\mathrm{F}^{\mathrm{T}}(n,r)$ in
\cite{2012Bryant-snapp}.
It is \emph{not} a partial likelihood, because it is not the
likelihood of some partial subset of the data conditional
on some ancestral values in the phylogeny. Intuitively,
this is because nodes with data below $n_{\overline{1}}$ in $G$
include both $r_{\overline{1}}$ and $r_{\overline{2}}$,
yet ${\mathcal C}_1$ does not include $r_{\overline{2}}$.
Information about $r_{\overline{2}}$ will be passed to the
root of $\mathcal U$ separately.
More generally, during the first traversal of $\mathcal U$,
each sepset belief corresponds to an $\mathrm{F}$ value in \cite{2012Bryant-snapp}:
$\mathrm{F}^{\mathrm{T}}$ for sepsets at the top of a branch $(n_{\overline{e}},r_{\overline{e}})$,
and $\mathrm{F}^\mathrm{B}$ for sepsets at the bottom of a branch
$(n_{\underline{e}},r_{\underline{e}})$.
The beauty of BP on a clique tree is that beliefs are guaranteed to converge to
the likelihood of the \emph{full} data, conditional on the state of the clique
variables.
After messages are passed down from the root to ${\mathcal C}_1$, the updated
belief of ${\mathcal C}_1$ will indeed be the likelihood of the full data
conditional on $n_{\overline{1}}$ and $r_{\overline{1}}$.

\subsection{\texttt{SnappNet}}

\texttt{SnappNet} \citep{2021Rabier-snappnet}
extends the model described in \texttt{SNAPP}
\citep{2012Bryant-snapp} to binary phylogenetic networks with reticulations.
We show here that our graphical model framework extends
to phylogenies with reticulations,
with additional edges in $G$ and hybridization factors to model the process at
hybrid nodes for the $n$ and $r$ variables.
The likelihood calculations used in \texttt{SnappNet} correspond to BP for
this graphical model.
Again, the graph $G$
is more complicated than the phylogeny.
See Fig.~\ref{fig:net1SM} for a 2-taxon phylogeny with 1 reticulation.
In addition to the
coalescent and speciation factors described in section~\ref{sec:snapp_BP},
we also need to describe hybridization factors. Consider an edge $e$ that is the child of a hybrid node,
whose parent hybrid edges $p_1$ and $p_2$ have inheritance probabilities
$\gamma_1$ and $\gamma_2$.
The hybridization factors for the total allele count
$\phi_{n_{\underline{p_1}}}=\mathbb{P}(n_{\underline{p_1}}\mid n_{\overline{e}})$
and
$\phi_{n_{\underline{p_2}}}=\mathbbm{1}_{\{n_{\overline{e}}-n_{\underline{p_1}}\}}(n_{\underline{p_2}})$
describe a binomial distribution for each $n_{\underline{p_i}}$ ($i=1,2$)
with $n_{\underline{p_1}} + n_{\underline{p_1}} =  n_{\overline{e}}$,
because each of the $n_{\overline{e}}$
individuals has a $\gamma_i$ chance of being assigned to edge $p_i$ ($i=1,2$).
The hybridization factor for the red allele count is simply
$\phi_{r_{\overline{e}}}=
\mathbbm{1}_{\{r_{\underline{p_1}}+r_{\underline{p_2}}\}}(r_{\overline{e}})$
because
$r_{\overline{e}}= r_{\underline{p_1}}+r_{\underline{p_2}}$.
\begin{figure}[h]
\begin{center}
  \begin{tikzpicture}[-]
    \matrix [matrix,anchor=north,
                column sep={1.2cm,between origins},
                row sep={0.5cm,between borders},
                nodes={font=\footnotesize,inner sep=1mm}] at (-5.5,0)
    {\node[label={[xshift=-2em]:\normalsize (a)}](){$N$}; & & & \\[-1.5em]
    & \node(x0){$\underline{\rho},\textcolor{red}{\underline{\rho}}$}; & & \\
    \node[yshift=-1em](x3){}; & & \node(x4){}; & \\
    \node(x1){$\underline{1},\textcolor{red}{\underline{1}}$}; & &
    \node(x2){$\underline{2},\textcolor{red}{\underline{2}}$}; & \\
    };
    \matrix [matrix,anchor=north,
                column sep={0.25cm,between borders},
                row sep={0.2cm,between borders},
                nodes={font=\footnotesize,inner sep=1mm}] at (0,0)
    {\node[label={[xshift=-2em]:\normalsize (b)}](){$G^{(s)}$};
    & \node(npb){$\underline{\rho}$}; & \node(rpb){$\textcolor{red}{\underline{\rho}}$}; & & \\
    \node(n5t){$\overline{5}$}; & \node(r5tr4t){$\textcolor{red}{(\overline{5},\overline{4})}$}; & \node(n4t){$\overline{4}$}; & & \\
    & & \node(r4b){$\textcolor{red}{\underline{4}}$}; & \node(n4b){$\underline{4}$}; & \\
    \node(n5bn3b){$(\underline{5},\underline{3})$}; & \node(r5b){$\textcolor{red}{\underline{5}}$}; & \node(n3t){$\overline{3}$}; & \node(r3tr2t){$\textcolor{red}{(\overline{3},\overline{2})}$}; & \node(n2t){$\overline{2}$}; \\
    \node(n1t){$\overline{1}$}; & \node(r1t){$\textcolor{red}{\overline{1}}$}; & \node(r3b){$\textcolor{red}{\underline{3}}$}; & \node(r2b){$\textcolor{red}{\underline{2}}$}; & \node(n2b){$\underline{2}$}; \\
    \node(n1b){$\underline{1}$}; & \node(r1b){$\textcolor{red}{\underline{1}}$}; & & & \\
    };

    \matrix [matrix,anchor=north,
                column sep={0.25cm,between borders},
                row sep={0.25cm,between borders},
                nodes={font=\footnotesize,inner sep=1mm}] at (5.5,0)
    {\node[label={[xshift=-2em]:\normalsize (c)}](){\hspace{-1em}$G$}; & & & & \\   
    \node(r5t_2){$\textcolor{red}{\overline{5}}$}; &
    \node(n5t_2){$\overline{5}$}; & \node(npb_2){$\underline{\rho}$}; &
    \node(rpb_2){$\textcolor{red}{\underline{\rho}}$}; & \node(n4t_2){$\overline{4}$}; &
    \node(r4t_2){$\textcolor{red}{\overline{4}}$}; \\
    \node(r5b_2){$\textcolor{red}{\underline{5}}$}; &
    \node(n5b_2){$\underline{5}$}; & \node(n3b_2){$\underline{3}$}; &
    \node(n3t_2){$\overline{3}$}; & \node(n4b_2){$\underline{4}$}; &
    \node(r4b_2){$\textcolor{red}{\underline{4}}$}; \\
    \node(r1t_2){$\textcolor{red}{\overline{1}}$}; &
    \node(n1t_2){$\overline{1}$}; &
    \node(r3b_2){$\textcolor{red}{\underline{3}}$}; &
    \node(r3t_2){$\textcolor{red}{\overline{3}}$}; &
    \node(n2t_2){$\overline{2}$}; & \node(r2t_2){$\textcolor{red}{\overline{2}}$}; \\
    \node(r1b_2){$\textcolor{red}{\underline{1}}$}; &
    \node(n1b_2){$\underline{1}$}; & & & \node(n2b_2){$\underline{2}$}; &
    \node(r2b_2){$\textcolor{red}{\underline{2}}$}; \\
    };

    \matrix [matrix,anchor=north,
                column sep={1.6cm,between borders},
                row sep={1.2cm,between borders},
                nodes={font=\small,fill=gray!20,inner sep=.5mm}] at (-2.12,-5)
    {\node[label={[xshift=-2em]:\normalsize (d)},fill=none]{$\mathcal{U}$}; & &
    \node[label=$\mathcal{C}_3$](c5){$\begin{matrix}\underline{5}, \ \underline{3}, \
      \overline{5}, \ \overline{3} \\
      \textcolor{red}{\underline{5}}, \
      \textcolor{red}{\underline{3}}, \
      \textcolor{red}{\overline{5}}, \
      \textcolor{red}{\overline{3}}\end{matrix}$}; &
    \node[label=$\mathcal{C}_2$](c3){$\begin{matrix}\overline{1}, \ \underline{5}, \
      \underline{3} \\ \textcolor{red}{\overline{1}}, \
      \textcolor{red}{\underline{5}}, \
      \textcolor{red}{\underline{3}}\end{matrix}$}; &[-2em]
    \node[label=$\mathcal{C}_1$](c1){$\begin{matrix}\underline{1}, \ \overline{1} \\
      \textcolor{red}{\underline{1}}, \ \textcolor{red}{\overline{1}}
    \end{matrix}$}; \\ 
    \node[label=$\mathcal{C}_7$](c7){$\begin{matrix}\overline{5}, \ \overline{4}, \
      \underline{\rho} \\ \textcolor{red}{\overline{5}}, \
      \textcolor{red}{\overline{4}}, \ \textcolor{red}{\underline{\rho}}
    \end{matrix}$}; &
    \node[label=$\mathcal{C}_6$](c8){$\begin{matrix}\overline{5}, \ \underline{4}, \
      \overline{4} \\ \textcolor{red}{\overline{5}}, \
      \textcolor{red}{\underline{4}}, \ \textcolor{red}{\overline{4}}
    \end{matrix}$}; &
    \node[label={[xshift=-1em]:$\mathcal{C}_5$}](c4){$\begin{matrix}\overline{5}, \
      \overline{3}, \ \overline{2}, \ \underline{4} \\
      \textcolor{red}{\overline{5}}, \ \textcolor{red}{\overline{3}}, \
      \textcolor{red}{\overline{2}}, \ \textcolor{red}{\underline{4}}
    \end{matrix}$}; & \node[label=$\mathcal{C}_4$](c2){$\begin{matrix}\underline{2}, \
      \overline{2} \\ \textcolor{red}{\underline{2}}, \
      \textcolor{red}{\overline{2}}\end{matrix}$}; & \\};

    \begin{scope}[every node/.style={font=\footnotesize\itshape}]
        \draw[->] (x3) -- node[above left,yshift=-.1em]
            {$\overline{1},\textcolor{red}{\overline{1}}$}(x1);
        \draw[blue,->] (x0) -- node[black,above left,yshift=.5em,xshift=1em]
            {$\overline{5},\textcolor{red}{\overline{5}}$}
            node[left,yshift=.2em]{\scriptsize $\gamma=0.6$}
            node[black,below left,xshift=-1.2em]{$\underline{5},
            \textcolor{red}{\underline{5}}$}(x3);
        \draw[blue,->] (x4) -- node[black,above,xshift=1em]
            {$\overline{3},\textcolor{red}{\overline{3}}$}
            node[black,below left,xshift=0em,yshift=0em]{$\underline{3},
            \textcolor{red}{\underline{3}}$}
            node[below right]{\scriptsize $\gamma=0.4$}(x3);
        \draw[->] (x4) -- node[above right,yshift=-.1em]
            {$\overline{2},\textcolor{red}{\overline{2}}$}(x2);
        \draw[->] (x0) -- node[above right,yshift=0em,xshift=-1em]
            {$\overline{4},\textcolor{red}{\overline{4}}$}
            node[below right,xshift=1em,yshift=0.3em]{$\underline{4},
            \textcolor{red}{\underline{4}}$}(x4);

        \draw[->,red] (npb) -- (r5tr4t);
        \draw[->,red] (rpb) -- (r5tr4t);
        \draw[->,red] (r5tr4t) -- (r4b);
        \draw[->,red] (r4b) -- (r3tr2t);
        \draw[->,red] (n4b) -- (r3tr2t);
        \draw[->,red] (r3tr2t) -- (r2b);
        \draw[->,red] (r3tr2t) -- (r3b);
        \draw[->,red] (r3b) -- (r1t);
        \draw[->,red] (r5b) -- (r1t);
        \draw[->,red] (r1t) -- (r1b);
        \draw[->,red] (r5tr4t) -- (r5b);
        \draw[->,red] (npb) -- (rpb);
        \draw[->,red] (n5t) -- (r5tr4t);
        \draw[->,red] (n4t) -- (r5tr4t);
        \draw[->,red] (n5t) -- (r5b);
        \draw[->,red] (n5bn3b) -- (r5b);
        \draw[->,red] (n4t) -- (r4b);
        \draw[->,red] (n4b) -- (r4b);
        \draw[->,red] (n2t) -- (r2b);
        \draw[->,red] (n2b) -- (r2b);
        \draw[->,red] (n3t) -- (r3tr2t);
        \draw[->,red] (n2t) -- (r3tr2t);
        \draw[->,red] (n3t) -- (r3b);
        \draw[->,red] (n1t) -- (r1b);
        \draw[->,red] (n1b) -- (r1b);
        \draw[->,red,bend left=5] (n5bn3b) to (r3b);
        \draw[<-] (npb) -- (n5t);
        \draw[<-] (npb) -- (n4t);
        \draw[<-] (n4t) -- (n4b);
        \draw[<-] (n4b) -- (n2t);
        \draw[<-] (n2t) -- (n2b);
        \draw[<-] (n4b) -- (n3t);
        \draw[<-,bend right=40] (n3t) to (n5bn3b);
        \draw[<-] (n5t) -- (n5bn3b);
        \draw[<-] (n5bn3b) -- (n1t);
        \draw[<-] (n1t) -- (n1b);

        \draw[->,red] (r4t_2) -- (r4b_2);
        \draw[->,red] (r5t_2) -- (r5b_2);
        \draw[->,red] (r2t_2) -- (r2b_2);
        \draw[->,red] (r1t_2) -- (r1b_2);
        \draw[->,red] (npb_2) -- (rpb_2);
        \draw[->,red] (n1t_2) -- (r1b_2);
        \draw[->,red] (n2t_2) -- (r2b_2);
        \draw[->,red] (n1b_2) -- (r1b_2);
        \draw[->,red] (n2b_2) -- (r2b_2);
        \draw[->,red,bend right=25] (rpb_2) to (r4t_2);
        \draw[->,red,bend left=20] (rpb_2) to (r5t_2);
        \draw[->,red] (n5t_2) -- (r5t_2);
        \draw[->,red,bend right=25] (npb_2) to (r5t_2);
        \draw[->,red,bend left=30] (r5t_2) to (r4t_2);
        \draw[->,red] (n5t_2) -- (r5b_2);
        \draw[->,red] (n5b_2) -- (r5b_2);
        \draw[->,red] (r5b_2) -- (r1t_2);
        \draw[->,red,bend left=50] (r3b_2) to (r1t_2);
        \draw[->,red] (n3b_2) -- (r3b_2);
        \draw[->,red] (n3t_2) -- (r3t_2);
        \draw[->,red] (r3t_2) -- (r3b_2);
        \draw[->,red] (n3t_2) -- (r3b_2);
        \draw[->,red] (n4t_2) -- (r4b_2);
        \draw[->,red] (n4b_2) -- (r4b_2);
        \draw[->,red] (n4b_2) -- (r3t_2);
        \draw[->,red,bend left=3] (r4b_2) to (r3t_2);
        \draw[->,red] (r4b_2) -- (r2t_2);
        \draw[->,red,bend right=50] (r3t_2) to (r2t_2);
        \draw[->] (n1b_2) -- (n1t_2);
        \draw[->] (n2b_2) -- (n2t_2);
        \draw[->] (n4b_2) -- (n4t_2);
        \draw[->] (n5b_2) -- (n5t_2);
        \draw[->] (n5t_2) -- (npb_2);
        \draw[->] (n1t_2) -- (n5b_2);
        \draw[->] (n1t_2) -- (n3b_2);
        \draw[->] (n3b_2) -- (n5b_2);
        \draw[->] (n2t_2) -- (n4b_2);
        \draw[->] (n3b_2) to (n3t_2);
        \draw[->] (n3t_2) to (n4b_2);
        \draw[->,bend right=30] (n4t_2) to (npb_2);
    \end{scope}
    \begin{scope}[every node/.style={font=\small\itshape,fill=orange!30,
      inner sep=.5mm,midway}]
        \draw[-] (c1) -- (c3) node[label={[yshift=1.1em]:$\mathcal{S}_{1,2}$}]
          {$\begin{matrix}\overline{1} \\
          \textcolor{red}{\overline{1}}\end{matrix}$};
        \draw[-] (c3) -- (c5) node[label={[yshift=1.1em]:$\mathcal{S}_{2,3}$}]
          {$\begin{matrix}\underline{5}, \
          \underline{3} \\ \textcolor{red}{\underline{5}}, \
          \textcolor{red}{\underline{3}}\end{matrix}$};
        \draw[-] (c5) -- (c4) node[label={[xshift=-2.3em,yshift=-.5em]:$\mathcal{S}_{3,5}$}]
          {$\begin{matrix}\overline{5}, \
          \overline{3} \\ \textcolor{red}{\overline{5}}, \
          \textcolor{red}{\overline{3}}\end{matrix}$};;
        \draw[-] (c2) -- (c4) node[label={[yshift=1.1em]:$\mathcal{S}_{4,5}$}]
          {$\begin{matrix}\overline{2} \\
          \textcolor{red}{\overline{2}}\end{matrix}$};
        \draw[-] (c4) -- (c8) node[label={[yshift=1.1em]:$\mathcal{S}_{5,6}$}]
          {$\begin{matrix}\overline{5}, \
          \underline{4} \\ \textcolor{red}{\overline{5}}, \
          \textcolor{red}{\underline{4}}\end{matrix}$};
        \draw[-] (c7) -- (c8) node[label={[yshift=1.1em]:$\mathcal{S}_{6,7}$}]
          {$\begin{matrix}\overline{5}, \
          \overline{4} \\ \textcolor{red}{\overline{5}}, \
          \textcolor{red}{\overline{4}}\end{matrix}$};
    \end{scope} 
  \end{tikzpicture}
  \end{center}
  \vspace{-1em}
  \caption{
  (a) Phylogenetic network $N$ with hybrid edges in blue.
  (b) and (c) show two valid graphs for the graphical model associated
  with $N$. Both are DAGs with 2 roots $\underline{1}$ and $\underline{2}$,
  and 2 leaves $\textcolor{red}{\underline{1}}$ and $\textcolor{red}{\underline{2}}$.
  In (b) the graph $G^{(s)}$ is symmetric and has a bivariate node
  for each of
  $(\textcolor{red}{\overline{5},\overline{4}})$,
  $(\textcolor{red}{\overline{3},\overline{2}})$, and
  $(\underline{5},\underline{3})$ to describe their conditional distributions.
  In (c), $G$ is asymmetric but has univariate nodes for each of
  $\textcolor{red}{\overline{5}},\textcolor{red}{\overline{4}},
  \textcolor{red}{\overline{3}},\textcolor{red}{\overline{2}},
  \underline{5},\underline{3}$.
(d) Clique tree $\mathcal{U}$ for $G$ (but not for $G^{(s)}$),
  with clusters $\mathcal{C}_j$ in grey
  and sepsets $\mathcal{S}_{i,j}$ in orange.
  To reduce clutter and simplify notations in this figure,
  $n_{\underline{e}}$ and $r_{\underline{e}}$ are both abbreviated as
  $\underline{e}$ and are distinguished by colours
  ($n$'s in black, $r$'s in red). Similarly,
  $n_{\overline{e}}$ and $r_{\overline{e}}$ are both denoted as
  $\overline{e}$ and distinguished by colours.
  }
  \label{fig:net1SM}
\end{figure}

We show that applying the \texttt{SnappNet} algorithm to the network in
Fig.~\ref{fig:net1SM}(a) is equivalent to BP on the clique tree in
Fig.~\ref{fig:net1SM}(d). To start, we assign the initial beliefs
$\phi_{n_{\overline{1}}}\phi_{r_{\underline{1}}}$ to cluster $\mathcal{C}_1$,
\, $\phi_{n_{\overline{2}}}\phi_{r_{\underline{2}}}$ to  $\mathcal{C}_4$, \,
$\phi_{n_{\underline{\rho}}} \phi_{r_{\underline{\rho}}}\phi_{r_{\overline{5}}}
\phi_{r_{\overline{4}}}$
to $\mathcal{C}_7$; \,
$\phi_{n_{\overline{5}}}\phi_{n_{\overline{3}}} \phi_{r_{\underline{5}}}\phi_{r_{\underline{3}}}$
to $\mathcal{C}_3$, \,
$\phi_{n_{\underline{4}}}\phi_{r_{\overline{3}}}\phi_{r_{\overline{2}}}$ to $\mathcal{C}_5$,
and
$\phi_{n_{\overline{4}}}\phi_{r_{\underline{4}}}$ to $\mathcal{C}_6$.
Finally, all hybridization factors
$\phi_{n_{\underline{5}}}\phi_{n_{\underline{3}}}\phi_{r_{\overline{1}}}$
are assigned to $\mathcal{C}_2$.
Each sepset is assigned an initial belief of 1.
The total allele counts $\mathrm{n}_{\underline{1}}$,
$\mathrm{n}_{\underline{2}}$ (fixed by design)
and the observed red allele counts
$\mathrm{r}_{\underline{1}}$, $\mathrm{r}_{\underline{2}}$ at the tips are
absorbed as evidence into the mutation factors $\phi_{r_{\underline{1}}}$,
$\phi_{r_{\underline{2}}}$ and coalescent factors $\phi_{n_{\overline{1}}}$,
$\phi_{n_{\overline{2}}}$ for the terminal edges. This is denoted as
$\phi[\cdot]$, with $[\cdot]$ containing the evidence absorbed.
BP messages are then passed on the clique tree from $\mathcal{C}_1$ and
$\mathcal{C}_4$ (considered as leaves) towards $\mathcal{C}_7$ (considered as
root) as follows:
\begin{equation*}
  \begin{split}
    \tilde{\mu}_{1\rightarrow 2} &= \phi_{n_{\overline{1}}}[\mathrm{n}_{\underline{1}}]
      \phi_{r_{\underline{1}}}[\mathrm{n}_{\underline{1}},\mathrm{r}_{\underline{1}}]=
      \mathrm{F}_{\overline{1}}(\mathcal{S}_{1,2}) \\
    \tilde{\mu}_{2\rightarrow 3} &= \sum_{\mathcal{C}_2\setminus
      \mathcal{S}_{2,3}}\phi_{n_{\underline{5}}}\phi_{n_{\underline{3}}}
      \phi_{r_{\overline{1}}}\cdot\tilde{\mu}_{1\rightarrow 2}=
      \mathrm{F}_{\underline{5},\underline{3}}(\mathcal{S}_{2,3}) \\
    \tilde{\mu}_{3\rightarrow 5} &= \sum_{\mathcal{C}_3\setminus
      \mathcal{S}_{3,5}}
      \phi_{n_{\overline{5}}}
      \phi_{n_{\overline{3}}}\phi_{r_{\underline{5}}}
      \phi_{r_{\underline{3}}}\cdot\tilde{\mu}_{2\rightarrow 3}=
      \mathrm{F}_{\overline{5},\overline{3}}(\mathcal{S}_{3,5}) \\
    \tilde{\mu}_{4\rightarrow 5} &= \phi_{n_{\overline{2}}}[\mathrm{n}_{\underline{2}}]
      \phi_{r_{\underline{2}}}[\mathrm{n}_{\underline{2}},\mathrm{r}_{\underline{2}}]=
      \mathrm{F}_{\overline{2}}(\mathcal{S}_{4,5}) \\
    \tilde{\mu}_{5\rightarrow 6} &= \sum_{\mathcal{C}_5\setminus
      \mathcal{S}_{5,6}}
      \phi_{n_{\underline{4}}}\phi_{r_{\overline{3}}}\phi_{r_{\overline{2}}}\cdot
      \tilde{\mu}_{3\rightarrow 5}\tilde{\mu}_{4\rightarrow 5}
      =\mathrm{F}_{\overline{5},\underline{4}}(\mathcal{S}_{5,6}) \\
    \tilde{\mu}_{6\rightarrow 7} &= \sum_{\mathcal{C}_6\setminus
      \mathcal{S}_{6,7}}
      \phi_{n_{\overline{4}}}
      \phi_{r_{\underline{4}}}\cdot\tilde{\mu}_{5\rightarrow 6}=
      \mathrm{F}_{\overline{5},\overline{4}}(\mathcal{S}_{6,7}) \\
    \beta_7^\text{final} &= \phi_{n_{\underline{\rho}}}\phi_{r_{\underline{\rho}}}
      \phi_{r_{\overline{5}}}\phi_{r_{\overline{4}}}\cdot
      \tilde{\mu}_{6\rightarrow 7} \\
    \sum_{\mathcal{C}_7}\beta_7^\text{final} &=
      \sum_{\mathcal{C}_7\setminus\mathcal{S}_{6,7}}\phi_{r_{\underline{\rho}}}
      \sum_{\mathcal{S}_{6,7}}\phi_{n_{\underline{\rho}}}\phi_{r_{\overline{5}}}
      \phi_{r_{\overline{4}}}\cdot\tilde{\mu}_{6\rightarrow 7}=\sum_{\mathcal{C}_7
      \setminus\mathcal{S}_{6,7}}\phi_{r_{\underline{\rho}}}
      \mathrm{F}_{\underline{\rho}}(\mathcal{C}_7\setminus\mathcal{S}_{6,7})
  \end{split}
\end{equation*}
where the $\mathrm{F}_{\mathbf{z}}$ functions are defined in
\citet{2021Rabier-snappnet} for different population interface sets $\mathbf{z}$
(each population interface is the top or bottom of some branch, e.g.\
$\overline{\rho}$, $\underline{\rho}$). That this correspondence does not always
hold (e.g.\ suppose the messages were sent towards $\mathcal{C}_1$ instead),
highlights that BP is more general. The $\mathrm{F}_{\mathbf{z}}$s
recursively compose the likelihood according to rules described in
\citet{2021Rabier-snappnet}, and can be expressed at the top-level as
$\sum_{n_{\underline{\rho}},r_{\underline{\rho}}}\phi_{r_{\underline{\rho}}}
\mathrm{F}_{\underline{\rho}}$. This is precisely the quantity from marginalizing
$\beta^\text{final}_7$ above, the final belief of $\mathcal{C}_7$.

\section{Bounding the moralized network's treewidth}

\begin{proof}[Proof of Proposition~1]
Using the notations in the main text,
let $N$ be as a binary phylogenetic network with $h$ hybrid nodes, level $\ell$,
no parallel edges and no degree-2 nodes other than the root.
Let $t$ be the treewidth of the moralized graph $N^m$ obtained from $N$.

(A0) is well-known: $t=1$ exactly when $N$ is a tree.
If $\ell=1$ then $N$ has at least one non-trivial blob and every such blob
is a cycle. So $N^m$ has outerplanar blobs and
$t=2$ \cite{biedl2015triangulating}, proving (A1).

Now consider hybrid nodes $v_1$ and $v_2$ as in (A2).
Let $u_3,u_4$ be the parents of $v_2$ such that $u_3$ is not a descendant of
$u_1$ or $u_2$
(see Fig.~4, in which $u_4=v_1$). Then there must be a path $p_4$ from $v_1$ to $v_2$ through $u_4$
since $v_2$ is a descendant of $v_1$.
For a directed path $p$, let $p^u$ denote the corresponding undirected path.
Let $w$ be a strict common ancestor of $u_1$ and of $u_2$ such that there exist
disjoint paths $p_1$, $p_2$, with $p_i$ from $w$ to $u_i$. Such $w$ exists because $u_1$ has a parent other than $u_2$ and vice versa.
Let $C$ be the cycle in $N^m$ formed by concatenating $p^u_1$, $p^u_2$ and the
moral edge $\{u_1,u_2\}$.
Next, pick any path $p$ from the root of $N$ to $u_3$.
If $p$ does not share any node with $C$, then we can find a common ancestor
$\tilde{w}$ of $w$ and $u_3$, and
paths $p_w$ and $p_3$ from $\tilde{w}$ to $w$ and $u_3$ respectively,
that do not intersect $p_1$ nor $p_2$. Then we can see that
$N^m$ contains the complete graph on $\{w,u_1,u_2,v_1\}$ as a graph minor,
by contracting $p^u_w + p^u_3 + \{u_3,v_2\} + p^u_4$ into a single edge
between $w$ and $v_1$.
If instead $p$ intersects $C$, then let $w'$ be the lowest node at which $p$
and $C$ intersect.
Then $w'\neq u_1$ because otherwise $u_3$ would be its descendant.
Similarly $w'\neq u_2$.
Let $p_3$ denote the subpath of $p$ from $w'$ to $u_3$.
Then $N^m$ contains the complete graph on $\{w',u_1,u_2,v_1\}$ as a graph minor,
as $C$ can be contracted into the cycle $\{w',u_1,u_2\}$ and
$p^u_3 + \{u_3,v_2\} + p^u_4$ can be contracted into an edge $\{w',v_1\}$.
In both cases, $N^m$ contains the complete graph on 4 nodes as
a graph minor, therefore its treewidth is $t \ge 3$ \cite{bodlaender1998partial}.
Also, in both cases $v_1$ and $v_2$ are in a common undirected
cycle in $N$, so in the same blob and $\ell\geq 2$.
\end{proof}

\clearpage

\section{Approximation quality with loopy BP}\label{sec:approx_loopy_BP}
\begin{figure}[!h]
  \centering
  \includegraphics[scale=0.65]{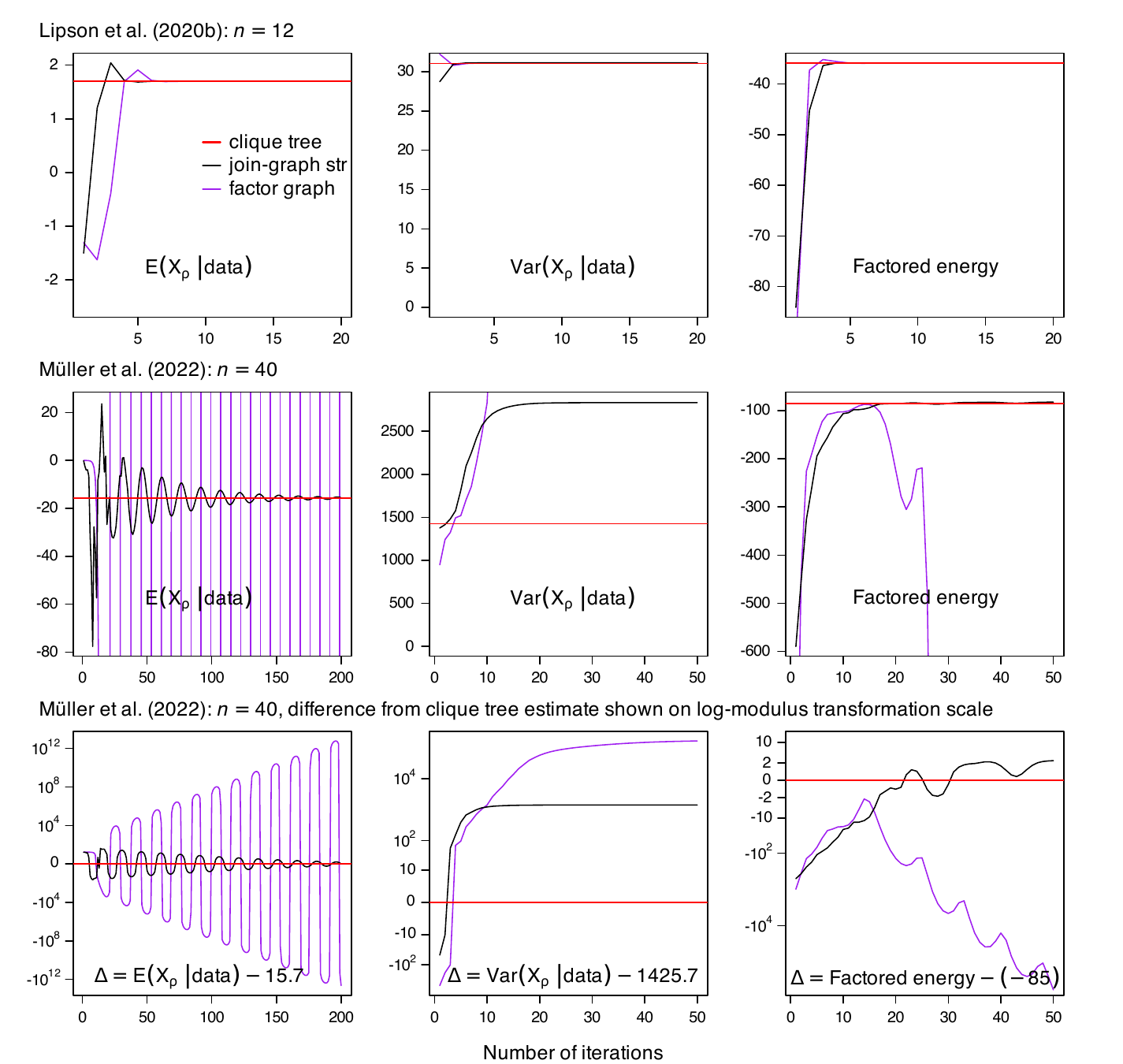}
  \caption{Comparing the accuracy of loopy BP between different cluster graphs:
  built from join-graph structuring $\mathcal{U}^*$ as in Fig.~6 (black),
  or a factor graph (purple).
  For both, initial beliefs are regularized using algorithm~\ref{alg:R2}.
  The true values, obtained using a clique tree, are shown in red.
  The plots in the first row are for the simpler phylogenetic network,
  and the plots in the other rows are for the complex phylogeny.
  The last row shows the difference $\Delta$ between the loopy BP estimate and
  the true value,
  displayed on the log-modulus scale using the transformation
  $\mbox{sign}(\Delta)\log(1+|\Delta|)$.
  For the simpler phylogenetic network, convergence speed and accuracy are similar between
  $\mathcal{U}^*$ and the factor graph, which is unsurprising given their similarly small cluster
  sizes ($\le 3$).
  For the complex network, the factor graph did not reach calibration as its
  iterates diverged for the conditional mean and factored energy.
  }\label{SMfig:loopyBPapproxfactorgraph}
\end{figure}

\begin{figure}[H]
\begin{minipage}[c]{0.45\textwidth}
  \centering
  \includegraphics[scale=0.5]{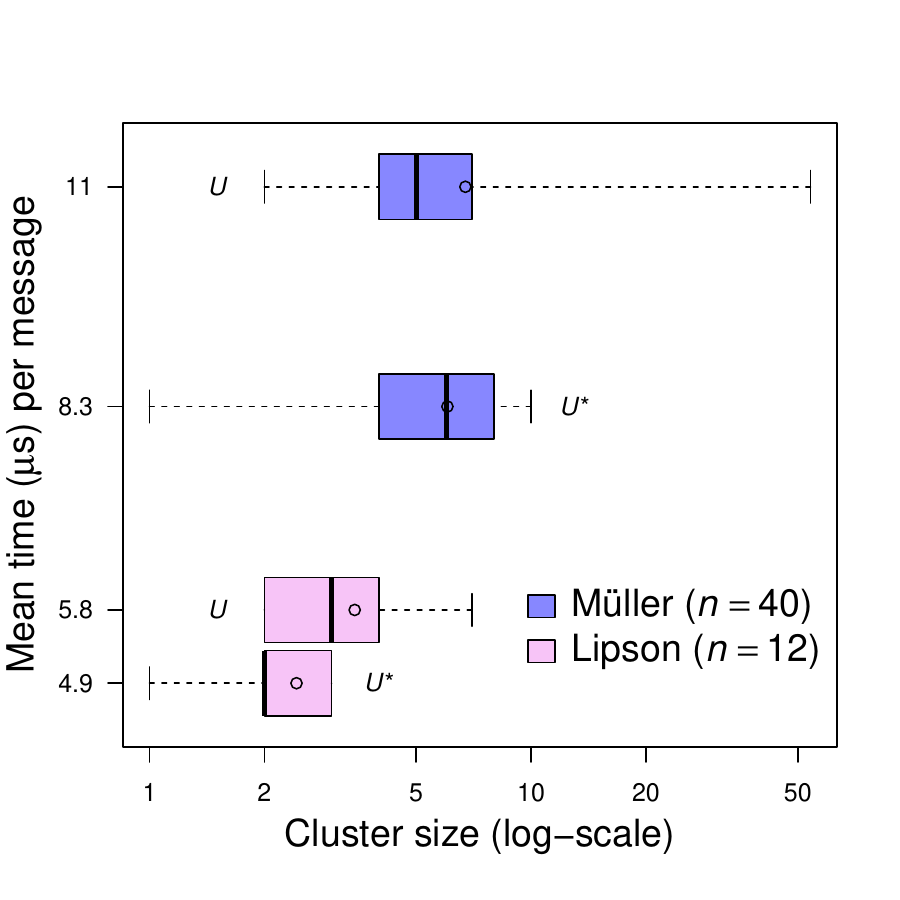}
\end{minipage}\hfill\begin{minipage}[c]{0.5\textwidth}
  \caption{Boxplots (with means as points) showing the distribution of
  cluster sizes in the join-graph structuring cluster graph
  $\mathcal{U}^*$ and in the clique tree $\mathcal{U}$ from Fig.~6. The factor graph has clusters of size between 1 and 3 (not displayed).
The time for 100 iterations (defined in Fig.~6) was benchmarked over 20 replicates on a MacBook Pro M2 2022,
and divided by the number of messages per 100 iterations to obtain an estimate
  of the mean time per belief update (vertical axis).
  }\label{SMfig:loopyBPapproxclustersizes}
\end{minipage}
\end{figure}

\section{Gradient and parameter estimates under the BM}\label{sec:gradient_BM}

\newcommand{\partialDerBis}[2]{\frac{\partial}{\partial #1} #2 }
\newcommand{\partialDerPrime}[2]{\left. \frac{\partial #2}{\partial #1'} \right|_{#1'=#1}}
\newcommand{\partialDerPrimeBis}[2]{\left. \frac{\partial}{\partial #1'} #2 \right|_{#1'=#1}}
\newcommand{\crossprod}[1]{\transpose{#1}#1}
\newcommand{\tcrossprod}[1]{#1\transpose{#1}}
\newcommand{\mtov}[0]{\mathrm{vec}}
\newcommand{\nodePos}[1]{#1}
\newcommand{\nodePosComplex}[1]{k_{\mathcal{C}}(#1)}
\newcommand{\nodePrecision}[1]{\branchLength{#1}^{-1} \varBMmm^{-1}}
\newcommand{\nodePrecisionPrime}[1]{\branchLength{#1}'^{-1} \varBMmm'^{-1}}
\newcommand{\nodeVariance}[1]{\branchLength{#1} \varBMmm}
\newcommand{\nodeMean}[1]{E_{#1}}
\newcommand{\nodeVarianceSmall}[1]{[\mathbf{J}_{#1}^{-1}]_{11}}
\newcommand{\nodeCovarianceSmall}[3]{[\mathbf{J}_{#1}^{-1}]_{\nodePos{#2}\nodePos{#3}}}
\newcommand{\nodeRemainder}[1]{\mathbf{g}_{#1}}
\newcommand{\nodeMeanDown}[1]{\nodeMean{#1}}
\newcommand{\nodeMeanDownTrans}[2]{\transpose{\nodeMeanDown{#1}}}
\newcommand{\nodeVarianceSmallDown}[1]{\nodeVarianceSmall{#1}}
\newcommand{\nodeCovarianceSmallDown}[3]{\nodeCovarianceSmall{#1}{#2}{#3}}
\newcommand{\gammauv}[1]{\gamma_{#1v}}
\def\allTraitOne{X}
\newcommand{\allTrait}[1]{\allTraitOne_{#1}}
\newcommand{\allTraitm}[1]{\allTraitOne_{#1}}
\newcommand{\allTraitPar}[1]{\allTraitOne_{\pa(#1)}}
\newcommand{\allTipTrait}[0]{Y}
\def\rootMean{\mu}
\def\rootMeanm{\mu}
\def\varBMmOne{\Sigma}
\def\varBMmm{\bm{\varBMmOne}}
\def\dimTrait{p}
\def\dimCluster{s}

\newcommand{\conditional}[2]{\left.#1\mathrel{}\middle|\mathrel{}#2\right.}
\newcommand{\Normal}[2]{\mathcal{N}\hspace*{-0.2em}\left(#1;\;#2\right)}
\newcommand{\densGaussian}[3]{\phi\left(#1; #2, #3\right)}
\newcommand{\vvect}[1]{#1}
\newcommand{\matr}[1]{\mathbf{#1}}
\newcommand{\kro}{\otimes}
\newcommand{\card}[1]{\left\lvert#1\right\rvert}
\newcommand{\cbigdot}{\bullet}

\def\tipTrait{Y}
\def\tipTraitVect{\vect{V}}
\def\fixedEffectsOne{\beta}
\def\fixedEffects{\vvect{\fixedEffectsOne}}
\def\regressor{\matr{U}}
\def\regressorTrans{\regressor^{\text{tree}}}
\def\phyloEffects{\vect{E}^{\mathrm{phy}}}
\def\phyloParams{\vvect{\nu}^{\text{phy}}}
\def\hatPhyloParams{\hat{\vvect{\nu}}^{\text{phy}}}
\def\BMEffects{\vect{E}^{\text{BM}}}
\def\iidEffects{\vect{E}^{\text{iid}}}
\def\errors{\vect{E}}
\def\nEffects{q}
\def\varBM{\sigma^2}

\subsection{The homogeneous BM model}

We consider here the simple case of a multivariate BM of dimension $\dimTrait$
on a network: with $\bm{V}_v=\branchLength{e} \varBMmm$ at
a tree node $v$ with parent edge $e$.
At a hybrid node, we assume
a weighted average merging rule as in (3.2) with a possible extra hybrid variance proportional to $\varBMmm$:
$\widetilde{\bm{V}}_h = \widetilde{\ell}(h) \varBMmm$
for some scalar $\widetilde{\ell}(h)\geq 0$.
To simplify equations, we define $\widetilde{\ell}(v)=0$ if $v$ is a tree node.
Then for each node $v \in V$ the Gaussian linear model~(3.1) simplifies to:
\begin{equation}\label{eq:BM_factors}
    \conditional{\allTraitm{v}}{\allTraitPar{v}} 
    \sim 
    \Normal{\sum_{u \in \pa(v)} \gammauv{u} \allTraitm{u}}{\branchLength{v} \varBMmm},
\end{equation}
with $\gammauv{u}$ the inheritance probability associated with the branch
going from $u$ to $v$, and
\[
  \branchLength{v} =
  \widetilde{\ell}(v) + \sum_{u \in \pa(v)} \gammauv{u}^2 \branchLength{e_{u \to v}}
  \;.
\]
At the root, we assume a prior variance proportional to $\varBMmm$:
$\allTraitm{\rho} \sim \Normal{\mu_\rho}{\branchLength{\rho} \varBMmm}$
which may be improper (and degenerate) with infinite variance
$\branchLength{\rho}=\infty$ or $\branchLength{\rho}=0$.

This model can also accommodate within-species variation,
by considering each individual as one leaf in the phylogeny, whose parent node corresponds to the species to
which the individual belongs. The edge $e$ from the species to the individual is assigned length
$\branchLength{e} = w$ and variance proportional to $\varBMmm$ conditional on the parent node
(species average): $\branchLength{e} \varBMmm$.
This model, then, assumes equal phenotypic (within species) correlation
and evolutionary (between species) correlation between the $\dimTrait$ traits.
The derivations below assume a fixed variance ratio $w$, to be estimated
separately.

All results in this section use this homogeneous BM model,
and make the following assumption.
\begin{assumption}\label{ass:nomissing}
  At each leaf, the trait vector (of length $\dimTrait$)
  is either fully observed or fully missing,
  i.e.\ there are no partially observed nodes.
\end{assumption}

\subsection{Belief Propagation}

Gaussian BP Algorithm~2 can be applied in the simple BM case
to get the calibrated beliefs, with two traversals of a clique tree
(or convergence with infinitely many traversals of a cluster graph).
The following result states that the conditional moments of all the nodes
obtained from this calibration have a very special form.
We will use it to derive analytical formulas for the maximum likelihood estimators
of the parameters of the BM.

\begin{proposition}\label{prop:conditionalmoments}
  Assume the homogeneous BM~\eqref{eq:BM_factors} and
  Assumption~\ref{ass:nomissing}.
  The expectation of the trait at each node conditional on the observed data
  does not depend on the assumed $\varBMmm$ parameter.
  In addition, the conditional variance matrix
  and the conditional covariance matrix of a node trait and any of its parent's
  is proportional to $\varBMmm$.
\end{proposition}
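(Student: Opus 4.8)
The plan is to exploit the Kronecker-product structure that the homogeneous BM imposes on the joint law of all node traits, and then read off both claims from the Gaussian conditioning formulas, in which $\bm{\Sigma}$ either cancels or factors out cleanly.

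First I would stack the $p$-dimensional traits of all $|V|$ nodes into a single vector $X=(X_v)_{v\in V}$ in topological order. Because each factor is linear Gaussian, the recursions~\eqref{eq:BM_factors} together with the root prior can be written as a Gaussian structural equation $X=(A\otimes I_p)\,X+b+\varepsilon$, where $A$ is the $|V|\times|V|$ matrix of inheritance weights $\gamma_{uv}$ (independent of $\bm{\Sigma}$), $b=e_\rho\otimes\mu_\rho$ carries the root mean in the root block and is zero elsewhere (there is no trend, so $\omega_v=\bm 0$), and $\varepsilon$ has independent blocks $\varepsilon_v\sim\mathcal N(\bm 0,\branchLength{v}\bm{\Sigma})$, i.e.\ $\cov(\varepsilon)=D\otimes\bm{\Sigma}$ with $D=\operatorname{diag}(\branchLength{v})$. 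Since $I-(A\otimes I_p)=(I-A)\otimes I_p$ is block-lower-triangular with identity diagonal, it is invertible and $X=[(I-A)^{-1}\otimes I_p](b+\varepsilon)$. Applying the mixed-product identity $(M_1\otimes N_1)(M_2\otimes N_2)=(M_1M_2)\otimes(N_1N_2)$ repeatedly, the joint law is $X\sim\mathcal N(\mu^\ast,\ \Omega\otimes\bm{\Sigma})$ with mean $\mu^\ast=c\otimes\mu_\rho$, where $c=[(I-A)^{-1}]_{\cdot\,\rho}$, and $\Omega=(I-A)^{-1}D(I-A)^{-\top}$; equivalently the precision is $P\otimes\bm{\Sigma}^{-1}$ with $P=(I-A)^{\top}D^{-1}(I-A)$. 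The crucial point is that $\mu^\ast$, $\Omega$ and $P$ depend only on the topology, the branch scalars $\branchLength{\cdot}$ and the inheritance weights, never on $\bm{\Sigma}$.

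Next I would condition on the observed data $Y$. By Assumption~\ref{ass:nomissing} the observed coordinates are exactly the blocks of a subset $O\subseteq V$ of nodes, with unobserved set $U=V\setminus O$; hence the split refines to whole Kronecker blocks, and $P_{UU},P_{UO}$ are obtained simply by restricting $P$ to the corresponding node index sets. Conditioning in the precision parametrization gives $\cov(X_U\mid Y)=(P_{UU})^{-1}\otimes\bm{\Sigma}$ and $\E(X_U\mid Y)=\mu^\ast_U-\big[(P_{UU})^{-1}P_{UO}\otimes I_p\big](Y-\mu^\ast_O)$, since $(P_{UU}^{-1}\otimes\bm{\Sigma})(P_{UO}\otimes\bm{\Sigma}^{-1})=(P_{UU}^{-1}P_{UO})\otimes I_p$ by the mixed-product identity — so $\bm{\Sigma}$ cancels in the mean-shift matrix and $\E(X_U\mid Y)$ does not involve $\bm{\Sigma}$, proving the first claim. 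For the second, the $(v,v)$ block of $(P_{UU})^{-1}\otimes\bm{\Sigma}$ is $[(P_{UU})^{-1}]_{vv}\,\bm{\Sigma}$ and the $(v,u)$ block is $[(P_{UU})^{-1}]_{vu}\,\bm{\Sigma}$; taking $u\in\pa(v)$ (and reading a zero block $0\cdot\bm{\Sigma}$ whenever $u$ or $v$ lies in $O$) shows every conditional variance and every node--parent conditional covariance is a scalar multiple of $\bm{\Sigma}$. Since calibrated clique-tree beliefs coincide with the exact posterior (Section~\ref{subsec: BP for GBN}), these are precisely the moments returned by Algorithm~\ref{alg:gbp}.

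The main obstacle is the degenerate regimes flagged in the model: an improper root ($\branchLength{\rho}=\infty$), a fixed root ($\branchLength{\rho}=0$), or zero-length internal edges, where $\Omega$ or $P$ acquires infinite entries and the joint law is no longer a proper, nondegenerate Gaussian. For $\branchLength{\rho}=\infty$ I would note that $D^{-1}$ merely has a zero in the root entry, so $P$ stays finite and the argument goes through verbatim provided $P_{UU}$ is invertible (a proper posterior). The deterministic cases I would handle by absorbing each deterministically-determined node into a conditioning step: a fixed root is conditioned at $\mu_\rho$, while a zero-length tree or hybrid edge yields a deterministic linear relation (e.g.\ $X_v=X_{\pa(v)}$) that can be eliminated by substitution, leaving a reduced system that is again a homogeneous BM with the same Kronecker structure; alternatively one can pass to the limit $\branchLength{}\to 0^+$ or use the generalized (Dirac) canonical forms discussed in the Degeneracy section. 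Ensuring that this reduction preserves the Kronecker factorization under all combinations of degeneracies is the one place requiring care.
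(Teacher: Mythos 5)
Your argument is correct in the non-degenerate case, but it takes a genuinely different route from the paper. The paper proves Proposition~\ref{prop:conditionalmoments} via Lemma~\ref{lemma:kroneckerform}, an induction over every belief-propagation operation (factor initialization, scope extension, product/quotient, evidence absorption, marginalization), showing that each cluster and sepset belief retains canonical parameters of the Kronecker form $\bm{K}=\matr{J}\kro\bm{\Sigma}^{-1}$, $h=(\matr{I}\kro\bm{\Sigma}^{-1})\mtov(\matr{M})$ with $\matr{J}$ and the data-weights independent of $\bm{\Sigma}$; the proposition is then read off from the calibrated beliefs. You instead work globally: writing the model as a Gaussian structural equation $X=(A\kro I_p)X+b+\varepsilon$ with $\cov(\varepsilon)=D\kro\bm{\Sigma}$, deriving the joint precision $P\kro\bm{\Sigma}^{-1}$ with $P=(I-A)^{\top}D^{-1}(I-A)$, and applying block Gaussian conditioning, where Assumption~\ref{ass:nomissing} guarantees the observed/unobserved split respects whole Kronecker blocks so that $\bm{\Sigma}$ cancels in the mean and factors out of the covariance. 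Your route is shorter and more transparent for the proposition as stated, which concerns the \emph{exact} conditional moments. What the paper's inductive lemma buys in exchange is information about the \emph{intermediate} BP quantities on an arbitrary (possibly loopy) cluster graph at every iteration --- the $\matr{J}$, $\matr{M}$ and weight vectors $w_i$ that feed Corollaries~\ref{cor:getJM-using-IdentifySigma} and~\ref{cor:get-wi-foranytrait} and the closed-form REML/regression estimates; your joint-distribution shortcut does not describe those per-iteration beliefs. Two points in your write-up deserve care: (i) when some internal $\ell(v)=0$ the precision parametrization $P=(I-A)^{\top}D^{-1}(I-A)$ is undefined, so your substitution/elimination step is not optional but essential, and you should verify that eliminating a deterministic node preserves both the homogeneous-BM form~\eqref{eq:BM_factors} and Assumption~\ref{ass:nomissing} (it does, as in SM section~F, but this is where the bookkeeping lives); (ii) for $\ell(\rho)=\infty$ your observation that $D^{-1}$ merely acquires a zero root entry is correct, matching the paper's $\matr{J}=[0]$ root factor, provided $P_{UU}$ remains invertible.
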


To prove this proposition, we need the following technical lemma,
which we prove later.

\begin{lemma}\label{lemma:kroneckerform}
  Consider a homogeneous $\dimTrait$-dimensional BM
  on a network and Assumption~\ref{ass:nomissing}.
  At each iteration of the
  calibration, each cluster and sepset of $s$ nodes
  has a belief whose canonical parameters
  are of the form:
  \begin{equation}\label{eq:kroneckerform}
    \matr{K} = \matr{J} \kro \varBMmm^{-1}
    \quad\mbox{ and }\quad
    h = (\matr{I}_\dimCluster \kro \varBMmm^{-1}) \begin{bmatrix}\vect{m}_1\\ \vdots \\\vect{m}_\dimCluster \end{bmatrix}
    =(\matr{I}_\dimCluster \kro \varBMmm^{-1}) \mtov(\matr{M}) = \mtov(\varBMmm^{-1} \matr{M})
  \end{equation}
  for some $\dimCluster\times \dimCluster$ symmetric matrix $\matr{J}$
  and
  vectors $\vect{m}_i$ ($i=1\ldots \dimCluster$) of size $\dimTrait$,
  where
  $\matr{M}$ is the $\dimTrait \times \dimCluster$ matrix
  with $\vect{m}_i$ on column $i$,
  and where $\mtov$ denotes the vectorization operation formed by stacking columns.
  Further, $\matr{M}$ depends linearly on the data $\allTipTrait$
  (from stacking the trait vectors at the tips) and
  $\mu_\rho$, separately across traits, in the sense that
  \begin{equation}\label{eq:kroneckerform-mi}
  \vect{m}_i = (w_i \kro \matr{I}_\dimTrait)
  \begin{bmatrix}\mu_\rho\\ \allTipTrait \end{bmatrix}.
  \end{equation}
  for some $1\times (n+1)$ vector of weights $w_i$.
  $\matr{J}$ and vectors $w_i$ ($i=1,\ldots,s$)
  are independent of the variance rate $\varBMmm$, the data $Y$ and $\mu_\rho$.
  They only depend on the network, the chosen cluster graph,
  the chosen cluster or sepset in this graph,
  and the iteration number.
\end{lemma}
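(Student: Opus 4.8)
The plan is to establish the claim by induction on the sequence of belief-propagation updates, taking the asserted form itself as the inductive invariant: at every stage each cluster and sepset belief has precision $\matr{K}=\matr{J}\kro\varBMmm^{-1}$ and potential $h=\mtov(\varBMmm^{-1}\matr{M})$ with $\matr{M}=\matr{D}\transpose{W}$, where $\matr{D}$ is the $\dimTrait\times(n+1)$ matrix whose columns are $\mu_\rho$ and the observed tip vectors, $\matr{J}$ is symmetric, and $W$ (the $\dimCluster\times(n+1)$ matrix with rows $w_i$) together with $\matr{J}$ are free of $\varBMmm$, $\allTipTrait$ and $\mu_\rho$. Writing $\matr{M}=\matr{D}\transpose{W}$ makes precise that the weights combine only the columns of $\matr{D}$, i.e.\ act on the data index and leave the traits uncoupled, which is the ``separately across traits'' linearity of \eqref{eq:kroneckerform-mi}. (Only $\matr{K}$ and $h$ need tracking; $g$ is irrelevant here.) Throughout I would order the $\dimCluster\dimTrait$ coordinates of a cluster node-major --- all $\dimTrait$ traits of the first node, then of the second, and so on --- so that selecting a subset of nodes extracts the corresponding blocks of a Kronecker product. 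The three identities doing all the work are $(\matr{A}\kro\matr{B})(\matr{C}\kro\matr{D})=\matr{A}\matr{C}\kro\matr{B}\matr{D}$, $\inverse{(\matr{A}\kro\matr{B})}=\inverse{\matr{A}}\kro\inverse{\matr{B}}$, and $(\matr{A}\kro\matr{B})\mtov(\matr{C})=\mtov(\matr{B}\matr{C}\transpose{\matr{A}})$, together with bilinearity of $\kro$.

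For the base case I would check that each initial belief \eqref{eq:clusterfactor} already has the form. Writing the BM actualization as $\bm{q}_v=\transpose{\bm{\gamma}_v}\kro\matr{I}_\dimTrait$ with inheritance-weight vector $\bm{\gamma}_v$ and $\bm{V}_v=\branchLength{v}\varBMmm$, the canonical form \eqref{eq:canonicalformGGM} collapses to $\matr{K}_v=\big(\branchLength{v}^{-1}\tcrossprod{\bm{a}_v}\big)\kro\varBMmm^{-1}$ with $\bm{a}_v=\transpose{[\,1,\,-\transpose{\bm{\gamma}_v}\,]}$, while $h_v=0$ because $\omega_v=\bm{0}$; the root prior and each data-absorbed leaf factor are identical computations, the leaf contributing one column of $\matr{D}$ to $\matr{M}$. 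Forming the cluster belief as a product of its assigned factors, and lifting each factor to the full cluster scope by the zero-padding ``ext'' operation, preserve the form by bilinearity (a zero block is $\matr{0}\kro\varBMmm^{-1}$): the $\matr{J}$'s add and the $W$'s stack. I would remark that an improper root prior ($\branchLength{\rho}=\infty$) just gives $\matr{J}=\matr{0}$, and point the genuinely deterministic cases ($\branchLength{\rho}=0$ or $\bm{V}_v=\bm{0}$) to the degenerate-factor machinery cited in the main text.

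For the inductive step the only substantive operation is the marginalization of step~1 of Gaussian BP (Algorithm~\ref{alg:gbp}). Partitioning the sending belief's $\matr{J}$ into sepset, integrated and cross blocks $\matr{J}_\mathrm{S},\matr{J}_\mathrm{I},\matr{J}_{\mathrm{S},\mathrm{I}}$ (legitimate by the node-major ordering), the first two identities give $\matr{K}_{i\to j}=(\matr{J}_\mathrm{S}-\matr{J}_{\mathrm{S},\mathrm{I}}\inverse{\matr{J}_\mathrm{I}}\matr{J}_{\mathrm{I},\mathrm{S}})\kro\varBMmm^{-1}$, the key point being the cancellation $\varBMmm^{-1}\varBMmm\varBMmm^{-1}=\varBMmm^{-1}$ that keeps a single right factor. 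The parallel computation for $h$, using the third identity and symmetry of $\matr{J}_\mathrm{I}$, yields $\matr{M}_{i\to j}=\matr{M}_\mathrm{S}-\matr{M}_\mathrm{I}\inverse{\matr{J}_\mathrm{I}}\matr{J}_{\mathrm{I},\mathrm{S}}=\matr{D}\,(\transpose{W_\mathrm{S}}-\transpose{W_\mathrm{I}}\inverse{\matr{J}_\mathrm{I}}\matr{J}_{\mathrm{I},\mathrm{S}})$, so the message carries an updated weight matrix that is again free of $\varBMmm$ and of the data. Steps~2 and~3 only add, subtract and ext beliefs already of Kronecker form, so the invariant survives by bilinearity. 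Because the resulting recursions for $\matr{J}$ and $W$ feed only on branch lengths, inheritance weights and previous $\matr{J}$'s and $W$'s, they depend on nothing but the network, the cluster graph, the chosen cluster or sepset, and the iteration number --- exactly as claimed.

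The main obstacle I anticipate is bookkeeping rather than ideas: keeping the node-major ordering consistent so that block extraction, $\mtov$, and the ``ext'' padding all commute with $\kro$, and pushing $\matr{M}$ through $\mtov$ via $(\matr{A}\kro\matr{B})\mtov(\matr{C})=\mtov(\matr{B}\matr{C}\transpose{\matr{A}})$ without a stray transpose. A secondary issue to address explicitly is well-definedness of step~1: it needs $\matr{K}_\mathrm{I}=\matr{J}_\mathrm{I}\kro\varBMmm^{-1}$ invertible, which --- since $\varBMmm\succ0$ --- holds precisely when $\matr{J}_\mathrm{I}$ is invertible, a condition independent of $\varBMmm$. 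Thus the degeneracies discussed in the main text are governed entirely by the $\varBMmm$-free factor $\matr{J}$, and the induction is run along any schedule for which every marginalized message is defined.
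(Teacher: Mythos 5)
Your proposal is correct and follows essentially the same route as the paper's proof: an induction over the belief-propagation operations (factor and belief initialization, scope extension, product/quotient, evidence absorption, marginalization) with the Kronecker form $\matr{K}=\matr{J}\kro\varBMmm^{-1}$, $h=\mtov(\varBMmm^{-1}\matr{M})$ and data-linear $\matr{M}$ as the inductive invariant, the Schur complement in $\matr{J}$ carrying the marginalization step. The only differences are cosmetic: your reformulation $\matr{M}=\matr{D}\transpose{W}$ compactly packages \eqref{eq:kroneckerform-mi}, and you defer the $\branchLength{\rho}=0$ root to degenerate-factor machinery where the paper instead absorbs $X_\rho=\mu_\rho$ as evidence --- a case your own evidence-absorption step already covers.
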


\begin{proof}[Proof of Proposition~\ref{prop:conditionalmoments}]
  First note that, for any belief
  with form \eqref{eq:kroneckerform}, the mean of the
  associated normalized Gaussian distribution can be expressed as follows.
  Let the vector $\vvect{\mu}_j$ of size $\dimTrait$ be the
  mean for the node indexed $j$.
  Then
  \[
    \begin{bmatrix}\vvect{\mu}_1\\ \vdots \\\vvect{\mu}_\dimCluster \end{bmatrix} 
    = \matr{K}^{-1} h 
    = (\matr{J}^{-1} \kro \matr{I}_p) \begin{bmatrix}\vect{m}_1\\ \vdots \\\vect{m}_\dimCluster \end{bmatrix} 
    =(\matr{J}^{-1} \kro \matr{I}_p) \mtov(\matr{M}) = \mtov(\matr{M} \matr{J}^{-1})
    \,.
  \]
  Let $\matr{E}$ be the $\dimTrait\times s$ matrix of means
  with $\vvect{\mu}_j$ on column $j$.
  Then the expression above simplifies to
  \[\matr{E} = \matr{M} \, \matr{J}^{-1}\;.\]

  Assume Lemma \ref{lemma:kroneckerform},
  from which we re-use notations here.
  Let $\mathcal{C}$ be a cluster, and $\matr{K}$, $\matr{J}$ and
  $\matr{M}$ be its matrices from \eqref{eq:kroneckerform}.
  For any node $v$ in $\mathcal{C}$,
  let $\nodePosComplex{v}$ be the index of $v$
  in $\mathcal{C}$'s matrices.
 Then, writing
$[\matr{J}^{-1}]_{\cbigdot k}$ for the $k^{th}$ 
 column vector
 of $\matr{J}^{-1}$, we get:
  \begin{equation}\label{eq:conditional_expectations}
    \begin{aligned}
      \E\left[\cond{\allTraitm{v}}{\allTipTrait}\right] 
      = \mu_{\nodePosComplex{v}}
      = \matr{M} [\matr{J}^{-1}]_{\cbigdot\nodePosComplex{v}}
      = E_{v},
      \end{aligned}
  \end{equation}
  where $E_v$ denotes the column of $\matr{E}$ for node $v$ (i.e.\ the
  conditional expectation of its trait), and
  does not depend on $\varBMmm$. Assuming that calibration is reached, $E_v$ does not
  depend on the cluster $\mathcal{C}$ (or sepset) containing $v$.
  Further, note that~\eqref{eq:conditional_expectations} is exact on any cluster graph
  at calibration, not simply approximate,
  because we are using a Gaussian graphical model
  \citep{weiss1999correctness}.

  Similarly, for nodes $u,v$ in $\mathcal{C}$:
  \begin{equation}\label{eq:conditional_covariances}
    \begin{aligned}
    \var\left[\cond{\allTraitm{v}}{\allTipTrait}\right] 
    &= [\matr{K}^{-1}]_{\nodePosComplex{v}\nodePosComplex{v}}
    &&= [\matr{J}^{-1}]_{\nodePosComplex{v}\nodePosComplex{v}}\varBMmm,\\
    \cov\left[\cond{\allTraitm{v}, \allTraitm{u}}{\allTipTrait}\right]
    &= [\matr{K}^{-1}]_{\nodePosComplex{v}\nodePosComplex{u}}
    &&= [\matr{J}^{-1}]_{\nodePosComplex{v}\nodePosComplex{u}}\varBMmm.\\
    \end{aligned}
  \end{equation}
  Therefore, their conditional variances and covariances
  are proportional to $\varBMmm$.
\end{proof}

In the following, with a slight abuse of notation,
for any two nodes $u,v$ in $\mathcal{C}$, we will write
$[\matr{J}^{-1}]_{\nodePos{u}\nodePos{v}} = [\matr{J}^{-1}]_{\nodePosComplex{u}\nodePosComplex{v}}$
and
$[\matr{K}^{-1}]_{\nodePos{u}\nodePos{v}} = [\matr{K}^{-1}]_{\nodePosComplex{u}\nodePosComplex{v}}$
for the submatrices corresponding to the indices for $u$ and $v$
in $\mathcal{C}$.

Since~\ref{eq:conditional_expectations} requires inverting $\matr{J}$,
whose size $s$ depends on the cluster, calculating the conditional means $E_v$
has complexity $\mathcal{O}(s^3)$ typically.
As the $\matr{J}$ and $\matr{M}$ matrices appearing
in~\eqref{eq:conditional_expectations} and~\eqref{eq:conditional_covariances}
do not depend on $\varBMmm$ by Lemma~\ref{lemma:kroneckerform},
they can be computed by running BP with any $\varBMmm$ value,
and we have the following.

\begin{corollary}\label{cor:getJM-using-IdentifySigma}
  The $\matr{J}$ and $\matr{M}$ matrices in Lemma~\ref{lemma:kroneckerform},
  used
  in~\eqref{eq:conditional_expectations} and~\eqref{eq:conditional_covariances},
  are obtained as a direct output of BP
  using $\varBMmm = \matr{I}_p$ to calibrate the cluster graph.
\end{corollary}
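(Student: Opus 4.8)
The plan is to read the corollary off Lemma~\ref{lemma:kroneckerform} by exploiting the fact that the matrix $\matr{J}$ and the data-weights $w_i$ that build $\matr{M}$ are \emph{independent} of $\varBMmm$. Choosing the convenient value $\varBMmm = \matr{I}_p$ makes the Kronecker factor $\varBMmm^{-1}$ disappear, so that $\matr{J}$ and $\matr{M}$ sit directly inside the calibrated canonical parameters and can simply be extracted.

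First I would record what Lemma~\ref{lemma:kroneckerform} gives at every calibration iteration: a cluster (or sepset) of $\dimCluster$ nodes has precision $\matr{K} = \matr{J}\kro\varBMmm^{-1}$ and potential $h = (\matr{I}_\dimCluster\kro\varBMmm^{-1})\mtov(\matr{M})=\mtov(\varBMmm^{-1}\matr{M})$, where $\matr{J}$ and the weight vectors $w_i$ defining $\matr{M}$ through~\eqref{eq:kroneckerform-mi} depend only on the network, the cluster graph, the cluster, and the iteration number. Since each Gaussian BP update (Algorithm~2) is a fixed deterministic map of the incoming canonical parameters, this independence is genuinely \emph{iteration-wise}: running BP with $\varBMmm=\matr{I}_p$ produces, step for step, the very same $\matr{J}$ and $w_i$ --- hence the same $\matr{M}$ for fixed data $\allTipTrait$ and $\mu_\rho$ --- as running it with the true $\varBMmm$.

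Then I would specialize to $\varBMmm=\matr{I}_p$, so $\varBMmm^{-1}=\matr{I}_p$ and the two forms collapse to $\matr{K}=\matr{J}\kro\matr{I}_p$ and $h=\mtov(\matr{M})$. With the node-outer/trait-inner ordering of~\eqref{eq:kroneckerform}, the $(i,a),(j,b)$ entry of $\matr{J}\kro\matr{I}_p$ is $J_{ij}\delta_{ab}$, so $\matr{J}$ is recovered by restricting $\matr{K}$ to any single trait coordinate per node (e.g.\ $a=b=1$), while $\matr{M}$ is read off directly from $h$, whose vectorization simply stacks the columns $\vect{m}_1,\dots,\vect{m}_\dimCluster$. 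By the iteration-wise invariance, the $\matr{J}$ and $\matr{M}$ obtained at calibration are exactly those entering~\eqref{eq:conditional_expectations} and~\eqref{eq:conditional_covariances} for the true $\varBMmm$, which is the claim.

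The only delicate point --- and hence the step I would flag --- is that the invariance must hold at each iteration rather than merely at the calibrated fixed point; this is precisely what the proof of Lemma~\ref{lemma:kroneckerform} delivers, by verifying that the factor-product and marginalization operations of BP preserve the Kronecker structure with $\varBMmm$-free factors $\matr{J}$ and $w_i$. Given that, no further computation is required: Corollary~\ref{cor:getJM-using-IdentifySigma} is an immediate bookkeeping consequence of setting $\varBMmm=\matr{I}_p$ and stripping the resulting identity factors.
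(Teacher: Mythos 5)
Your proof is correct and takes essentially the same route as the paper, which justifies the corollary in one sentence immediately before stating it: since Lemma~2 guarantees that $\matr{J}$ and the weight vectors $w_i$ (hence $\matr{M}$ for fixed data) are independent of $\varBMmm$ at every iteration, running BP with $\varBMmm=\matr{I}_p$ yields calibrated beliefs whose canonical parameters are exactly $\matr{K}=\matr{J}\kro\matr{I}_p$ and $h=\mtov(\matr{M})$, from which both matrices are read off. Your explicit flagging of the iteration-wise (rather than merely fixed-point) invariance is a useful clarification but is already delivered by the lemma as stated.
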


Using~\eqref{eq:kroneckerform-mi} in Lemma~\ref{lemma:kroneckerform}
and the derivation of vectors $w_i$ at each BP update, given in the proof below,
we obtain the following result.

\begin{corollary}\label{cor:get-wi-foranytrait}
  For each cluster, the weights $w_i$ appearing in~\eqref{eq:kroneckerform-mi}
  can be obtained alongside BP for any trait
  using updates \eqref{eq:update_w_tips} and \eqref{eq:update_w_propagation} below,
  until convergence of all $w_i$ weight vectors and $\matr{J}$ matrices.
  These quantities can then be used to obtain conditional expectations and
  conditional (co)variances for any trait using~\eqref{eq:kroneckerform-mi}, \eqref{eq:conditional_expectations}
  and~\eqref{eq:conditional_covariances}.
\end{corollary}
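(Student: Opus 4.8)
The plan is to treat Corollary~\ref{cor:get-wi-foranytrait} as a constructive reading of Lemma~\ref{lemma:kroneckerform}. That lemma already guarantees that, at every calibration step, each cluster and sepset belief has the Kronecker form $\matr{K}=\matr{J}\kro\varBMmm^{-1}$ and $h=\mtov(\varBMmm^{-1}\matr{M})$ of \eqref{eq:kroneckerform}, with columns $\vect{m}_i=(w_i\kro\matr{I}_\dimTrait)\transpose{[\,\mu_\rho\ \ \allTipTrait\,]}$ as in \eqref{eq:kroneckerform-mi}, and with $\matr{J}$ and the $w_i$ independent of $\varBMmm$, $\allTipTrait$ and $\mu_\rho$. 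What remains is only to exhibit the explicit recursion for the $w_i$. I would obtain it by re-running the induction behind Lemma~\ref{lemma:kroneckerform}, but now tracking the linear-in-data coefficient $\matr{M}$ --- equivalently the weight rows $w_i$ --- through the three operations of Algorithm~2.

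First I would record the base case \eqref{eq:update_w_tips}, governing initialization \eqref{eq:clusterfactor}. Because the homogeneous BM has no trend ($\omega_v=\mathbf{0}$), every factor that does not absorb data initializes with $h=0$, hence $w_i=0$; the only nonzero seeds are the root prior (a weight supported on the $\mu_\rho$ coordinate) and the data-absorbing leaf factors. At a fully observed leaf $v$ (Assumption~\ref{ass:nomissing}), the factor seen as a function of $x_{\pa(v)}$ has $h=\transpose{\matr{q}_v}(\branchLength{v}\varBMmm)^{-1}\mathrm{x}_v$; since $\matr{q}_v$ and $\branchLength{v}$ do not involve $\varBMmm$, this is already $\mtov(\varBMmm^{-1}\matr{M})$, and reading off the data-independent coefficient of $\mathrm{x}_v$ inside $\transpose{[\,\mu_\rho\ \ \allTipTrait\,]}$ gives, in the tree-edge case $\matr{q}_v=\matr{I}_\dimTrait$,
\begin{equation}\label{eq:update_w_tips}
  w_v = \branchLength{v}^{-1}\transpose{\mathbf{e}_v},
\end{equation}
with $\mathbf{e}_v$ the selector of $\mathrm{x}_v$; at a merge factor the same reading applies with the weighted-average coefficients $\gammauv{u}$.

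The core step is the propagation update \eqref{eq:update_w_propagation}, obtained from the marginalization in step~1 of Algorithm~2. Using $(\matr{A}\kro\matr{B})(\matr{C}\kro\matr{D})=(\matr{A}\matr{C})\kro(\matr{B}\matr{D})$ and $(\matr{A}\kro\matr{I}_\dimTrait)\mtov(\matr{M})=\mtov(\matr{M}\transpose{\matr{A}})$, I would compute
\[
  \matr{K}_{\mathrm{S},\mathrm{I}}\matr{K}_\mathrm{I}^{-1}h_\mathrm{I}
  = \bigl(\matr{J}_{\mathrm{S},\mathrm{I}}\kro\varBMmm^{-1}\bigr)\bigl(\matr{J}_\mathrm{I}^{-1}\kro\varBMmm\bigr)\mtov(\varBMmm^{-1}\matr{M}_\mathrm{I})
  = \mtov\!\bigl(\varBMmm^{-1}\matr{M}_\mathrm{I}\matr{J}_\mathrm{I}^{-1}\matr{J}_{\mathrm{I},\mathrm{S}}\bigr),
\]
so the message is $h_{i\to j}=\mtov(\varBMmm^{-1}\matr{M}_{i\to j})$ with $\matr{M}_{i\to j}=\matr{M}_\mathrm{S}-\matr{M}_\mathrm{I}\matr{J}_\mathrm{I}^{-1}\matr{J}_{\mathrm{I},\mathrm{S}}$, where $\matr{M}_\mathrm{S}$ and $\matr{M}_\mathrm{I}$ gather the columns of $\matr{M}$ for the sepset and the integrated-out nodes. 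This map is linear in $\matr{M}$ with coefficients $\matr{J}_\mathrm{I}^{-1}\matr{J}_{\mathrm{I},\mathrm{S}}$ independent of $\varBMmm$, $\allTipTrait$ and $\mu_\rho$; it therefore acts columnwise and identically across traits, and substituting $\vect{m}_i=(w_i\kro\matr{I}_\dimTrait)\transpose{[\,\mu_\rho\ \ \allTipTrait\,]}$ yields the weight recursion
\begin{equation}\label{eq:update_w_propagation}
  w_{i\to j,\,k} = w_{\mathrm{S},k} - \sum_{\ell\in\mathrm{I}}\bigl[\matr{J}_\mathrm{I}^{-1}\matr{J}_{\mathrm{I},\mathrm{S}}\bigr]_{\ell k}\, w_{\mathrm{I},\ell}.
\end{equation}
Steps~2 and~3 of Algorithm~2 only add and replace $h$-parameters, so by the same bookkeeping they become additions and replacements of the rows $w_i$ on the extended or restricted index sets produced by $\mathrm{ext}(\cdot)$.

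The main obstacle is not any individual computation --- each is routine once Lemma~\ref{lemma:kroneckerform} is granted --- but the careful bookkeeping: matching columns of $\matr{M}$ to scope, sepset and integrated-out nodes across the submatrix extractions and the zero-padding $\mathrm{ext}(\cdot)$, and verifying that the $w_i$ obtained at convergence agree whether a node is read from a cluster or from an incident sepset. I would also confirm that the $w_i$ piggyback on the factorizations of $\matr{J}_\mathrm{I}$ already needed for \eqref{eq:conditional_expectations}, so that carrying the $(n+1)$-vectors $w_i$ in place of scalars adds no asymptotic cost. Convergence of the $w_i$ then follows from that of $\matr{J}$ through the identity $E_v=\matr{M}[\matr{J}^{-1}]_{\cbigdot\nodePosComplex{v}}$ of \eqref{eq:conditional_expectations}, which couples the two and confirms that \eqref{eq:update_w_tips} and \eqref{eq:update_w_propagation} reproduce the weights of Lemma~\ref{lemma:kroneckerform}.
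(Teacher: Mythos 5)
Your proposal is correct and takes essentially the same route as the paper: the corollary is read off from the inductive proof of Lemma~\ref{lemma:kroneckerform}, with the weight rows tracked through initialization, evidence absorption, and the Kronecker algebra of the marginalization step, and your propagation recursion coincides with the paper's \eqref{eq:update_w_propagation} once the symmetry of $\mathbf{J}$ identifies $[\mathbf{J}_{\mathrm{I}}^{-1}\mathbf{J}_{\mathrm{I},\mathrm{S}}]_{\ell k}$ with $(\mathbf{J}_{\mathrm{S},\mathrm{I}}\mathbf{J}_{\mathrm{I}}^{-1})_{k\ell}$. The only cosmetic difference is that you state the evidence step as a direct read-off of the absorbed leaf factor, whereas the paper's \eqref{eq:update_w_tips} is the general absorption update $w_j \mapsto w_j - \mathbf{J}_{j,u}\, e_u$, which reduces to your formula on the zero-initialized leaf factors (and also covers the root when $\ell(\rho)=0$).
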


For example, this result implies that
obtaining calibrated conditional expectations
for a large-dimensional trait can be done without handling large
$\dimTrait\times\dimTrait$ matrices: by first calculating the $w_i$'s and
$\matr{J}$ for each cluster until convergence,
and then re-using them repeatedly for each of the $\dimTrait$ traits
separately (without re-calibration).

\begin{proof}[Proof of Lemma~\ref{lemma:kroneckerform}]
  We now show that the properties stated in Lemma~\ref{lemma:kroneckerform} hold
  for each factor at initialization,
  and continue to hold after each step
  of Algorithm~2: belief initialization, evidence absorption, and propagation.

  \textit{Factor Initialization.}
  Using the notations from the main text, each factor
  $\phi_v(x_v\mid x_{\pa(v)})$
  has canonical form over its full scope:
  \[\phi_v(x_v\mid x_{\pa(v)}) = \text{C}\left(
      \begin{bmatrix}x_v \\ x_{\pa(v)}\end{bmatrix};\, \matr{K}_v, h_v, g_v\right).
  \]
  For any internal node or any leaf $v$ before evidence absorption,
  we get from (4.4) and \eqref{eq:BM_factors}:
  \begin{equation}\label{eq:canonhybrid}
      \matr{K}_v =
      \matr{J} \kro \varBMmm^{-1} \mbox{ and }
      h_v = \matr{0},
      \mbox{ where }
      \matr{J} = \frac{1}{\branchLength{v}} \begin{bmatrix} 1 & -\transpose{\vect{\gamma}} \\
        -\vect{\gamma} & \vect{\gamma}\transpose{\vect{\gamma}} \end{bmatrix}
  \end{equation}
  and
  \(
    \transpose{\vect{\gamma}} = (\gammauv{u}; u \in \pa(v)).
  \)
Hence all the node family factors have form \eqref{eq:kroneckerform}
  at initialization and neither $\matr{J}$ nor any $m_i=\bf{0}$
  depend on the data.
  We can initialize $w_i=\bm{0}$ for each $i$ in the factor's scope.

  At the root $\rho$, the formulas above still hold
  using that $\pa(\rho)$ is empty and $\vect{\gamma}$
  has length $0$,
  for any $0 < \branchLength{\rho} < +\infty$,
  and also for $\branchLength{\rho} = +\infty$ in which case
  $\matr{J} = [0]$, independent of the data. For $h_\rho$, we have
  $h_\rho = \frac{1}{\branchLength{\rho}} \varBMmm^{-1} \mu_\rho$,
  which satisfies \eqref{eq:kroneckerform} with $m_\rho= \mu_\rho/ \branchLength{\rho}$.
  $m_\rho$ is linear in $\mu_\rho$ and satisfies~\eqref{eq:kroneckerform-mi}
  with $w_\rho=(\frac{1}{\branchLength{\rho}},0,\dots,0)$
  (or simply $\bm{0}$ if $\branchLength{\rho} = +\infty$).
  If $\branchLength{\rho} = 0$, the root factor is not assigned
  to any cluster at initialization because it is instead handled during
  evidence absorption below. This is because $\branchLength{\rho} = 0$
  implies that $X_\rho$ is fixed to the $\mu_\rho$ value,
  and this is handled similarly to leaves fixed at their observed values.

  \textit{Belief Initialization.}
  Now consider a cluster $\mathcal{C}$ that is assigned factors
  $\phi_v$ for $k\geq 0$ nodes $\{v_1,\ldots,v_k\}$, by (4.2). Before this assignment, the belief for  $\mathcal{C}$ (and for all sepsets)
  is set to the constant function equal to 1,
  which trivially satisfies~\eqref{eq:kroneckerform}
  with $\matr{J}=\matr{0}$ and every $\vect{m}_i=\bf{0}$, and satisfies
  \eqref{eq:kroneckerform-mi} with every $w_i=\bm{0}$.
  To assign factor $\phi_{v_j}$ to $\mathcal{C}$,
  we first extend scope of $\phi_{v_j}$ to the scope of $\mathcal{C}$
  (re-ordering the rows and columns of $\matr{K}_{v_j}$ and $h_{v_j}$
  to match that of $\mathcal{C}$).
  We then multiply $\mathcal{C}$'s belief by the extended factor.
  So we now prove that \eqref{eq:kroneckerform}
  is preserved by these two operations: extension and multiplication.

  \textit{Belief Extension.}
  Consider extending the scope of a belief with parameters $(\matr{K},\vect{h},g)$
  satisfying \eqref{eq:kroneckerform} to include all $\dimTrait$ traits of an
  extra $(\dimCluster+1)^{\mathrm{th}}$ node.
  Without loss of generality, we assign these extra variables the last $\dimTrait$ indices.
  Then the canonical parameters of the extended belief can be written as
  \[\tilde{\matr{K}} =
    \begin{bmatrix}\matr{J} & \matr{0} \\\matr{0} & \matr{0} \end{bmatrix} \kro \varBMmm^{-1}
 \quad \mbox{and} \quad
 \tilde{h} = (\matr{I}_{\dimCluster+1} \kro \varBMmm^{-1})
   \begin{bmatrix}\vect{m}_1\\ \vdots \\\vect{m}_\dimCluster \\ \bm{0} \end{bmatrix}
  \]
  and continue to be of form \eqref{eq:kroneckerform}.
  $\tilde{\matr{K}}$ continues to be independent of $\varBMmm$ and of the data.
  All $m_i$ vectors involved in $\tilde{h}$ continue to be independent of
  $\varBMmm$ and linear in the data according to~\eqref{eq:kroneckerform-mi}:
  with $w_i$ unchanged for $i\leq s$ and $w_i=\bm{0}$ for $i=s+1$.

  \textit{Beliefs Product and Quotient.}
  Next, if $(\matr{K},\vect{h},g)$ and $(\matr{K}',\vect{h}',g')$ are
  the parameters of two
  beliefs on the same scope satisfying Lemma~\ref{lemma:kroneckerform},
  then their product also satisfies Lemma~\ref{lemma:kroneckerform}
  because the canonical form
  of the product has parameters $(\matr{K}+\matr{K}', \vect{h}+\vect{h}', g+g')$,
  and can be expressed with \eqref{eq:kroneckerform} using
  $\matr{J}+\matr{J}'$ and
  $m_j + m'_j$.
  \eqref{eq:kroneckerform-mi} continues to hold using weight vectors
  $w_j + w'_j$.
  Similarly, the ratio of the two beliefs has parameters
  $(\matr{K}-\matr{K}', \vect{h}-\vect{h}', g-g')$ and
  continues to satisfy Lemma~\ref{lemma:kroneckerform}.

  \textit{Evidence Absorption.}
  Assume that a belief
  satisfies Lemma~\ref{lemma:kroneckerform},
  and that we want to 
  absorb the evidence from one node $u$, $1 \leq u \leq \dimCluster$.
  This node $u$ can be a leaf, or the root if $\branchLength{\rho} = 0$.
  If $u=\rho$ then the data to be absorbed is $\mathrm{x}_u=\mu_\rho$.
  We need to express the canonical form of the belief as a function of
  $\vect{x}_{-u} = [\vect{x}_1^{\top}, \dotsc, \vect{x}_{u-1}^{\top}, \vect{x}_{u+1}^{\top}, \dotsc, \vect{x}_s^{\top}]^{\top}$ only,
  letting the data $\mathrm{x}_u$ appear in the canonical parameters.
  By Assumption~\ref{ass:nomissing},
  $\mathrm{x}_u$ is of full length $\dimTrait$, which maintains the block structure.
  We have:
  \begin{align*}
    \begin{bmatrix}\transpose{\vect{x}}_1 \cdots \transpose{\vect{x}}_s \end{bmatrix}
    (\matr{J} \kro \varBMmm^{-1})
    \begin{bmatrix}\vect{x}_1 \\ \vdots \\ \vect{x}_\dimCluster \end{bmatrix}
    &=
    \transpose{\vect{x}_{-u}}
    (\matr{J}_{-u} \kro \varBMmm^{-1})
    \vect{x}_{-u}
    +
    2 \sum_{t\neq u} \transpose{\mathrm{x}_u} J_{ut} \varBMmm^{-1} \vect{x}_t
    +
    \transpose{\mathrm{x}_u} J_{uu} \varBMmm^{-1} \mathrm{x}_u
    \\
    &=
    \transpose{\vect{x}_{-u}}
    (\matr{J}_{-u} \kro \varBMmm^{-1})
    \vect{x}_{-u}
    +
    2 \transpose{(\vect{J}_{-u,u} \kro \mathrm{x}_u)}
    (\matr{I}_{\dimCluster-1} \kro \varBMmm^{-1})
    \vect{x}_{-u}
    +
    \transpose{\mathrm{x}_u} J_{uu} \varBMmm^{-1} \mathrm{x}_u,
  \end{align*}
  where $\matr{J}_{-u}$ is the $(\dimCluster-1) \times (\dimCluster-1)$ matrix $\matr{J}$
  without the row and column for $u$;
  and $\vect{J}_{-u,u}$ is the $(\dimCluster-1) \times 1$ column vector of $\matr{J}$ for $u$
  without the row entry for $u$.
  Likewise:
  \begin{equation*}
    \begin{bmatrix}\transpose{\vect{m}}_1 \cdots \transpose{\vect{m}}_\dimCluster \end{bmatrix}
    (\matr{I}_\dimCluster \kro \varBMmm^{-1})
    \begin{bmatrix}\vect{x}_1 \\ \vdots \\ \vect{x}_\dimCluster \end{bmatrix}
    =
    \transpose{\vect{m}_{-u}}
    (\matr{I}_{\dimCluster-1} \kro \varBMmm^{-1})
    \vect{x}_{-u}
    +
    \transpose{\vect{m}}_u \varBMmm^{-1} \mathrm{x}_u,
  \end{equation*}
  where $m_{-u}$ is similarly defined as $x_{-u}$, so that the canonical
  form of the factor satisfies:
  \[
    \log\text{C}\left(
      \vect{x}_{-u};\, \matr{K}_{-u}, h_{-u}, g_{-u}
      \right)  
      =
    -\frac{1}{2} \transpose{\vect{x}_{-u}}
    (\matr{J}_{-u} \kro \varBMmm^{-1})
    \vect{x}_{-u}
    +
    \transpose{(\vect{m}_{-u} - \vect{J}_{-u,u} \kro \mathrm{x}_u)}
    (\matr{I}_{s-1} \kro \varBMmm^{-1})
    \vect{x}_{-u}
    + g_{-u},
  \]
  where $g_{-u}$ does not depend on $\vect{x}_{-u}$,
  \[
    \matr{K}_{-u} = (\matr{J}_{-u} \kro \varBMmm^{-1})
    \qquad \text{and} \qquad
    h_{-u} = (\matr{I}_{s-1} \kro \varBMmm^{-1}) (\vect{m}_{-u} - \vect{J}_{-u,u} \kro \mathrm{x}_u),
  \]
  have form~\eqref{eq:kroneckerform}. $\matr{J}_{-u}$
  continues to be independent of $\varBMmm$ and of the data and $\mu_\rho$. All $m_i$ vectors involved in $h_{-u}$ continue to be independent of
  $\varBMmm$ and linear in the data ---with linear dependence on $\mathrm{x}_u$
  introduced in this step.
  Namely, \eqref{eq:kroneckerform-mi} holds with weight vector $w_j$
  updated to:
  \begin{equation}\label{eq:update_w_tips}
    w_j - \vect{J}_{i,u} e_u\quad\mbox{ where }\quad e_u = (0,\dots,0,1,0,\ldots)
  \end{equation}
  is the basis (row) vector of $\mathbb{R}^{n+1}$ with coordinate 1 at the
  position indexing tip $u$.

  Note that, for a tip $v$ with data on all $\dimTrait$ traits,
  we recover (4.4) for the factor associated with
  the external edge to $v$, whose scope is reduced to
  $x_{\pa(v)}$ after absorbing the evidence from $\mathrm{x}_v$ (with $\dimCluster=2$):
\[
  \matr{K}_v = \frac{1}{\branchLength{v}} \varBMmm^{-1}, \quad
  h_v = \frac{1}{\branchLength{v}} \varBMmm^{-1} \mathrm{x}_v,\quad
  \mbox{ and }\quad
  g_v = -\frac{1}{2}\left(\log|2\pi\branchLength{v} \varBMmm| +
  \frac{1}{\branchLength{v}} \|\mathrm{x}_v\|^2_{\varBMmm^{-1}}\right)\;.
\]

  \textit{Propagation.}
  Next, we show that beliefs 
  continue to satisfy Lemma~\ref{lemma:kroneckerform}
  after any propagation step of Algorithm~2. The first propagation step
  consists of marginalizing a belief, to calculate the message
  $\tilde{\mu}_{i\rightarrow j}$ from cluster $i$ to cluster $j$.
  Suppose that a belief with
  parameters $(\matr{K},\vect{h},g)$ satisfies \eqref{eq:kroneckerform},
  and that we marginalize out all
  traits of one or more nodes in its scope.
  Let $I$ be the indices corresponding to nodes
  (or their traits, depending on the context, with some abuse of notation)
  to be marginalized and $S$ the indices corresponding to the remaining
  nodes (or their traits).
Then, the marginal belief has canonical parameters
  $(\tilde{\matr{K}},\tilde{\vect{h}})$ with:
  \begin{eqnarray*}
    \tilde{\matr{K}} &=& \matr{K}_{\mathrm{S}}
      -
      \matr{K}_{\mathrm{S},\mathrm{I}}
      \matr{K}_{\mathrm{I}}^{-1}
      \matr{K}_{\mathrm{I},\mathrm{S}}\\
    &=& J_{\mathrm{S}} \kro \varBMmm^{-1}
      -
      \left(J_{\mathrm{S},\mathrm{I}} \kro \varBMmm^{-1}\right)
      \left({J_{\mathrm{I}}}^{-1} \kro \varBMmm\right)
      \left(J_{\mathrm{I},\mathrm{S}} \kro \varBMmm^{-1}\right)\\
    &=& \left(
      J_{\mathrm{S}} - J_{\mathrm{S},\mathrm{I}}{J_{\mathrm{I}}}^{-1}J_{\mathrm{I},\mathrm{S}}
      \right) \kro \varBMmm^{-1}
    = \tilde{J} \kro \varBMmm^{-1}
  \end{eqnarray*}
  and
  \begin{eqnarray*}
    \tilde{h} &=& h_{\mathrm{S}}
      -
      \matr{K}_{\mathrm{S},\mathrm{I}}
      \matr{K}_{\mathrm{I}}^{-1}
      h_{\mathrm{I}}
    = h_{\mathrm{S}}
      -
      \left(J_{\mathrm{S},\mathrm{I}}\kro\varBMmm^{-1}\right)
      \left({J_{\mathrm{I}}}^{-1} \kro \varBMmm\right) h_{\mathrm{I}}
    = h_{\mathrm{S}}
      -
      \left(J_{\mathrm{S},\mathrm{I}}{J_{\mathrm{I}}}^{-1} \kro\bm{I}_p\right)
      h_{\mathrm{I}}
    =  \begin{bmatrix}\varBMmm^{-1}\tilde{\vect{m}}_1\\ \vdots \\\varBMmm^{-1}\tilde{\vect{m}_s} \end{bmatrix}
  \end{eqnarray*}
  where, for $j\in S$:
  \[\tilde{\vect{m}}_j =
    \vect{m}_j - \sum_{i\in I}
    \left(J_{\mathrm{S},\mathrm{I}}{J_{\mathrm{I}}}^{-1}\right)_{ji}\vect{m}_i.
  \]
  So~\eqref{eq:kroneckerform-mi} holds with updated weights:
  \begin{equation}\label{eq:update_w_propagation}
    \tilde{w}_j = w_j - \sum_{i\in I}
    \left(J_{\mathrm{S},\mathrm{I}}{J_{\mathrm{I}}}^{-1}\right)_{ji} w_i \,,
  \end{equation}
  and
  the marginalized belief (message) is still of the form \eqref{eq:kroneckerform}
  and continues to satisfy Lemma~\ref{lemma:kroneckerform}.
  The remaining propagation steps consist of
  dividing the message by the current sepset belief;
  extending the resulting quotient to the scope
  of the receiving cluster;
  and multiplying the receiving cluster's current belief with the
  extended quotient.
  Each of these steps was already proved to preserve
  the properties of Lemma~\ref{lemma:kroneckerform},
  therefore the receiving cluster's new belief still satisfies Lemma~\ref{lemma:kroneckerform}.
  The sepset belief does too because it is updated with the message that was passed.
\end{proof}

\subsection{Gradient computation and analytical formula for parameter estimates}\label{sec:gradient_formula_multivariate_BM}

\subsubsection{Gradients of factors}

When the factors are linear Gaussian as in~(3.1), their derivarive with respect to any vector of parameters $\theta$ can
be written as:
\begin{multline}\label{eq:derivative_gaussian}
\gradient{\theta} \left[
        \log \phi_v(\vect{X}_v | \vect{X}_{\pa(v)}, \theta)
    \right]
= 
    \partialDerBis{\theta}{\transpose{[\bm{q}_v \vect{X}_{\pa(v)} + {\omega}_v]}}
    \inverse{\bm{V}_v}
    \left(\vect{X}_v - \bm{q}_v \vect{X}_{\pa(v)} - {\omega}_v\right)
    \\
    +
    \frac12
    \partialDer{\theta}{\transpose{\mtovh(\inverse{\bm{V}_v})}}
    \mtovh
    \left(
      \bm{V}_v
        - \tcrossprod{(\vect{X}_v - \bm{q}_v \vect{X}_{\pa(v)} - {\omega}_v)}
    \right),
  \end{multline}
  where $\mtovh$ is the symmetric vectorization operation \citep{Magnus1986}.
  In the BM case, (3.1) simplifies to~\eqref{eq:BM_factors},
  so that, for non-root nodes:
  \begin{multline*}
\gradient{\theta} \left[
        \log \phi_v(\vect{X}_v | \vect{X}_{\pa(v)}, \theta)
    \right]
= 
    \partialDerBis{\theta}{\transpose{\left[\sum_{u \in \pa(v)} \gammauv{u} \allTraitm{p}\right]}}
    \nodePrecision{v}
    \left(\vect{X}_v - \sum_{u \in \pa(v)} \gammauv{u} \allTraitm{p}\right)
    \\
    +
    \frac12
    \partialDer{\theta}{\transpose{\mtovh(\nodePrecision{v})}}
    \mtovh
    \left(
      \nodeVariance{v}
        - \tcrossprod{\left(\vect{X}_v - \sum_{u \in \pa(v)} \gammauv{u} \allTraitm{p}\right)}
    \right),
  \end{multline*}
  and, for the root $\rho$, assuming $0 < \branchLength{\rho} < +\infty$,
  \begin{multline*}
\gradient{\theta} \left[
        \log \phi_{\rho}(\vect{X}_\rho | \theta)
    \right]
= 
    \partialDer{\theta}{\transpose{\left[\mu_{\rho}\right]}}
    \nodePrecision{\rho}
    \left(\vect{X}_\rho - \mu_{\rho}\right)
    \\
    +
    \frac12
    \partialDer{\theta}{\transpose{\mtovh(\nodePrecision{\rho})}}
    \mtovh
    \left(
      \nodeVariance{\rho}
        - \tcrossprod{\left(\vect{X}_\rho - \mu_{\rho}\right)}
    \right).
  \end{multline*}

  \subsubsection{Estimation of $\mu_{\rho}$}
  Note that $\mu_{\rho}$ has no impact on the model
  and needs not be estimated if
  $\branchLength{\rho}=\infty$ (improper flat prior).
  We assume here that $0 < \branchLength{\rho} < +\infty$,
  and will consider the case $\branchLength{\rho}=0$ later.
  Only the root factor depends on $\mu_{\rho}$.
  Taking its gradient with respect to $\mu_{\rho}$, we get:
  \begin{equation*}
\gradient{\mu_{\rho}} \left[
        \log \phi_{\rho}(\vect{X}_\rho | \theta)
    \right]
= 
    \nodePrecision{\rho}
    \left(\vect{X}_{\rho} - \mu_{\rho}\right).
  \end{equation*}
  To apply Fisher's formula~(6.1), we take the
  expectation $\E_\theta[\cond{\cbigdot}{\vect{Y}}]$
  of this gradient conditional on all the data $\vect{Y}$:
  \begin{equation*}
    \left.
    \gradient{\mu_{\rho}'} \left[
      \log p_{\theta'}(\vect{Y})
  \right]
  \right|_{\mu_{\rho}'=\mu_{\rho}}
  =
    \E_\theta\left[
      \cond{
    \left.
    \gradient{\mu_{\rho}'} \left[
        \log \phi_{\rho}(\vect{X}_\rho | \theta')
    \right]
    \right|_{\mu_{\rho}'=\mu_{\rho}}
      }{\vect{Y}}
    \right]
    = 
    \nodePrecision{\rho}
    \left(
      \E_\theta\left[\cond{\vect{X}_{\rho}}{\vect{Y}}\right]
      - \mu_{\rho}
      \right).
  \end{equation*}
  Setting this gradient to 0,
  we get:
  \begin{equation}\label{eq:mu_rho_hat}
    \hat{\mu}_{\rho} 
    = \E_\theta\left[\cond{\vect{X}_{\rho}}{\vect{Y}}\right]
    = \nodeMeanDown{\rho}
  \end{equation}
  where $\mathcal{C}$ is any cluster containing $\rho$ in its scope,
  and $\matr{J}$ and $\matr{M}$ are the matrices in Lemma~\ref{lemma:kroneckerform}
  for its belief.
  Note that by Lemma~\ref{lemma:kroneckerform},
  this estimate is independent of the assumed $\mathbf{\Sigma}$ used
  during calibration.
  This procedure corresponds to maximum likelihood estimation
  under the assumption that $\branchLength{\rho}$ is known.
  Under this model, $\mu_{\rho}$ represents the ancestral state at time
  $\branchLength{\rho}$ prior to the root node $\rho$, which is typically taken
  as the most recent common ancestor of the sampled leaves. This is equivalent
  to considering an extra root edge of length $\branchLength{\rho}$ above $\rho$,
  whose parent node has ancestral state $\mu_{\rho}$.
  Then $\hat{\mu}_{\rho}$ is a maximum likelihood estimate of
  the ancestral state at $\rho$,
  or an approximation thereof if a cluster graph is used instead of a clique tree.
  Note that, in a Bayesian setting, when fixing $\branchLength{\rho}$ to a given value,
  and fixing $\mu_{\rho} = 0$, this model can be seen as setting a
  Gaussian prior on the value at the root of the tree.
  This is the model used e.g.\ in BEAST \citep{Fisher2021}.

  \subsubsection{Estimation of $\mathbf{\Sigma}$} \label{si:sec:estimation_sigma}

  We now take the gradient with respect to the vectorized precision parameter
  $\vect{P} = \mtovh(\varBMmm^{-1})$, of length $p(p+1)/2$.
  For $v\neq \rho$, we get:
  \[ \gradient{\vect{P}} \left[
        \log \phi_v(\vect{X}_{\pa(v)} | \vect{X}_u, \theta)
    \right]
= \frac12
    \branchLength{v}^{-1}
    \mtovh
    \left(
      \nodeVariance{v}
        - \tcrossprod{\left(\vect{X}_v - \sum_{u \in \pa(v)} \gammauv{u} \allTraitm{u}\right)}
    \right)
  \] and, for the root $\rho$:
\begin{equation*}
\gradient{\vect{P}} \left[
      \log \phi_{\rho}(\vect{X}_\rho | \theta)
  \right]
= 
  \frac12
  \branchLength{\rho}^{-1}
  \mtovh
  \left(
    \nodeVariance{\rho}
      - \tcrossprod{\left(\vect{X}_\rho - \mu_{\rho}\right)}
  \right).
\end{equation*}
Applying again Fisher's formula~(6.1), we get:
  \begin{equation*}
    \left.
    \gradient{\vect{P}'} \left[
      \log p_{\theta'}(\vect{Y})
  \right]
  \right|_{\vect{P}'=\vect{P}}
    = 
    \frac12 
    \sum_{v\in V}
    \mtovh
    \left(
      \varBMmm - \branchLength{v}^{-1}\matr{F}_{v}
    \right),
\end{equation*}
where $\matr{F}_{v}$ is derived next, using that
$\E\left[\vect{Z}\transpose{\vect{Z}}\right] 
= 
\var\left[\vect{Z}\right] + \E\left[\vect{Z}\right]\transpose{\E\left[\vect{Z}\right]}$
and using~\eqref{eq:conditional_expectations} and~\eqref{eq:conditional_covariances}
on a cluster $\mathcal{C}$ containing $v$ and its parents in its scope,
with $\matr{J}_v$ and $\matr{M}_v$ from Lemma~\ref{lemma:kroneckerform} for $\mathcal{C}$.
For $v\neq \rho$, we get:
\begin{align*}
  \matr{F}_{v}
  &=
  \E_{\theta}\left[\cond{
    \tcrossprod{\left(\vect{X}_v - \sum_{u \in \pa(v)} \gammauv{u} \allTraitm{u}\right)}
  }{\vect{Y}}\right]
  \\
  &=
  \var_{\theta}\left[\cond{
    \vect{X}_v - \sum_{u \in \pa(v)} \gammauv{u} \allTraitm{u}
      }{\vect{Y}}
  \right]
  +
  \tcrossprod{
  \E_{\theta}\left[\cond{
    \vect{X}_v - \sum_{u \in \pa(v)} \gammauv{u} \allTraitm{u}
      }{\vect{Y}}\right]
      }
  \\
  &= 
  \left(
    \nodeCovarianceSmallDown{v}{v}{v}
    + \sum_{u_1,u_2 \in \pa(v)} \gammauv{u_1}\gammauv{u_2} \nodeCovarianceSmallDown{v}{u_1}{u_2}
    -2 \sum_{u \in \pa(v)} \gammauv{u} \nodeCovarianceSmallDown{v}{v}{u}
    \right) \varBMmm
  \\
  &\qquad\qquad
  + \tcrossprod{\left(\nodeMeanDown{v} - \sum_{u \in \pa(v)} \gammauv{u} \nodeMeanDown{u}\right)}.
\end{align*}
For the root $\rho$:
\begin{align*}
  \matr{F}_{\rho}
  &=
  \E_{\theta}\left[\cond{
    \tcrossprod{\left(\vect{X}_\rho - \mu_{\rho}\right)}
  }{\vect{Y}}\right]
  \\
  &=
  \var_{\theta}\left[\cond{
    \vect{X}_\rho - \mu_{\rho}
      }{\vect{Y}}
  \right]
  +
  \tcrossprod{
  \E_{\theta}\left[\cond{
    \vect{X}_\rho - \mu_{\rho}
      }{\vect{Y}}\right]
      }
  \\
  &=
  \nodeCovarianceSmallDown{\rho}{\rho}{\rho} \varBMmm
  + \tcrossprod{\left(\nodeMeanDown{\rho} - \mu_{\rho}\right)}.
\end{align*}
Setting this gradient to 0 with respect to $\varBMmm$,
we get the following maximum
likelihood estimate for the rate matrix:
\begin{multline}\label{eq:closed_form_BM_variance}
    \hat{\varBMmm}
    =
    \left[
      \begin{aligned}
      &\branchLength{\rho}^{-1} \tcrossprod{\left(\nodeMeanDown{\rho} - \hat{\mu}_{\rho}\right)}
      +
\sum_{v\in V, v \neq \rho}
    \branchLength{v}^{-1}
    \tcrossprod{\left(\nodeMeanDown{v} - \sum_{u \in \pa(v)} \gammauv{u} \nodeMeanDown{u}\right)}
      \end{aligned}
    \right]
    \\
    \times 
    \left[
      \sum_{v\in V}
      1 -
      \branchLength{v}^{-1}
      \left(
          \nodeCovarianceSmallDown{v}{v}{v}
      + \sum_{u_1,u_2 \in \pa(v)} \gammauv{u_1}\gammauv{u_2} \nodeCovarianceSmallDown{v}{u_1}{u_2}
      -2 \sum_{u \in \pa(v)} \gammauv{u} \nodeCovarianceSmallDown{v}{v}{u}
      \right)
    \right]^{-1}
  \end{multline}
  where we use the convention that a sum over an empty set (here $\pa(\rho)$) is $0$.

  Note that this formula only uses the calibrated moments computed at each cluster.
  After calibration, then, calculating $\hat{\varBMmm}$
  with~\ref{eq:closed_form_BM_variance} has complexity 
  $\mathcal{O}(|V|(k^3 + p^2))$
  where $k$ is the maximum cluster size,
  since~\ref{eq:closed_form_BM_variance} requires inverting at most $|V|$
  matrices of size $k\times k$ at most
  and the crossproduct of at most $|V|$ vectors of size $p$.
  The final product is a scalar scaling of a $p\times p$ matrix.
  Calibrating the clique tree or cluster graph is more complex,
  because each BP update has complexity up to $\mathcal{O}(k^3p^3)$.
  If the phylogeny is a tree, a clique tree has $k=2$ and $|V| = 2n-1$,
  so that~\ref{eq:closed_form_BM_variance}
  has complexity linear in the number of tips.
  While PIC can get these estimates in only one traversal of the tree,
  this formula requires two traversals of the clique tree,
  but is more general as it applies to any phylogenetic network.

  \subsubsection{ML and REML estimation}
  Restricted maximum likelihood (REML) estimation
  can be framed as integrating out fixed effects \citep{harville1974bayesian},
  here $\mu_\rho$,
  to estimate covariance parameters, here the BM variance rate $\mathbf{\Sigma}$.
  This model corresponds to placing an improper prior on the root using
  $\branchLength{\rho} = +\infty$, in which case $\mu_{\rho}$ is irrelevant.
  Then \eqref{eq:closed_form_BM_variance} remains valid
  (with vanishing terms for the root) and gives an analytical
  formula for the REML estimate of $\mathbf{\Sigma}$.

  For maximum likelihood (ML) estimation of
  $\mu_\rho$, considered as the state $X_{\rho}$ at the root node $\rho$,
  we need to consider the case $\branchLength{\rho} = 0$ to fix
  $X_{\rho}=\mu_\rho$.
  Under $\branchLength{\rho} = 0$, \eqref{eq:mu_rho_hat} cannot be calculated
because $X_\rho=\mu_{\rho}$ was absorbed as evidence
  and $\rho$ removed from scope.
Instead, we note that under an improper root with infinite variance,
  the posterior density of the root trait conditional on all the tips
  is proportional to the likelihood
  \begin{equation*}
      p(\cond{X_{\rho}}{\vect{Y}})
          \propto
          p(\cond{\vect{Y}}{X_{\rho}})
      \times
      p(X_{\rho})
  \end{equation*}
  because $p(X_{\rho})\equiv 1$ under an improper prior on $X_{\rho}$.
  Therefore, maximizing the likelihood 
  $p(\cond{\vect{Y}}{X_{\rho}})$
  in the root parameter $X_{\rho} = \mu_{\rho}$
  amounts to maximizing the density
  $p(\cond{X_{\rho}}{\vect{Y}})$ in $X_{\rho}$.
  This density
  is Gaussian with expectation $\nodeMeanDown{\rho}$
  by~\eqref{eq:conditional_expectations}
so its maximum is attained at
$\hat{\mu}_{\rho} = \E_\theta\left[\cond{\vect{X}_v}{\vect{Y}}\right] = \nodeMeanDown{\rho}$.
In summary, the ML estimate of $\mu_\rho$ is still given by
\eqref{eq:mu_rho_hat}, but calculated by running BP
under an \emph{improper} prior at the root.

\subsection{Analytical formula for phylogenetic regression}

In the previous section, we derived analytical formulas
\eqref{eq:mu_rho_hat} and \eqref{eq:closed_form_BM_variance}
for estimating the parameters of a homogeneous multivariate BM on a
phylogenetic network,
using the output of only one BP calibration thanks to
Corollary~\ref{cor:getJM-using-IdentifySigma}.

Instead of fitting a multivariate process, it is often of interest to look at the
distribution of one particular trait conditional on all others.
This phylogenetic regression setting is for instance used on a network in \citet{2018Bastide-pcm-net}.
Writing $\tipTraitVect$ the (univariate) trait of interest measured at the $n$ tips of a network,
and $\regressor$ the $p\times n$ matrix of regressors,
we are interested in the model:
\begin{equation}\label{si:eq:phylo_regression}
\tipTraitVect = \transpose{\regressor} \beta + \epsilon,
\end{equation}
with
$\beta$ a vector of $p$ coefficients, and
$\epsilon$ a vector of residuals with expectation $0$
and a variance-covariance matrix that is given
by a univariate BM on the network with variance rate $\sigma^2$,
and a root fixed to $0$.
The $v^\mathrm{th}$ column $\regressor_{\cbigdot v}$
corresponds to the predictors at leaf $v$ and will be denoted as $U_v$.

In this setting, explicit maximum likelihood estimators for $\beta$ and $\sigma$ 
are available, but they involve the inverse of an $n\times n$ matrix,
with $\mathcal{O}(n^3)$ complexity.
Our goal is to get these estimators in linear time.

\subsubsection{Parameter estimation using the joint distribution}

To build on section~\ref{sec:gradient_formula_multivariate_BM},
we first look at the joint distribution of the reponse and predictors $V$ and $U$.
Setting the intercept aside, we slightly rewrite model~\ref{si:eq:phylo_regression}
(with a slight change of notation for $\regressor$ and $p$) to:
\begin{equation}\label{si:eq:phylo_regression_intercept}
  V = \alpha\mathbf{1} + \transpose\regressor \beta + \epsilon,
\end{equation}
with
$\alpha$ a scalar, $\mathbf{1}$ the vector of ones,
$\beta$ a vector of $p$ coefficients, and
$\epsilon$ a vector of residuals with expectation $\mathbf{0}$
and a variance-covariance matrix given
by a univariate BM on the network with variance rate $\sigma^2$,
and a root fixed to $0$.
Assuming that the joint trait $X = (V,U)$,
of dimension $p+1$, is jointly Gaussian
and evolving on the network with variance rate
$\varBMmm_X = \begin{bmatrix}\varBMmm_{VV}&\varBMmm_{VU}\\\varBMmm_{UV}&\varBMmm_{UU}\end{bmatrix}$,
we obtain the regression model above
with
\begin{equation}
\label{eq:regressionfromcorrelation-jointgaussian-true}
\beta = \varBMmm_{UU}^{-1}\varBMmm_{UV} \quad\mbox{ and }\quad
\sigma^2=\varBMmm_{VV} - \varBMmm_{VU}\varBMmm_{UU}^{-1}\varBMmm_{UV}.
\end{equation}
This is because a joint BM evolution for $X$ implies that the evolutionary
changes in $V$ and $U$ along each branch $e$, $(\Delta V)_e$ and $(\Delta U)_e$,
are jointly Gaussian $\mathcal{N}(\bm{0},\ell(e)\varBMmm_X)$ and
independent of previous evolutionary changes.
By classical Gaussian conditioning, this means that
\[
  (\Delta V)_e = \transpose{(\Delta U)_e} \beta + (\Delta\epsilon)_e
\]
where $(\Delta\epsilon)_e \sim \mathcal{N}(0,\ell(e)\sigma^2)$ and
independent of $(\Delta U)_e$.
At a hybrid node, the merging rule holds for both $V$ and $U$
with the same inheritance weights, so by induction on the nodes (in preorder)
we get that $V_u = \alpha + \transpose{U_u} \beta + \epsilon_u$ at every node $u$
in the network, with $\alpha=V_\rho-\transpose{U_\rho} \beta$,
and with $\epsilon$ following a BM process with variance rate $\sigma^2$
starting at $\epsilon_\rho=0$.
Therefore~\ref{si:eq:phylo_regression_intercept} holds at the tips.

Consequently,
we can apply formulas~\eqref{eq:mu_rho_hat} and~\eqref{eq:closed_form_BM_variance}
to get maximum likelihood (or REML) estimates $\hat{\mu}_{X}$ and $\hat{\varBMmm}_{X}$
of the joint expectation and variance rate matrix of $X$.
We can then plug in these estimates
in~\ref{eq:regressionfromcorrelation-jointgaussian-true} to get:
\begin{equation}\label{eq:regressionfromcorrelation-jointgaussian}
  \hat{\alpha} = \hat{\mu}_V - \hat{\varBMmm}_{VU}\hat{\varBMmm}_{UU}^{-1} \hat{\mu}_U\;,\quad
  \hat{\beta} = \hat{\varBMmm}_{UU}^{-1}\hat{\varBMmm}_{UV}\;,\quad\mbox{and }
  \hat{\sigma}^2 = \hat{\varBMmm}_{VV} -
  \hat{\varBMmm}_{VU}\hat{\varBMmm}_{UU}^{-1}\hat{\varBMmm}_{UV}\;,
\end{equation}
where $\hat{\mu}_V$ and $\hat{\mu}_U$ are, respectively, the scalar
and vector of size $p$
extracted from $\hat{\mu}_X$ for traits $V$ and $U$, and, similarly,
$\hat{\varBMmm}_{VU}$, $\hat{\varBMmm}_{UV}$, $\hat{\varBMmm}_{VV}$ and $\hat{\varBMmm}_{UU}$ are the sub-matrices of
dimension $1\times p$, $p\times 1$, $1 \times 1$ and $p \times p$ extracted from $\hat{\varBMmm}_X$.
As calculating $\hat{\mu}_X$ and $\hat{\varBMmm}_X$
via~\eqref{eq:mu_rho_hat} and~\eqref{eq:closed_form_BM_variance}
has complexity $\mathcal{O}(|V|(k^3 + p^3))$ where $|V|$ is the number of nodes
in the network and $k$ is the maximum cluster size,
obtaining $\hat{\alpha}$, $\hat{\beta}$ and $\hat{\sigma}^2$
with~\ref{eq:regressionfromcorrelation-jointgaussian}
has that same complexity, which can be much smaller than $\mathcal{O}(n^3)$.
If the phylogeny is a tree, this complexity depends linearly on $n$.

\subsubsection{Direct parameter estimation using the marginal distribution}

Going back to model~\eqref{si:eq:phylo_regression},
we do not assume that $X=(V,U)$ is jointly Gaussian
and make no assumption about $U$.
The distribution assumption is solely on the residual $\epsilon$.
  Model~\ref{si:eq:phylo_regression} then
  amounts to a trait $Y^{\beta} = \tipTraitVect - \transpose{\regressor}\fixedEffects$
  at the tips (for a given $\fixedEffects$) evolving
  under a homogeneous univariate BM model with variance $\varBM$.
  We denote by $X$ the corresponding trait at all network nodes,
  whose values $Y^{\beta}$ at tips depends on $\beta$.

  We can apply Fisher's formula~(6.1) to this model, taking the derivative with respect to $\beta$:
  \begin{equation}
    \left.
      \gradient{\fixedEffects'} \left[
        \log p(\vect{\tipTraitVect - \transpose{\regressor}}\fixedEffects')
      \right]
      \right|_{\fixedEffects'=\fixedEffects}
      = 
      \sum_{v\in V}
      \E_{\theta}\left[
        \cond{
          \left.
          \gradient{\fixedEffects'} \left[
              \log 
              \phi_v(X_v | X_{\pa(v)}, \theta')
          \right]
          \right|_{\fixedEffects'=\fixedEffects}
      }{
        \tipTraitVect - \transpose{\regressor}\fixedEffects
      }
      \right]
  \end{equation}
  In this sum, the only factors $\phi_v$ that depend on $\fixedEffects'$
  are the factors at the tips. In phylogenies, leaves are typically
  constrained to have a single parent, although extending our derivation
  to the case of hybrid leaves would be straightforward. For a leaf $v$
  with parent $\pa(v)=\{u\}$, we have:
  $Y^{\beta'}_{v} | X_{\pa(v)}
  \sim
  \Normal{X_u}{\varBM \branchLength{v}}$,
  and
  $Y^{\beta'}_{v} = \tipTraitVect_v - \transpose{U_v}\fixedEffects'$,
  so that
  $\phi_v(Y^{\beta'}_v | X_{\pa(v)}, \fixedEffects')
  = \phi_v(\tipTraitVect_v | \transpose{U_v}\fixedEffects' + X_u, \fixedEffects')
  $.
  Using the Gaussian derivative formula~\eqref{eq:derivative_gaussian},
  we get:
  \begin{align*}
    \left.
    \gradient{\fixedEffects'} \left[
      \log 
      \phi_v(Y^{\beta'}_v | X_{\pa(v)}, \theta')
    \right]
    \right|_{\fixedEffects'=\fixedEffects}
      &= 
      \left.
      \gradient{\fixedEffects'} \left[
        \log 
        \phi_v(\tipTraitVect_v | \transpose{U}_v\fixedEffects' + X_u, \theta')
    \right]
    \right|_{\fixedEffects'=\fixedEffects}
    \\
    &=
      \partialDerPrime{\fixedEffects}{(X_u + \transpose{U}_v\fixedEffects')}^{\top}
      [\varBM \branchLength{v}]^{-1}
      \; \left(
        \tipTraitVect_v -
        \left(
        X_u + \transpose{U}_v\fixedEffects
        \right)
      \right)
      \\
      &= 
      U_v \; [\varBM \branchLength{v}]^{-1}
      \left(
        \tipTraitVect_v - \transpose{U}_v \fixedEffects -
        X_u
      \right),
  \end{align*}
Using Lemma~\ref{lemma:kroneckerform},
    the expectation of $X_u$ conditional on the observed values at the tips,
    $\E_{\theta}(\cond{X_u}{Y^{\beta}})$,
    is linear in the data
    so that by~\eqref{eq:conditional_expectations}:
    \[
      \E_{\theta}(\cond{X_u}{Y^{\beta}})
      = \E_{\theta}(\cond{X_u}{Y^{\beta}})^{\top}
      = [\matr{J}_{u}^{-1}]_{\nodePos{u}\cbigdot} (\matr{M}_{u}^{\tipTraitVect})^{\top}
      - [\matr{J}_{u}^{-1}]_{\nodePos{u}\cbigdot} (\matr{M}_{u}^{U})^{\top} \beta
      = E_u^{\tipTraitVect} - (E_u^{U})^{\top} \beta,
    \]
  where $\matr{M}_{u}^{\tipTraitVect}$, $E_u^{\tipTraitVect}$ and $\matr{M}_{u}^{U}$, $E_u^{U}$ denote,
  respectively, the BP quantities of Lemma~\ref{lemma:kroneckerform}
  when applied to the traits $\tipTraitVect$ and $U$ separately.
  Note that $\matr{M}_{u}^{U}$ can also be obtained by running BP on each of the
  $\dimTrait$ rows of $\regressor$ independently,
  because $\matr{J}_{u}$ does not depend on the data
  and $\matr{M}_{u}$ depends linearly on the data.
  As $\tipTraitVect$ is a trait of dimension $1$,
  $\matr{M}_{u}^{\tipTraitVect}$ is a row vector of size $s$,
  the number of nodes in the chosen cluster containing $u$;
  and $[\matr{J}_{u}^{-1}]_{\nodePos{u}\cbigdot}(\matr{M}_{u}^{\tipTraitVect})^{\top} = E_u^{\tipTraitVect}$
  is a scalar.
  Also, $[\matr{J}_{u}^{-1}]_{\nodePos{u}\cbigdot}(\matr{M}_{u}^{U})^{\top} = (E_u^{U})^{\top}$
  is a row-matrix of size $1 \times p$,
  so that
  $[\matr{J}_{u}^{-1}]_{\nodePos{u}\cbigdot}(\matr{M}_{u}^{U})^{\top}\beta = (E_u^{U})^{\top}\beta$
  is also a scalar.
  We can hence write, for leaf $v$ with parent $u$:
    \[ \E_{\theta}\left[
      \left.
    \gradient{\fixedEffects'} \left[
      \log 
      \phi_v(X_v | X_u, \theta)
  \right]
  \right|_{\fixedEffects'=\fixedEffects}
  \right]
      = 
      U_v \; [\varBM \branchLength{v}]^{-1}
      \left(
      \tipTraitVect_v - \transpose{U}_v\fixedEffects -
(E_u^{\tipTraitVect} - (E_u^{U})^{\top}\beta)
      \right).
    \] Taking the sum and cancelling the gradient in $\fixedEffects$, we get:
  \begin{equation}\label{eq:regressionfromcorrelation-direct}
  \widehat{\fixedEffects}
  =
  \inverse{
      \left(
      \sum_{\mathrm{leaf}\;v} 
      \frac{1}{\branchLength{v}} U_v
      (U_v - E_{\pa(v)}^{U})^{\top}
\right)
      }
  \sum_{\mathrm{leaf}\;v} 
  \frac{1}{\branchLength{v}} U_v
  (\tipTraitVect_v 
- E_{\pa(v)}^{\tipTraitVect}).
  \end{equation}
  Note that the first term of the product involves the inversion of
  a $p \times p$ matrix, and that this formula outputs
  a vector of size $p$.
To get all the quantities needed in this formula, we just need one
  BP calibration of the cluster graph with multivariate traits
  $(\tipTraitVect, U)$
  to get the conditional means and variances,
  which can be done efficiently using only univariate traits 
  thanks to Corollary~\ref{cor:get-wi-foranytrait}.
  
  Finally, to get an estimator of the residual variance $\sigma^2$,
  we can run another BP calibration, taking
  $\hat{\epsilon} = \tipTraitVect - \transpose{\regressor} \widehat{\fixedEffects}$
  as the tip trait values, and then use the formulas from the previous section.
  Using an infinite root variance for this last BP traversal gives us the 
  REML estimate of the variance.

  If the phylogeny is a tree,
  this algorithm involves $p+2$ univariate BP calibrations,
  each requiring two traversals of the tree, sums of $\mathcal{O}(n)$
  terms in~\ref{eq:regressionfromcorrelation-direct} and other formulas,
  and a $p \times p$ matrix
  inversion, so calculating $\widehat{\fixedEffects}$
  and $\hat{\sigma}^2$
  is linear in the number of tips.
  Comparatively, the algorithm used in the \texttt{R} package \texttt{phylolm}
  \citep{2014HoAne-algo} only needs one multivariate traversal of the tree.
  Our algorithm is more general however, as it applies to any phylogenetic network
  and to any associated cluster graph.

\section{Regularizing initial beliefs}
\label{sec:regularize_beliefs}

At initialization, each factor is assigned to a cluster whose scope includes
all nodes from that factor.
Then the initial belief $\beta_i$ of a cluster $\mathcal{C}_i$
is the product of all factors assigned to it
by (4.2). Sepsets are not assigned any factors
so their beliefs $\mu_{i,j}$ are initialized to 1.
This assignment guarantees that the
the joint density $p_\theta$ of the graphical model equals
the following quantity at initialization:
\begin{equation}\label{eq:graphinvariant}
  \frac{\prod_{\mathcal{C}_i\in\mathcal{V}^*}\beta_i}{
    \prod_{\{\mathcal{C}_i,\mathcal{C}_j\}\in\mathcal{E}^*}\mu_{i,j}} \;.
\end{equation}
\eqref{eq:graphinvariant} is called the \emph{graph invariant} because BP
modifies cluster and sepset beliefs without changing the value of this quantity,
and hence keeps it equal to $p_\theta$
\citep{koller2009probabilistic}.
Initialization with (4.2) can lead to degenerate messages, as highlighted in section~7(a).
However, other belief assignments are permitted,
provided that \eqref{eq:graphinvariant}
equals $p_\theta$ at initialization.
Modifying beliefs between BP iterations is also permitted,
provided that \eqref{eq:graphinvariant} is unchanged.

Regularization modifies the belief precisions to make them non-degenerate.
To maintain the graph invariant, every modification to a cluster belief
is balanced by a modification to an adjacent sepset belief.
We describe two basic regularization algorithms below, but many others
could also be considered.

\begin{algorithm}
  \caption{Regularization along variable subtrees} \label{alg:R1}
  \begin{algorithmic}[1]
    \ForAll{variable $x$}
    \State
    $\mathcal{T}_x \gets$ subtree induced by all clusters containing $x$
    \State fix $\epsilon>0$
    \For{all sepsets and \textbf{for} all but one cluster in $\mathcal{T}_x$}{}
    \State
    add $\epsilon$ to the diagonal entry of its belief's precision matrix
    corresponding to $x$
    \EndFor
    \EndFor
  \end{algorithmic}
\end{algorithm}
\begin{algorithm}
  \caption{Regularization on a schedule} \label{alg:R2}
  \begin{algorithmic}[1]
    \State Choose an ordering of clusters: $\mathcal{C}_1,\dots,\mathcal{C}_{|\mathcal{V}|}$
    \State For each cluster $\mathcal{C}_i$ and each neighbor $C_{j}$
    of $\mathcal{C}_{i}$, set $i\rightarrow j$ as unvisited
\ForAll{$i=1,\dots,|\mathcal{V}|$}
    \ForAll{neighbor $\mathcal{C}_j$ of $\mathcal{C}_i$}
    \If{$j\rightarrow i$ is unvisited}
    \State fix $\epsilon>0$
\State \label{alg:step1-cluster}
    add $\epsilon \bm{I}$ to the precision matrix of the sepset $\mathcal{S}_{i,j}$
    \State \label{alg:step1-sepset}
    add $\epsilon$ to the diagonal entry of $\mathcal{C}_i$'s precision
    matrix corresponding to each variable in $\mathcal{S}_{i,j}$
    \State mark $j\rightarrow i$ as visited
    \EndIf
    \EndFor
    \ForAll{neighbor $\mathcal{C}_k$ of $\mathcal{C}_i$}
    \If{$i\rightarrow k$ is unvisited}
    \State \label{alg:step2-BP}
    propagate belief from $\mathcal{C}_i$ to $\mathcal{C}_k$ by Algorithm 2
\State mark $i\rightarrow k$ as visited
    \EndIf
    \EndFor
    \EndFor
  \end{algorithmic}
\end{algorithm}

In Algorithm~\ref{alg:R1}, each modified belief is multiplied
by a regularization factor $\exp\left(-\frac{1}{2}\epsilon x^2\right)$.
The graph invariant is satisfied because
$\mathcal{T}_x$ must be a tree (by the running intersection property),
so the same number of clusters and sepsets are modified
and the regularization factors cancel out in \eqref{eq:graphinvariant}.
In Algorithm~\ref{alg:R2}, the same argument applies to
modifications on lines~\ref{alg:step1-cluster} and~\ref{alg:step1-sepset},
which cancel out in~\eqref{eq:graphinvariant}
so the graph invariant is maintained. It is also maintained on 
line~\ref{alg:step2-BP}, which uses BP.

The choice of the regularization constant $\epsilon$ is not
specified above, but should be adapted to the magnitude of
entries in the affected precision matrices.

Both algorithms performed comparably well on the join-graph structuring
cluster graphs used in Fig.~6 and Fig.~\ref{SMfig:loopyBPapproxfactorgraph}.
On the factor graph for the complex network, however,
Algorithm~\ref{alg:R2} was found to work better than \ref{alg:R1}.
Namely, beliefs remained persistently degenerate
after initial regularization with \ref{alg:R1}, such that the estimated
conditional means and factored energy could not be computed.

Both algorithms are illustrated in Figure~\ref{fig:algoR1R2}.

\begin{figure}[H]
  \centering
  \begin{tikzpicture}
    \matrix ()[matrix of math nodes,
            column sep={5.5cm,between origins},
            row sep={2cm,between origins},
            nodes={font=\small,inner sep=.5mm}] at (8,0)
      { & |(c6)| \begin{matrix}x_6 \\ x_8 \\ x_{12}\end{matrix}\!\!
      \begin{bmatrix}1 &\hspace{-1em} -1/2 &\hspace{-1em} -1/2 \\ -1/2 &\hspace{-1em} 1/4{\color{red}+3\tilde{\epsilon}-\epsilon} &\hspace{-1em} 1/4 \\
      -1/2 &\hspace{-1em} 1/4 &\hspace{-1em} 1/4\end{bmatrix} \\
      |(c10)| \begin{matrix}x_{10} \\ x_{11}\end{matrix}\!\!
      \begin{bmatrix}1{\color{red}+3\tilde{\epsilon}-\epsilon} &\hspace{-1em} -1 \\ -1 &\hspace{-1em} 1\end{bmatrix} &
      |(c8)| \begin{matrix}x_8 \\ x_{10}\end{matrix}\!\!
      \begin{bmatrix}1\textcolor{blue}{+2\epsilon} &\hspace{-1em} -1 \\ -1 &\hspace{-1em} 1{\color{blue}+\epsilon}\end{bmatrix} \\
      & |(c5)| \begin{matrix}x_5 \\ x_8\end{matrix}\!\!
      \begin{bmatrix}1 &\hspace{-1em} -1 \\ -1 &\hspace{-1em} 1{\color{red}+3\tilde{\epsilon}-\epsilon}\end{bmatrix} \\};
    \matrix ()[matrix of math nodes,
            column sep={4cm,between origins},
            row sep={2cm,between origins},
            nodes={font=\small,inner sep=.5mm}] at (0,0)
      { |(c6_2)| \begin{matrix}x_6 \\ x_8 \\ x_{12}\end{matrix}\!\!
      \begin{bmatrix}1 &\hspace{-1em} -1/2 &\hspace{-1em} -1/2 \\ -1/2 &\hspace{-1em} 1/4{\color{blue}+\epsilon} &\hspace{-1em} 1/4 \\
      -1/2 &\hspace{-1em} 1/4 &\hspace{-1em} 1/4\end{bmatrix} \\
      |(c8_2)| \begin{matrix}x_8 \\ x_{10}\end{matrix}\!\!
      \begin{bmatrix}1{\color{blue}+\epsilon} &\hspace{-1em} -1 \\ -1 &\hspace{-1em} 1\end{bmatrix} \\
      |(c5_2)| \begin{matrix}x_5 \\ x_8\end{matrix}\!\!
      \begin{bmatrix}1 &\hspace{-1em} -1 \\ -1 &\hspace{-1em} 1\end{bmatrix} \\};
    \begin{scope}[every node/.style={font=\small,midway,inner sep=.5mm}]
      \draw (c10) -- node[above,yshift=.1em](){$x_{10}\!\!\begin{bmatrix}
        0\color{blue}+\epsilon\color{red}+3\tilde{\epsilon}-\epsilon\end{bmatrix}$}(c8);
      \draw (c8) -- node[left](){$x_8\!\!\begin{bmatrix}{0\color{blue}+\epsilon}\color{red}+3\tilde{\epsilon}-\epsilon\end{bmatrix}$}(c5);
      \draw (c6) -- node[left](){$x_8\!\!\begin{bmatrix}{0\color{blue}+\epsilon}\color{red}+3\tilde{\epsilon}-\epsilon\end{bmatrix}$}(c8);
      \draw (c8_2) -- node[left](){$x_8\!\!\begin{bmatrix}{0\color{blue}+\epsilon}\end{bmatrix}$}(c5_2);
      \draw (c6_2) -- node[left](){$x_8\!\!\begin{bmatrix}{0\color{blue}+\epsilon}\end{bmatrix}$}(c8_2);
    \end{scope}
  \end{tikzpicture}
  \caption{Applying algorithms~\ref{alg:R1} and \ref{alg:R2} on the cluster graph from Fig.~3(d), for a univariate BM model with mean~0 and variance rate~1,
  edge lengths of 1 in the original network and inheritance
  probabilities of 0.5.
  Cluster/sepset precision matrices have rows labelled by variables to show
  the nodes in scope. Precision matrices show entries before regularization (black)
  and after one pass through the outermost loop of the algorithm
  (coloured adjustments).
  Left: regularization \ref{alg:R1} starting with variable $x_8$.
  Right: regularization \ref{alg:R2} starting with cluster $\{8, 10\}$,
  assuming that it is the first cluster scheduled to be processed.
  For \ref{alg:R2}, we differentiate the effects of lines 3-8 (blue) and lines 9-12 (red).
  For example, the resulting precision matrix for sepset $\{x_{10}\}$ is
  $[3\tilde{\epsilon}]$ after summing these effects, where
  $\tilde{\epsilon}=\epsilon+o(\epsilon)$.}\label{fig:algoR1R2}
\end{figure}

\section{Handling deterministic factors}

This section illustrates two approaches to running BP in the
presence of a deterministic Gaussian factor that arises because the state at
a hybrid node is a linear combination of its parents' states.

Let $X$ be a univariate continuous trait evolving on the 3-taxon network in
Fig.~2(a) (reproduced in Fig.~\ref{fig:net1zippeddown}(a)) under a BM model with ancestral state 0 at the root and
variance rate $\sigma^2$.
For simplicity, we assume that tree
edges have length 1, hybrid edges have length 0, and inheritance probabilities
are 1/2. The conditional distribution for each node given its parents is
non-deterministic and can be expressed in a canonical form, except for
$X_5$ at the hybrid node. Because of 0-length hybrid edges, we have the
deterministic relationship: $X_5 = (X_4+X_6)/2$.
After absorbing evidence $\mathrm{x}_1,\mathrm{x}_2,\mathrm{x}_3$ at
the tips and fixing $x_\rho=0$, the factors are:
\begin{align*}
\phi_1 &= \text{C}(x_4;\sigma^{-2},\sigma^{-2}\mathrm{x}_1,g_1) &
\phi_2 &= \text{C}(x_5;\sigma^{-2},\sigma^{-2}\mathrm{x}_2,g_2) &
\phi_3 &= \text{C}(x_6;\sigma^{-2},\sigma^{-2}\mathrm{x}_3,g_3) \\
\phi_4 &= \text{C}(x_4;\sigma^{-2},0,g_4) &
\phi_6 &= \text{C}(x_6;\sigma^{-2},0,g_6) &
\phi_5 &=\delta(x_5-(x_4+x_6)/2)
\end{align*}
where $g_i$ normalizes $\phi_i$ to a valid probability density
and $\delta(\cdot)$ denotes a Dirac distribution at 0.
$\mathcal{U}$ in Fig.~2(b) (reproduced in Fig.~\ref{fig:net1zippeddown}(b)) remains a valid clique tree for this model.
We index the cliques in
$\mathcal{U}$ as $\mathcal{C}_1=\{x_1,x_4\}$,
$\mathcal{C}_2=\{x_2,x_5\}$, $\mathcal{C}_3=\{x_3,x_6\}$,
$\mathcal{C}_4=\{x_5,x_4,x_6\}$, $\mathcal{C}_5=\{x_4,x_6,x_\rho\}$. We set
$\mathcal{C}_5$ as the root clique and assume the following factor assignment for
$\mathcal{U}$:
$\phi_1\mapsto\mathcal{C}_1$, $\phi_2\mapsto\mathcal{C}_2$,
$\phi_3\mapsto\mathcal{C}_3$, $\phi_5\mapsto\mathcal{C}_4$,
$\{\phi_4, \phi_6\} \mapsto\mathcal{C}_5$.

\subsection{Substitution}
\label{sec:degenerate-substitution}
The substitution approach removes the Dirac factor $\phi_5$ by
removing $x_5$ from the model,
substituting it by $(x_4+x_6)/2$ where needed.
Since $\phi_2$ has scope $\{x_5\}$,
it is reparametrized as $\phi_2'$ on scope $\{x_4,x_6\}$:
\begin{equation*}
  \phi_2'=\text{C}\left(\begin{bmatrix}x_4 \\ x_6\end{bmatrix};
  \frac{1}{4\sigma^{2}}\begin{bmatrix}1 & 1 \\ 1 & 1\end{bmatrix},
  \frac{\mathrm{x}_2}{2\sigma^{2}}\begin{bmatrix}1 \\ 1\end{bmatrix},g_2\right).
\end{equation*}
For the simple univariate BM, it is well known in the
admixture graph literature \citep{2012PickrellPritchard} that
this substitution corresponds to using a modified network $N'$
in which hybrid edges do not all have length~0
(Fig.~\ref{fig:net1zippeddown}(c)).
$N'$ is built from the original network $N$ by removing
the hybrid node~5 and connecting its parents (nodes 4 and 6) to its child (node 2)
with edges of lengths $\ell_4=\ell_6=2$ for example
(to ensure that $\gamma_4^2\ell_4 + \gamma_2^2\ell_2$
equals the length of the original child edge to node 2).
A clique tree  $\mathcal{U}'$ for $N'$ can be obtained from $\mathcal{U}$
by replacing $\mathcal{C}_2$ and $\mathcal{C}_4$ with
$\mathcal{C}_4'=\{x_2,x_4,x_6\}$ (Fig.~\ref{fig:net1zippeddown}d).
Factor $\phi_2'$ is
assigned to $\mathcal{C}_4'$ while the other factor assignments stay the same.

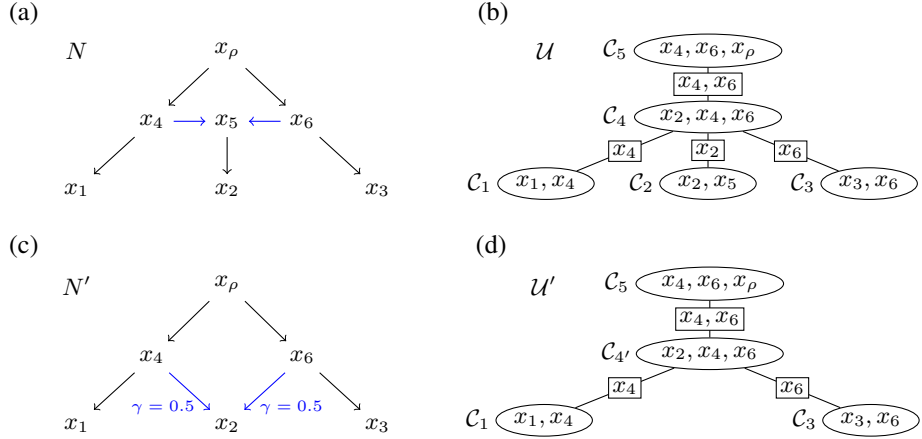
\begin{figure}[h]
  \begin{center}
      \begin{tikzpicture}[-]
      \matrix [matrix,anchor=north,
                  column sep={1cm,between origins},
                  row sep={0.5cm,between borders},
                  nodes={font=\small}] at (0,3.1)
      {\node[label={[xshift=-2em]:\normalsize (a)}](){$N$}; & &
      \node(x0_0){$x_\rho$}; & & \\
      & \node(x4_0){$x_4$}; & \node(x5_0){$x_5$}; & \node(x6_0){$x_6$}; & \\
      \node(x1_0){$x_1$}; & & \node(x2_0){$x_2$}; & & \node(x3_0){$x_3$}; \\
      };
      \matrix [matrix,anchor=north,
                  column sep={0cm,between borders},
                  row sep={.4cm,between borders},
                  nodes={font=\small}] at (6.5,3.1)
      {\node[label={[xshift=-2em]:\normalsize (b)}](){$\mathcal{U}$}; &
      \node[ellipse,draw,inner sep=.5mm,
        label={[xshift=-3.5em,yshift=-1.3em]:$\mathcal{C}_5$}]
        (c0_0){$x_4,x_6,x_\rho$}; & \\
        & \node[ellipse,draw,inner sep=.5mm,
          label={[xshift=-3.5em,yshift=-1.3em]:$\mathcal{C}_{4}$}]
          (c5_0){$x_2,x_4,x_6$}; & \\
      \node[ellipse,draw,inner sep=.5mm,
        label={[xshift=-2.5em,yshift=-1.3em]:$\mathcal{C}_1$}]
        (c1_0){$x_1,x_4$}; &
      \node[ellipse,draw,inner sep=.5mm,
        label={[xshift=-2.5em,yshift=-1.3em]:$\mathcal{C}_2$}]
        (c2_0){$x_2,x_5$}; &
      \node[ellipse,draw,inner sep=.5mm,
        label={[xshift=-2.5em,yshift=-1.3em]:$\mathcal{C}_3$}]
        (c3_0){$x_3,x_6$}; \\};
      \matrix [matrix,anchor=north,
                  column sep={1cm,between origins},
                  row sep={0.5cm,between borders},
                  nodes={font=\small}] at (0,0)
      {\node[label={[xshift=-2em]:\normalsize (c)}](){$N'$}; & &
      \node(x0){$x_\rho$}; & & \\
      & \node(x4){$x_4$}; & & \node(x6){$x_6$}; & \\
      \node(x1){$x_1$}; & & \node(x2){$x_2$}; & & \node(x3){$x_3$}; \\
      };
      \matrix [matrix,anchor=north,
                  column sep={0cm,between borders},
                  row sep={.4cm,between borders},
                  nodes={font=\small}] at (6.5,0)
      {\node[label={[xshift=-2em]:\normalsize (d)}](){$\mathcal{U}'$}; &
      \node[ellipse,draw,inner sep=.5mm,
        label={[xshift=-3.5em,yshift=-1.3em]:$\mathcal{C}_5$}]
        (c0){$x_4,x_6,x_\rho$}; & \\
      & \node[ellipse,draw,inner sep=.5mm,
          label={[xshift=-3.5em,yshift=-1.3em]:$\mathcal{C}_{4'}$}]
          (c5){$x_2,x_4,x_6$}; & \\
      \node[ellipse,draw,inner sep=.5mm,
        label={[xshift=-2.5em,yshift=-1.3em]:$\mathcal{C}_1$}]
        (c1){$x_1,x_4$}; & &
      \node[ellipse,draw,inner sep=.5mm,
        label={[xshift=-2.5em,yshift=-1.3em]:$\mathcal{C}_3$}]
        (c3){$x_3,x_6$}; \\};
      \begin{scope}[every node/.style={font=\small\itshape}]
          \draw[->] (x0_0) -- (x4_0); \draw[->] (x0_0) -- (x6_0);
          \draw[->] (x4_0) -- (x1_0);
          \draw[blue,->] (x4_0) -- node[below,xshift=-.2em,blue]{}(x5_0);
          \draw[black,->] (x5_0) -- (x2_0);
          \draw[blue,->] (x6_0) -- node[below,xshift=.2em]{}(x5_0);
          \draw[->] (x6_0) -- node[right]{}(x3_0);
          \draw[-] (c1_0) -- (c5_0) node[midway,rectangle,draw,fill=white,
              inner sep=.5mm]{$x_4$};
          \draw[-] (c3_0) -- (c5_0) node[midway,rectangle,draw,fill=white,
              inner sep=.5mm]{$x_6$};
          \draw[-] (c2_0) -- (c5_0) node[midway,rectangle,draw,fill=white,
              inner sep=.5mm]{$x_2$};
          \draw[-] (c5_0) -- (c0_0) node[midway,rectangle,draw,fill=white,
              inner sep=.5mm]{$x_4,x_6$};
          \draw[->] (x0) -- (x4); \draw[->] (x0) -- (x6);
          \draw[->] (x4) -- (x1);
          \draw[blue,->] (x4) -- node[below, xshift=-1em,blue]{\tiny $\gamma=0.5$}(x2); \draw[blue,->] (x6) -- node[below,
              xshift=1em]{\tiny $\gamma=0.5$}(x2); \draw[->] (x6) -- node[right]{}(x3);
\draw[-] (c1) -- (c5) node[midway,rectangle,draw,fill=white,
              inner sep=.5mm]{$x_4$};
          \draw[-] (c3) -- (c5) node[midway,rectangle,draw,fill=white,
              inner sep=.5mm]{$x_6$};
          \draw[-] (c5) -- (c0) node[midway,rectangle,draw,fill=white,
              inner sep=.5mm]{$x_4,x_6$};
      \end{scope} 
      \end{tikzpicture}
  \end{center}
  \caption{(a) Network $N$ from Fig.~2(a). (b) Clique tree $\mathcal{U}$ from
  Fig.~2(b). (c) Network $N'$ obtained by removing the hybrid node 5 from
  $N$ in (a).
  The BM model on $N$ leads to the same probability model for the nodes in $N'$
  as the BM model on $N'$, given a valid assignment of hybrid edge lengths in $N'$
  (see text). (d) Clique tree $\mathcal{U}'$ for $N'$ after moralization.}
  \label{fig:net1zippeddown}
\end{figure}

Standard BP can be used for the BM model on $N'$ because all factors
are non-degenerate.
After one postorder traversal of $\mathcal{U}'$, the
message $\tilde{\mu}_{4'\rightarrow 5}$, final belief $\beta_5$ and
log-likelihood $\loglik(\sigma^2)=\log p_{\sigma^2}(\mathrm{x}_1,\mathrm{x}_2,
\mathrm{x}_3)$ are:
\begin{equation*}
  \begin{split}
    \tilde{\mu}_{4'\rightarrow 5} &= \psi_{4'}\tilde{\mu}_{1\rightarrow 4'}
    \tilde{\mu}_{3\rightarrow 4'} = \phi_2'\phi_1\phi_3 = \text{C}\left(
      \begin{bmatrix}x_4 \\ x_6\end{bmatrix};\frac{1}{4\sigma^{2}}
      \begin{bmatrix}5 & 1 \\ 1 & 5\end{bmatrix},\frac{1}{2\sigma^{2}}
      \begin{bmatrix}2\mathrm{x}_1+\mathrm{x}_2 \\ 2\mathrm{x}_3+\mathrm{x}_2
      \end{bmatrix},\sum_{i=1}^3 g_i\right) \\
      \beta_5 &= \psi_5\tilde{\mu}_{4'\rightarrow 5} = \phi_4\phi_6
      \tilde{\mu}_{4'\rightarrow 5} = \text{C}\left(\begin{bmatrix}x_4 \\ x_6
      \end{bmatrix};
      \bm{K} = \frac{1}{4\sigma^{2}}\begin{bmatrix}9 & 1 \\ 1 & 9
      \end{bmatrix}, \,
      h = \frac{1}{2\sigma^{2}}\begin{bmatrix}2\mathrm{x}_1+
      \mathrm{x}_2 \\ 2\mathrm{x}_3+\mathrm{x}_2\end{bmatrix}, \,
      g = \sum_{i=1,i\neq 5}^6 g_i\right) \\
      \loglik(\sigma^2) &= \int\beta_5 dx_4dx_6 = \sum_{i=1,i\neq 5}^6 g_i
      +\left(\log\left|2\pi\left(\frac{1}{4\sigma^{2}}
      \begin{bmatrix}9 & 1 \\ 1 & 9\end{bmatrix}\right)^{-1}\right|
      +\left\|\frac{1}{2\sigma^{2}}\begin{bmatrix}2\mathrm{x}_1+\mathrm{x}_2 \\
      2\mathrm{x}_3+\mathrm{x}_2\end{bmatrix}\right\|_{\left(\frac{1}{4\sigma^{2}}
      \begin{bmatrix}9 & 1 \\ 1 & 9\end{bmatrix}\right)^{-1}}\right)/2
  \end{split}
\end{equation*}
We can still recover the conditional distribution of $X_5$ from
$\beta_5$, because $\beta_5$ has scope $\{x_4,x_6\}$.
Let $(\bm{K},h,g)$ be the parameters of the canonical form of $\beta_5$,
given above.
After the postorder traversal of $\mathcal{U}'$, $\beta_5$ contains information
from all the tips such that the distribution of $(X_4,X_6)$ conditional on
the data $(\mathrm{x}_1,\mathrm{x}_2,\mathrm{x}_3)$ is
$\mathcal{N}\left(\bm{K}^{-1}h,\bm{K}^{-1}\right)$.
Since
$\displaystyle X_5 = \gamma^{\top}\begin{bmatrix}X_4 \\ X_6\end{bmatrix}$
with $\gamma^{\top}=[1/2,1/2]$,
we get
\[
  X_5\mid(\mathrm{x}_1,\mathrm{x}_2,\mathrm{x}_3) \sim\mathcal{N}(\gamma^{\top}
  \bm{K}^{-1}h,\gamma^{\top}\bm{K}^{-1}\gamma) =
  \mathcal{N}\left(\sum_{i=1}^3\mathrm{x}_i/5,\sigma^2/5\right)\,.
\]

\subsection{Generalized canonical form}

A more general approach generalizes canonical form operations to include Dirac
distributions without modifying the original set of factors and clique tree,
as demonstrated in \citet{schoeman2022degenerate}.
Crucially, they derived message passing operations (evidence absorption,
marginalization, factor product, etc.) for a generalized canonical form:
\begin{equation*}
  \begin{split}
    \mathcal{D}(x;\bm{Q},\bm{R},\bm{\Lambda},h,c,g) &\coloneqq
    \text{C}(\bm{Q}^{\top}x;\bm{\Lambda},h,g)\cdot\delta(\bm{R}^{\top}x-c)
  \end{split}
\end{equation*}
where $x$ is an $n$-dimensional vector,
$\bm{\Lambda}\succeq 0$ is a $(n-k)\times(n-k)$ diagonal matrix, and
$\bm{Q}$ and $\bm{R}$ are matrices
of dimension $n\times (n-k)$ and $n\times k$ respectively,
that are orthonormal and orthogonal to each other, that is:
$\bm{Q}^{\top}\bm{Q} = I_{n-k}$,
$\bm{R}^{\top}\bm{R} = I_{k}$, and
$\bm{Q}^{\top} \bm{R} = \bm{0}$.
If $\bm{Q}$ is square (thus invertible) then $\bm{R}$ is empty
and the Dirac $\delta(\cdot)$ term is dropped or defined as 1.
The same applies to the $\text{C}(\cdot)$ term if
$\bm{R}$ is square (and $\bm{Q}$ is empty).
Non-deterministic linear Gaussian factors are represented in
generalized canonical form with $\bm{Q}$ square
from the eigendecomposition of $\bm{K} = \bm{Q}\bm{\Lambda}\bm{Q}^{\top}$.
Thus, we can run BP on
$\mathcal{U}$, converting beliefs or messages to generalized canonical form as
needed.

Running BP according to a postorder traversal of $\mathcal{U}$, the message
$\tilde{\mu}_{4\rightarrow 5}$
involves a degenerate component:
\begin{equation*}
  \begin{split}
    \tilde{\mu}_{4\rightarrow 5} &= \int\psi_4\prod_{i=1}^3
    \tilde{\mu}_{i\rightarrow 4}dx_5 = \int\phi_5\prod_{i=1}^3\phi_i dx_5 \;.
  \end{split}
\end{equation*}

To compute $\tilde{\mu}_{4\rightarrow 5}$, we first convert $\phi_5$ and
$\prod_{i=1}^3\phi_i$ to generalized canonical forms:
\begin{equation*}
  \begin{split}
    \phi_5 &= \mathcal{D}\left(\begin{bmatrix}x_4 \\ x_6 \\ x_5\end{bmatrix};
    \begin{bmatrix}1/w_1 & 1/w_2 \\ 1/w_1 & -1/w_2 \\ 1/w_1 & 0\end{bmatrix},
    \begin{bmatrix}1/2w_0 \\ 1/2w_0 \\ -1/w_0\end{bmatrix},
    \bm{\Lambda}_5=\begin{bmatrix}0&0\\0&0\end{bmatrix},
    \begin{bmatrix}0\\0\end{bmatrix},0,0\right) \\
    \prod_{i=1}^3\phi_i &=
\mathcal{D}\left(\begin{bmatrix}x_4 \\ x_6 \\ x_5\end{bmatrix};\bm{I}_3,-,
    \sigma^{-2}\bm{I}_3, \sigma^{-2}\begin{bmatrix}\mathrm{x}_1 \\ \mathrm{x}_3 \\
    \mathrm{x}_2\end{bmatrix},-,\sum_{i=1}^3g_i\right)
  \end{split}
\end{equation*}
where $(w_0,w_1,w_2)=(\sqrt{6/4},\sqrt{3},\sqrt{2})$ are normalization constants
for the respective columns, and the dashes indicate that the $\delta(\cdot)$
part is dropped. By \citet[Algorithm 3]{schoeman2022degenerate}, their product
evaluates to:
\begin{equation*}
  \phi_5\prod_{i=1}^3\phi_i = \mathcal{D}\left(\begin{bmatrix}x_4 \\ x_6 \\ x_5\end{bmatrix};
  \bm{Q}=\begin{bmatrix}1/w_1 & 1/w_2 \\ 1/w_1 & -1/w_2 \\ 1/w_1 & 0\end{bmatrix},
  \bm{R}=\begin{bmatrix}1/2w_0 \\ 1/2w_0 \\ -1/w_0\end{bmatrix},\sigma^{-2}\bm{I}_2,
  h=\sigma^{-2}\begin{bmatrix}(\mathrm{x}_1+\mathrm{x}_2+\mathrm{x}_3)/w_1 \\
  (\mathrm{x}_1-\mathrm{x}_3)/w_2\end{bmatrix},0,\sum_{i=1}^3g_i\right) \\
\end{equation*}
We partition
$\displaystyle \bm{Q}=\begin{bmatrix}\bm{Q}_{4,6}\\\bm{Q}_5\end{bmatrix}$
and
$\displaystyle \bm{R}=\begin{bmatrix}\bm{R}_{4,6}\\\bm{R}_5\end{bmatrix}$
by separating the first
two rows from the last row.\\
Then by \citet[Algorithm 2]{schoeman2022degenerate},
integrating out $x_5$ yields:
\begin{equation*}
  \int\mathcal{D}\left(\begin{bmatrix}x_4 \\ x_6 \\ x_5\end{bmatrix};
    \begin{bmatrix}\bm{Q}_{4,6} \\ \bm{Q}_5\end{bmatrix},
    \begin{bmatrix}\bm{R}_{4,6} \\ \bm{R}_5\end{bmatrix},\sigma^{-2}\bm{I}_2,
    h,0,\sum_{i=1}^3g_i\right)dx_5 =
    \mathcal{D}\left(\begin{bmatrix}x_4 \\ x_6\end{bmatrix};
    \bm{Q}_{4\rightarrow 5},-, \bm{\Lambda}_{4\rightarrow 5},
    h_{4\rightarrow 5},-, \sum_{i=1}^3g_i\right)
\end{equation*}
where, following notations from \citet{schoeman2022degenerate}
for intermediate quantities in their Algorithm 3:
\begin{equation*}
  \begin{split}
    \bm{U} &= \bm{I}_2, \bm{W}=[1] \\
\bm{F} &= (\bm{W}(\bm{R}_5\bm{W})^{+}\bm{Q}_5)^{\top}=
    \begin{bmatrix}-w_0/w_1 \\ 0\end{bmatrix} \\
    \bm{G} &= (\bm{Q}_{4,6}^{\top}-\bm{F}\bm{R}_{4,6}^{\top})\bm{U}=
    \begin{bmatrix}3/(2w_1) & 3/(2w_1) \\ 1/w_2 & -1/w_2\end{bmatrix} \\
    \bm{Z}\bm{\Lambda}_{4\rightarrow 5}\bm{Z}^{\top} &=
    \text{SVD}(\bm{G}^{\top}\,(\sigma^{-2}\bm{I}_2)\,\bm{G})=
    \text{SVD}\left(\frac{1}{4\sigma^{2}}\begin{bmatrix}5 & 1 \\ 1 & 5\end{bmatrix}\right) \\
    \text{therefore } \bm{Z} &=\begin{bmatrix}1/w_2 & 1/w_2 \\ 1/w_2 & -1/w_2\end{bmatrix}
    \text{ and }
    \bm{\Lambda}_{4\rightarrow 5}=\frac{1}{2\sigma^{2}}\begin{bmatrix}3 & 0 \\ 0 & 2\end{bmatrix} \\
    \bm{Q}_{4\rightarrow 5} &= \bm{U}\bm{Z} = \bm{Z} \\
    h_{4\rightarrow 5} &= \bm{Z}^{\top}\bm{G}^{\top}h =
    \sigma^{-2}\begin{bmatrix}w_2(\mathrm{x}_1+\mathrm{x}_2+\mathrm{x}_3) \\
    w_2(\mathrm{x}_1-\mathrm{x}_3)\end{bmatrix}
  \end{split}
\end{equation*}
This generalized canonical form can be rewritten as a standard canonical form:
\begin{eqnarray*}
  \tilde{\mu}_{4\rightarrow 5} &=&
  \text{C}\left(\bm{Q}_{4\rightarrow 5}^{\top}
  \begin{bmatrix}x_4 \\ x_6\end{bmatrix};\bm{\Lambda}_{4\rightarrow 5},
  h_{4\rightarrow 5}, \sum_{i=1}^3g_i\right) = \ \text{C}\left(\begin{bmatrix}x_4 \\ x_6\end{bmatrix};
  \bm{Q}_{4\rightarrow 5}\bm{\Lambda}_{4\rightarrow 5}
  \bm{Q}_{4\rightarrow 5}^{\top},\bm{Q}_{4\rightarrow 5}h_{4\rightarrow 5},
  \sum_{i=1}^3g_i\right) \\
  &=& \ \text{C}\left(\begin{bmatrix}x_4 \\ x_6\end{bmatrix};
  \frac{1}{4\sigma^{2}}\begin{bmatrix}5 & 1 \\ 1 & 5\end{bmatrix},
  \frac{1}{2\sigma^{2}}\begin{bmatrix}2\mathrm{x}_1+\mathrm{x}_2 \\
  2\mathrm{x}_3+\mathrm{x}_2\end{bmatrix},\sum_{i=1}^3g_i\right)
\end{eqnarray*}
which agrees with $\tilde{\mu}_{4'\rightarrow 5}$ from the substitution
approach in~\ref{sec:degenerate-substitution}.
Since the remaining operations to compute
$\beta_5= \psi_5\tilde{\mu}_{4\rightarrow 5}$ and
$\loglik(\sigma^2)=\int\beta_5 dx_4dx_6$
involve non-deterministic canonical forms,
it is clear that they both evaluate to the same quantity
as when using the substitution approach above.

\clearpage{}

\end{document}